\documentclass[conference]{IEEEtran}
\title{Custom Hypergraph Categories via Generalized Relations}
\author{\IEEEauthorblockN{Dan Marsden and Fabrizio Genovese}\IEEEauthorblockA{Department of Computer Science, University of Oxford}}

\usepackage{amsmath, amsthm, amsfonts}
\usepackage{tikz}
\usetikzlibrary{positioning,arrows,matrix,calc,intersections,fit, patterns, shapes}

\usepackage{stringdiagrams}
\usepackage{xspace}

\usepackage{thmtools,thm-restate}

\usepackage{color}
\newcommand{\margincomment}[2]{}

\theoremstyle{plain}
\newtheorem{theorem}{Theorem}
\newtheorem{lemma}[theorem]{Lemma}

\newtheorem{proposition}[theorem]{Proposition}
\theoremstyle{definition}
\newtheorem{definition}[theorem]{Definition}
\newtheorem{example}[theorem]{Example}

\theoremstyle{remark}
\newtheorem{remark}[theorem]{Remark}
\numberwithin{theorem}{section}

\newcommand{\catname}[1]{\ensuremath{{\bf #1}}}
\newcommand{\sigmaalg}[2]{\ensuremath{\catname{Alg}(#1,#2)}\xspace}
\newcommand{\sigmaalgt}[3]{\ensuremath{\catname{Alg}^{#1}(#2,#3)}\xspace}
\newcommand{\convexrel}{\catname{ConvexRel}\xspace}
\newcommand{\preord}{\catname{Preord}\xspace}
\newcommand{\cset}{\catname{Set}\xspace}
\newcommand{\pos}{\catname{Pos}\xspace}
\newcommand{\rel}{\ensuremath{\catname{Rel}}\xspace}
\newcommand{\relq}[1]{\ensuremath{\catname{Rel}(#1)}\xspace}
\newcommand{\spanq}[1]{\ensuremath{\catname{Span}(#1)}\xspace}
\newcommand{\spanvq}[2]{\ensuremath{\catname{Span}_{#1}(#2)}\xspace}
\newcommand{\relvq}[2]{\ensuremath{\catname{Rel}_{#1}(#2)}\xspace}
\newcommand{\spanvtq}[3]{\ensuremath{\catname{Span}^{#1}_{#2}(#3)}\xspace}
\newcommand{\relvtq}[3]{\ensuremath{\catname{Rel}^{#1}_{#2}(#3)}\xspace}

\newcommand{\relvqex}[3]{\ensuremath{\catname{Rel}^{\operatorname{#3}}_{#1}(#2)}\xspace}
\newcommand{\linrelvq}[2]{\relvqex{#1}{#2}{lin}}
\newcommand{\affrelvq}[2]{\relvqex{#1}{#2}{aff}}
\newcommand{\relrelvq}[2]{\relvqex{#1}{#2}{rel}}
\newcommand{\cartrelvq}[2]{\relvqex{#1}{#2}{cart}}
\newcommand{\linrelvtq}[3]{\relvqex{#2}{#3}{lin, #1}}

\newcommand{\spanvqex}[3]{\ensuremath{\catname{Span}^{\operatorname{#3}}_{#1}(#2)}\xspace}
\newcommand{\linspanvq}[2]{\spanvqex{#1}{#2}{lin}}
\newcommand{\affspanvq}[2]{\spanvqex{#1}{#2}{aff}}
\newcommand{\relspanvq}[2]{\spanvqex{#1}{#2}{rel}}
\newcommand{\cartspanvq}[2]{\spanvqex{#1}{#2}{cart}}
\newcommand{\linspanvtq}[3]{\spanvqex{#2}{#3}{lin, #1}}

\newcommand{\signature}[1]{\ensuremath{\operatorname{#1}}}

\newcommand{\signaturesex}[1]{\ensuremath{\catname{Sig}^{\operatorname{#1}}}\xspace}
\newcommand{\linsignatures}{\signaturesex{lin}}
\newcommand{\affsignatures}{\signaturesex{aff}}
\newcommand{\relsignatures}{\signaturesex{rel}}
\newcommand{\cartsignatures}{\signaturesex{cart}}

\newcommand{\quantale}[1]{\ensuremath{{\bf #1}}\xspace}

\newcommand{\graph}[1]{\ensuremath{{#1}_\circ}\xspace}
\newcommand{\cograph}[1]{\ensuremath{{}_{\circ}{#1}}\xspace}
\newcommand{\converse}[1]{\ensuremath{{#1}^\circ}\xspace}

\newcommand{\define}[1]{{\bf #1}\xspace}

\allowdisplaybreaks

\begin{document}

\maketitle

\begin{abstract}
  Process theories combine a graphical language for compositional reasoning with an underlying
  categorical semantics. They have been successfully applied to fields such as quantum computation, natural language processing,
  linear dynamical systems and network theory. When investigating a new application, the question arises of
  how to identify a suitable process theoretic model. 

  We present a conceptually motivated parameterized framework for the construction
  of models for process theories.  Our framework generalizes the notion of binary relation
  along four axes of variation, the truth values, a choice of algebraic structure,
  the ambient mathematical universe and the choice of proof relevance or provability.
  The resulting categories are preorder-enriched and provide analogues of relational converse and taking graphs of maps. Our constructions
  are functorial in the parameter choices, establishing mathematical connections between different application domains. 
  We illustrate our techniques by constructing many existing models from the literature,
  and new models that open up ground for further development.
\end{abstract}

\section{Introduction}
The term ``process theory'' has recently been introduced~\cite{CoeckeKissinger2016} to describe compositional theories
of abstract processes. These process theories typically consist of a graphical language for reasoning about composite systems,
and a categorical semantics tailored to the application domain. This compositional perspective has been incredibly
successful in reasoning about questions in quantum computation and quantum foundations. The scope of the process
theoretic perspective encompasses many other application domains, including natural language processing~\cite{CoeckeSadrzadehClark2010},
signal flow graphs~\cite{BonchiSobocinskiZanasi2015}, control theory~\cite{BaezErbele2015},
Markov processes~\cite{BaezFongPollard2016}, electrical circuits~\cite{BaezFong2015} and even linear algebra~\cite{Sobocinski2016}.

When considering a new application of the process theoretic approach, the question arises of how to find a suitable
categorical setting capturing the phenomena of interest. Dagger compact closed categories are of particular importance as they
have an elegant graphical calculus, and many of the examples cited above live in compact closed categories. 

We illustrate the process of constructing new dagger compact closed categories with two examples in the theory of human cognition,
as developed in~\cite{Gardenfors2004, Gardenfors2014}. This is an unconventional application area, and therefore highlights clearly
the challenges faced when trying to model a new problem domain in a process theoretic manner.

As our first example, we consider how the notion of convexity can be incorporated into a compact closed setting. Convexity
is important in mathematical models of cognition, where it is argued that the meaningful concepts should be closed under
forming mixtures. Informally, if we have a space representing animals, then if two points describe dogs, we would expect any points
``in between'' should also be models of the concept of being a dog.

An algebraic model of convexity is given by the Eilenberg-Moore algebras of the finite distribution monad. These algebras, referred to as convex algebras,
are sets equipped with a well behaved operation for forming convex mixtures of elements. Informally, we denote such a convex mixture as
\begin{equation*}
  \sum_i p_i x_i
\end{equation*}
where the~$p_i$ are positive reals summing to one, and the~$x_i$ are elements of the algebra.
This notation is not intended to imply there are independent addition and scaling operations that can be applied to the  individual elements.

The Eilenberg-Moore category of any monad on~\cset, or in fact any regular category, is itself a regular category.
Therefore the category of convex algebras is regular and we can form its category
of relations, denoted~\convexrel. It is well known that the category of relations over a regular category is a dagger compact closed category~\cite{HeunenTull2015}.
Concretely, a convex relation is an ordinary binary relation~$R$ which is
closed under forming convex mixtures, in the sense that implications of the following form hold
\begin{equation}
  \label{eq:algebraicclosure}
  R(a_1,b_1) \wedge ... \wedge R(a_n,b_n) \Rightarrow R(\sum_i p_i a_i, \sum_i p_i b_i)
\end{equation}
A~\define{state} of an object~$A$ in a monoidal category is a morphism of type~$I \rightarrow A$ where~$I$ is the monoidal unit.
The states in~\convexrel are the convex subsets, as we may have hoped. This model was used as the mathematical basis for a compositional
model of cognition~\cite{BoltCoeckeGenoveseLewisMarsdenPiedeleu2016}.

As our second example, we return to the mathematics of cognition. It is natural to think about notions of nearness and distance for models of reasoning,
a wolf is nearly a dog, a squirrel is closer to being a rat than an elephant. We would therefore like to capture metrics
within our model. We now consider how to introduce metrics into a compact
closed setting. The construction used in the previous example is not applicable as the various natural categories of
metric spaces are not regular. Therefore, we will require a new approach, which will entail a small detour.
We begin by introducing the notion of a quantale.
\begin{definition}[Quantale]
  A \define{quantale} is a join complete partial order~$Q$ with a monoid structure~$(\otimes,k)$ satisfying the following distributivity axioms,
  for all~$a,b \in Q$ and~$A,B \subseteq Q$
  \begin{align*}
    a \otimes \left[ \bigvee B \right] &= \bigvee \{ a \otimes b \mid b \in B \}\\
    \left[ \bigvee A \right] \otimes b &= \bigvee \{ a \otimes b \mid a \in A \}
  \end{align*}
  A quantale is said to be~\define{commutative} if its monoid structure is commutative.
\end{definition}
\begin{example}
  Every locale~\cite{Johnstone1986} is a commutative quantale, and in particular any complete chain
  is a commutative quantale with
  \begin{align*}
    \bigvee A &= \sup A\\
    a_1 \otimes a_2 &= \operatorname{min}(a,b)\\
    k &= \top
  \end{align*}
  \begin{itemize}
  \item The \define{Boolean quantale}~\quantale{B} is given by the chain~$\{0,1\}$ with its usual ordering
  \item The \define{interval quantale}~\quantale{I} is given by the chain~$[0,1] \subseteq \mathbb{R}$ with its usual ordering
  \item The quantale~\quantale{F} is given by the chain~$[0,\infty]$ of extended positive reals with the~\emph{reverse} ordering
  \end{itemize}
  An important example of a commutative quantale that does not correspond to a locale is the \define{Lawvere quantale}~\quantale{C} with
  underlying set the extended positive reals with reverse order and algebraic structure
  \begin{align*}
    \bigvee A &= \inf A\\
    a_1 \otimes a_2 &= a_1 + a_2\\
    k &= 0
  \end{align*}
\end{example}
A binary relation between two sets~$A$ and~$B$ can be described by its characteristic function
\begin{equation*}
  A \times B \rightarrow 2
\end{equation*}
where~$2$ is the two element set of Boolean truth values. We can generalize the notion of binary relation
by allowing the truth values to be taken in a suitable choice of quantale~$Q$, as a function of the form
\begin{equation*}
  A \times B \rightarrow Q
\end{equation*}
We can see this as a potentially infinite matrix of truth values. These binary relations form a category~\relq{Q},
with identities and composition given by suitable generalizations of their matrix theoretic analogues. If
the quantale of truth values is commutative, \relq{Q}~is in fact dagger compact closed. So we have found
another dagger compact closed category, but what has this got to do with metrics? In order to establish the
required connection, we note that we can order relations pointwise in the quantale order, as follows:
\begin{equation*}
  R \subseteq R' \quad\text{ iff }\quad \forall a,b. R(a,b) \leq R'(a,b)
\end{equation*}
This order structure makes~\relq{Q} into a poset-enriched symmetric monoidal category. This means we can consider internal monads,
in the sense of formal category theory~\cite{Street1972}. These identify important ``structured objects'' within our categories as follows.
\begin{itemize}
\item The internal monads of~\relq{\quantale{B}} are endo-relations such that
  \begin{equation*}
    R(a,a) \quad\text{ and }\quad R(a,b) \wedge R(b,c) \Rightarrow R(a,c)
  \end{equation*}
  That is, they are preorders.
\item The internal monads of~\relq{\quantale{I}} are endo-relations where
  \begin{equation*}
    R(a,a) = 1 \quad\text{ and }\quad R(a,b) \wedge R(b,c) \leq R(a,c)
  \end{equation*}
  We can see these as a fuzzy generalization of the notion of a preorder.
\item The key example is the internal monads of~\relq{\quantale{C}}. These are endo-relations satisfying
  \begin{equation*}
    R(a,a) = 0 \quad\text{ and }\quad R(a,b) + R(b,c) \geq R(a,c)
  \end{equation*}
  If we read the relation~$R$ as a distance function, we see that they are~\emph{generalized metric spaces} \cite{Lawvere1973}.
\item The internal monads of~\relq{\quantale{F}} are endo-relations satisfying
  \begin{equation*}
    R(a,a) = 0 \quad\text{ and }\quad \operatorname{max}(R(a,b), R(b,c)) \geq R(a,c)
  \end{equation*}
  Again, if we regard~$R$ as a distance function, these can be seen to be~\emph{generalized ultrametric spaces}.
\end{itemize}
So in particular, \relq{\quantale{C}} gives us a partial order enriched dagger compact closed category in which the internal monads
are generalized metric spaces. Such categories of relations have been proposed as a unifying categorical setting for investigating
various topological notions, see~\cite{ClementinoTholen2003, HofmannSealTholen2014}. Multi-valued relations have
also been investigated for compositional models of natural language~\cite{DostalSadrzadeh2016}.

To recap, we have constructed two compact closed categories using differing techniques that can be found in the literature:
\begin{itemize}
\item By exploiting relations respecting algebraic structure, standard monad and regular category theory provided us with a category where the states are exactly convex subsets.
\item Generalizing the notion of relations in a different direction, we produced a category where the internal monads are generalized metric spaces.
\end{itemize}
So, using rather ad-hoc methods, we have solved two modelling problems using generalizations of binary relations. This prompts several questions:
\begin{itemize}
\item How do these constructions relate to each other? In particular, can we simultaneously work with convexity and metrics in an appropriate setting?
\item Can they be seen as instances of a general construction? 
\item Does the notion of binary relation permit further axes of variation, producing additional examples of compact closed categories?
\item As the parameters of our constructions vary, can the resulting categories be related? Formally, this is a question of functoriality in a suitable sense.
\end{itemize}
These questions provide the starting point for our investigations.
We also observe that the categories we identified in our examples are both in fact instances of Fong and Kissinger's hypergraph categories~\cite{Fong2016}.
These are a particularly well behaved class of dagger compact closed categories, and this will be our technical setting for the remainder of the paper.

We summarize our contributions as follows
\begin{itemize}
\item We provide parameterized constructions of hypergraph categories of generalized relations and spans in theorems~\ref{thm:relvqhypergraph}, \ref{thm:simplespans} and~\ref{thm:spanvqhypergraph}.
\item We introduce analogues of the notion of converse of a generalized relation, and taking the graph of an underlying morphism. Many further aspects are shown to commute with this important structure.
\item In section~\ref{sec:orderstructure} the resulting categories are shown to be appropriately order enriched.
\item In section~\ref{sec:fromspanstorelations} we show that generalized spans can be functorially mapped to generalized relations in a manner respecting all the important structure.
\item In section~\ref{sec:changingtruthvalues} we show that homomorphisms of truth values functorially induce functors between models, preserving all the important structure.
\item In section~\ref{sec:algebraicstructure} we show that our constructions are functorial in the choices of algebraic structure. We also describe how the algebraic and truth value structures interact,
  providing connections with notions resource sensitivity in the sense of linear logic.
\item In section~\ref{sec:changingtopos} we show that our constructions are also functorial in the choice of ambient topos, with the quantale structure being transferred along a logical functor.
\item In theorem~\ref{thm:thebox} we show that the functors induced by changes of parameters commute with each other.
\item Our methods give explicit concrete descriptions of the mathematical objects of interest, suitable for use in applications.
\item We provide many examples illustrating the flexibility of our techniques, particularly to the construction of new and existing models of natural language processing and cognition applications.
\end{itemize}

\subsection*{Related Work}
Categories of relations have been studied in the form of allegories~\cite{FreydScedrov1990}.
This work is somewhat removed from our approach, the heavy use of the modular law does not directly
yield the graphical phenomena of interest. Of more direct relevance is the concept of cartesian
bicategory of~\cite{CarboniWalters1987}. Although graphical notation is not used directly in this
work, these categories can be seen as close relatives of the hypergraph categories resulting from our constructions.
The emphasis in the study of cartesian bicategories was characterization rather than construction of models.

A somewhat syntactic approach to constructing categories with graphical calculi is the use of PROPs~\cite{Maclane1965, Lack2004}.
They have recently been used to construct various categorical models relating to control theory~\cite{BonchiSobocinskiZanasi2015, Zanasi2015, Erbele2016}. These
methods begin with syntax and equations, and freely derive a resulting category. This style is most effective when the application under
consideration has well understood calculational properties. Our approach instead emphasizes the direct construction of models which
can then be investigated for their suitability to a given application.

The beautiful work on decorated cospans and correlations of~\cite{Fong2015, Fong2016}, motivated by the program of network theory initiated in~\cite{Baez2011},
is of most direct relevance to our approach. In a precise sense, the decorated corelation construction is completely generic, every
hypergraph category is produced by that construction. Our emphasis is different, we do not aim for
maximum generality. Instead, our aim is conceptually motivated parameterization. By providing four
clearly motivated features that can be adjusted to application needs, we aim for a practical construction
with which investigators using process theories can construct new models with desirable features.

\section{Mathematical Background}
We will be interested in particular types of symmetric monoidal categories, and
will make use of their corresponding graphical languages~\cite{Selinger2010}.
Technical background on monoidal categories and general categorical notions can
be found in~\cite{MacLane1998}. We will also refer to toposes and their internal
languages in places, standard references are~\cite{MacLaneMoerdijk1992, Johnstone2002a, Johnstone2002b, Borceux1994c}.
The paper has been written with the intention that it should be readable without any detailed
knowledge of topos theory. For such readers, definitions should be read as if they pertain to ordinary sets, functions and predicate logic.
In this section we briefly describe some standard mathematical background and conventions.
\begin{definition}[Compact Closed Category]
  An object~$A$ in a symmetric monoidal category is said to have dual~$A^*$ if there
  exist unit~$\eta: I \rightarrow A^* \otimes A$ and counit~$\epsilon: A \otimes A^* \rightarrow I$ morphisms.
  These morphisms are depicted in the graphical calculus using special notation as
  \begin{equation*}
    \begin{gathered}
      \begin{tikzpicture}[scale=0.5, stringdiagram]
        \path coordinate (cap)
        +(-1,-1) coordinate[label=below:$A$] (bl)
        +(1,-1) coordinate[label=below:$A^*$] (br);
        \draw (bl) to[out=90, in=180] (cap.west) -- (cap.east) to[out=0, in=90] (br);
      \end{tikzpicture}
    \end{gathered}
    \qquad
    \begin{gathered}
      \begin{tikzpicture}[scale=0.5, stringdiagram]
        \path coordinate (cup)
        +(-1,1) coordinate[label=above:$A^*$] (tl)
        +(1,1) coordinate[label=above:$A$] (tr);
        \draw (tl) to[out=-90, in=180] (cup.west) -- (cup.east) to[out=0, in=-90] (tr);
      \end{tikzpicture}
    \end{gathered}
  \end{equation*}
  They are required to satisfy the following~\define{snake equations}.
  \begin{equation*}
    \begin{gathered}
      \begin{tikzpicture}[scale=0.5, stringdiagram]
        \path coordinate[label=below:$A$] (bot)
        ++(0,1.5) coordinate (a)
        ++(1,1) coordinate (cap)
        ++(1,-1) coordinate (b)
        ++(1,-1) coordinate (cup)
        ++(1,1) coordinate (c)
        ++(0,1.5) coordinate[label=above:$A$] (top);
        \draw (bot) -- (a) to[out=90, in=180] (cap.west) -- (cap.east)
        to[out=0, in=90] (b.north) -- (b.south) to[out=-90, in=180] (cup.west) -- (cup.east)
        to[out=0, in=-90] (c.south) -- (c.north) -- (top);
      \end{tikzpicture}
    \end{gathered}
    =
    \begin{gathered}
      \begin{tikzpicture}[scale=0.5, stringdiagram]
        \path coordinate[label=below:$A$] (bot) ++(0,3) coordinate[label=above:$A$] (top);
        \draw (bot) -- (top);
      \end{tikzpicture}
    \end{gathered}
    \qquad
    \begin{gathered}
      \begin{tikzpicture}[scale=0.5, stringdiagram]
        \path coordinate[label=above:$A^*$] (top)
        ++(0,-1.5) coordinate (a)
        ++(1,-1) coordinate (cup)
        ++(1,1) coordinate (b)
        ++(1,1) coordinate (cap)
        ++(1,-1) coordinate (c)
        ++(0,-1.5) coordinate[label=below:$A^*$] (bot);
        \draw (top) -- (a) to[out=-90, in=180] (cup.west) -- (cup.east)
        to[out=0, in=-90] (b.south) -- (b.north) to[out=90, in=180] (cap.west) -- (cap.east)
        to[out=0, in=90] (c.north) -- (c.south) -- (bot);
      \end{tikzpicture}
    \end{gathered}
    =
    \begin{gathered}
      \begin{tikzpicture}[scale=0.5, stringdiagram]
        \path coordinate[label=below:$A^*$] (bot) ++(0,3) coordinate[label=above:$A^*$] (top);
        \draw (bot) -- (top);
      \end{tikzpicture}
    \end{gathered}
  \end{equation*}
  A~\define{compact closed category} is a symmetric monoidal category in which every object has
  a dual. A compact closed category~$\mathcal{A}$, equipped with an identity on objects
  involution~$(-)^\dagger : \mathcal{A}^{op} \rightarrow \mathcal{A}$ coherently with the symmetric
  monoidal compact closed structure, is referred to as a \define{dagger compact closed category}\cite{AbramskyCoecke2004}.
  The older term~\define{strongly compact closed category} is also occasionally used.
\end{definition}
\begin{example}
  \label{ex:reldaggercompactclosed}
  The canonical example of a dagger compact closed category of relevance to the current work is the category~\rel of
  sets and binary relations between them. The symmetric monoidal structure is given by cartesian products of sets, and
  the dagger by the usual converse of relations.
  Objects are self-dual, with the unit on a set~$A$ given by the relation
  \begin{equation*}
    \eta = \{ (*, (a,a)) \mid a \in A \}
  \end{equation*}
  and the counit is its converse.
\end{example}
\begin{definition}[Hypergraph Category]
  A~\define{hypergraph category} is a symmetric monoidal category such that every object~$A$ carries
  a commutative monoid structure
  \begin{equation*}
    (\eta : I \rightarrow A, \mu: A \otimes A \rightarrow A)
  \end{equation*}
  and a cocommutative comonoid structure
  \begin{equation*}
    (\epsilon: A \rightarrow I, \delta : A \rightarrow A \otimes A)
  \end{equation*}
  We depict these morphisms graphically as follows:
  \begin{equation*}
    \begin{tikzpicture}[scale=0.5, stringdiagram, baseline=(mu.center)]
      \path coordinate[dot, label=below:$\mu$] (mu)
      +(-1,-1) coordinate[label=below:$A$] (bl)
      +(1,-1) coordinate[label=below:$A$] (br)
      +(0,1) coordinate[label=above:$A$] (top);
      \draw
      (mu) -- (top)
      (bl) to[out=90, in=180] (mu.west) -- (mu.east) to[out=0, in=90] (br);
    \end{tikzpicture}
    \qquad
    \begin{tikzpicture}[scale=0.5, stringdiagram, baseline=(eta.center)]
      \path coordinate[dot, label=below:$\eta$] (eta)
      +(0,1) coordinate[label=above:$A$] (top);
      \draw (eta) -- (top);
    \end{tikzpicture}
    \qquad
    \begin{tikzpicture}[scale=0.5, stringdiagram, baseline=(delta.center)]
      \path coordinate[dot, label=above:$\delta$] (delta)
      +(-1,1) coordinate[label=above:$A$] (tl)
      +(1,1) coordinate[label=above:$A$] (tr)
      +(0,-1) coordinate[label=below:$A$] (bot);
      \draw
      (delta) -- (bot)
      (tl) to[out=-90, in=180] (delta.west) -- (delta.east) to[out=0, in=-90] (tr);
    \end{tikzpicture}
    \qquad
    \begin{tikzpicture}[scale=0.5, stringdiagram, baseline=(epsilon.center)]
      \path coordinate[dot, label=above:$\epsilon$] (epsilon)
      +(0,-1) coordinate[label=below:$A$] (bot);
      \draw (epsilon) -- (bot);
    \end{tikzpicture}
  \end{equation*}
  The choice of monoid structure on each object is required to satisfy the following coherence
  condition with respect to the monoidal structure.
  \begin{equation*}
    \label{eq:hypergraphcoherence}
    \begin{gathered}
      \begin{tikzpicture}[xscale=0.5, yscale=-0.5, stringdiagram]
        \path coordinate[dot, label=below left:$\delta$] (mul)
        +(0,1) coordinate[label=below:$A$] (tl)
        +(-1,-1) coordinate[label=above:$A$] (a)
        +(1,-1) coordinate[label=above:$A$] (b)
        ++(1,0) coordinate[dot, label=below right:$\delta$] (mur)
        +(0,1) coordinate[label=below:$B$] (tr)
        +(-1,-1) coordinate[label=above:$B$] (c)
        +(1,-1) coordinate[label=above:$B$] (d);
        \draw
        (a) to[out=90, in=180] (mul.west) -- (mul.east) to[out=0, in=90] (b)
        (d) to[out=90, in=0] (mur.east) -- (mur.west) to[out=180, in=90] (c)
        (mul) -- (tl)
        (mur) -- (tr);
      \end{tikzpicture}
    \end{gathered}
    =
    \begin{gathered}
      \begin{tikzpicture}[xscale=0.5, yscale=-0.5, stringdiagram]
        \path coordinate[dot, label=above:$\delta$] (mu)
        +(0,1) coordinate[label=below:$A \otimes B$] (top)
        +(-1,-1) coordinate[label={[xshift=-1mm]above:$A \otimes B$}] (a)
        +(1,-1) coordinate[label={[xshift=1mm]above:$A \otimes B$}] (b);
        \draw (a) to[out=90, in=180] (mu.west) -- (mu.east) to[out=0, in=90] (b)
        (mu) -- (top);
      \end{tikzpicture}
    \end{gathered}
  \end{equation*}
  Here, we overload the use of the symbol~$\delta$ to avoid cluttering our diagrams with indices
  or subscripts. We will exploit similar overloading of names in many places in what follows.
  The monoid structure is also required to satisfy the dual coherence condition.
  The multiplication and comultiplication must also satisfy the Frobenius axiom
  \begin{equation*}
    \begin{gathered}
      \begin{tikzpicture}[scale=0.5, stringdiagram]
        \path coordinate[dot, label=below:$\mu$] (mu)
        +(-1,-1) coordinate (a)
        +(0,1) coordinate[label=above:$A$] (tl)
        ++(1,-1) coordinate[dot, label=above:$\delta$] (delta)
        +(1,1) coordinate (b)
        +(0,-1) coordinate[label=below:$A$] (br)
        (a) ++(0,-1) coordinate[label=below:$A$] (bl)
        (b) ++(0,1) coordinate[label=above:$A$] (tr);
        \draw
        (mu) -- (delta)
        (bl) -- (a.south) -- (a.north) to[out=90, in=180] (mu.west)
        (tr) -- (b.north) -- (b.south) to[out=-90, in=0] (delta.east)
        (mu) -- (tl)
        (delta) -- (br);
      \end{tikzpicture}
    \end{gathered}
    =
    \begin{gathered}
      \begin{tikzpicture}[scale=0.5, stringdiagram]
        \path coordinate[dot, label=below:$\mu$] (mu)
        +(-1,-1) coordinate[label=below:$A$] (bl)
        +(1,-1) coordinate[label=below:$A$] (br)
        ++(0,1) coordinate[dot, label=above:$\delta$] (delta)
        +(-1,1) coordinate[label=above:$A$] (tl)
        +(1,1) coordinate[label=above:$A$] (tr);
        \draw
        (tl) to[out=-90, in=180] (delta.west) -- (delta.east) to[out=0, in=-90] (tr)
        (bl) to[out=90, in=180] (mu.west) -- (mu.east) to[out=0, in=90] (br)
        (mu) -- (delta);            
      \end{tikzpicture}
    \end{gathered}
    =
    \begin{gathered}
      \begin{tikzpicture}[xscale=-0.5, yscale=0.5, stringdiagram]
        \path coordinate[dot, label=below:$\mu$] (mu)
        +(-1,-1) coordinate (a)
        +(0,1) coordinate[label=above:$A$] (tl)
        ++(1,-1) coordinate[dot, label=above:$\delta$] (delta)
        +(1,1) coordinate (b)
        +(0,-1) coordinate[label=below:$A$] (br)
        (a) ++(0,-1) coordinate[label=below:$A$] (bl)
        (b) ++(0,1) coordinate[label=above:$A$] (tr);
        \draw
        (mu) -- (delta)
        (bl) -- (a.south) -- (a.north) to[out=90, in=180] (mu.west)
        (tr) -- (b.north) -- (b.south) to[out=-90, in=0] (delta.east)
        (mu) -- (tl)
        (delta) -- (br);
      \end{tikzpicture}
    \end{gathered}
  \end{equation*}
  and the special axiom
  \begin{equation*}
    \begin{gathered}
      \begin{tikzpicture}[scale=0.5, stringdiagram]
        \path coordinate[dot, label=above:$\delta$] (delta)
        +(0,-0.5) coordinate[label=below:$A$] (bot)
        +(-1,1) coordinate (a)
        ++(1,1) coordinate (b)
        ++(-1,1) coordinate[dot, label=below:$\mu$] (mu)
        +(0,0.5) coordinate[label=above:$A$] (top);
        \draw
        (bot) -- (delta)
        (top) -- (mu)
        (a.south) to[out=-90, in=180] (delta.west) -- (delta.east) to[out=0, in=-90] (b.south) --
        (b.north) to[out=90, in=0] (mu.east) -- (mu.west) to[out=180, in=90] (a.north);
      \end{tikzpicture}
    \end{gathered}
    =
    \begin{gathered}
      \begin{tikzpicture}[scale=0.5, stringdiagram]
        \path coordinate[label=below:$A$] (bot) ++(0,3) coordinate[label=above:$A$] (top);
        \draw (bot) -- (top);
      \end{tikzpicture}
    \end{gathered}
  \end{equation*}
  More briefly, a hypergraph category is a symmetric monoidal category with
  a chosen special commutative Frobenius algebra structure on every object, coherent with the tensor product.
\end{definition}
\begin{example}
  The category~\rel is also an example of a hypergraph category. The cocommutative comonoid is given by the relations
  \begin{equation*}
    \epsilon = \{ (a,*) \mid a \in A \} \qquad \delta = \{ (a,(a,a)) \mid a \in A \}
  \end{equation*}
  The monoid is the relational converse of the comonoid structure. The induced dagger compact closed structure
  is exactly that of example~\ref{ex:reldaggercompactclosed}.
\end{example}
Our interest in hypergraph categories is that they are a particularly pleasant form of dagger compact closed category,
as established by the following well known observation.
\begin{proposition}
  Every hypergraph category is a dagger compact closed category, with the cup and cap given by
  \begin{equation*}
    \begin{tikzpicture}[scale=0.5, stringdiagram, baseline=(eta.center)]
      \path coordinate (eta)
      +(-1,1) coordinate[label=above:$A$] (tl)
      +(1,1) coordinate[label=above:$A$] (tr);
      \draw (tl) to[out=-90, in=180] (eta.west) -- (eta.east) to[out=0, in=-90] (tr);
    \end{tikzpicture}
    =
    \begin{tikzpicture}[scale=0.5, stringdiagram, baseline=(delta.center)]
      \path coordinate[dot, label=above:$\delta$] (delta)
      +(0,-1) coordinate[dot, label=below:$\eta$] (eta)
      +(-1,1) coordinate[label=above:$A$] (tl)
      +(1,1) coordinate[label=above:$A$] (tr);
      \draw
      (tl) to[out=-90, in=180] (delta.west) -- (delta.east) to[out=0, in=-90] (tr)
      (delta) -- (eta);
    \end{tikzpicture}
    \quad
    \begin{tikzpicture}[scale=0.5, stringdiagram, baseline=(epsilon.center)]
      \path coordinate (epsilon)
      +(-1,-1) coordinate[label=below:$A$] (bl)
      +(1,-1) coordinate[label=below:$A$] (br);
      \draw (bl) to[out=90, in=180] (epsilon.west) -- (epsilon.east) to[out=0, in=90] (br);
    \end{tikzpicture}
    =
    \begin{tikzpicture}[scale=0.5, stringdiagram, baseline=(mu.center)]
      \path coordinate[dot, label=below:$\mu$] (mu)
      +(0,1) coordinate[dot, label=above:$\epsilon$] (epsilon)
      +(-1,-1) coordinate[label=below:$A$] (bl)
      +(1,-1) coordinate[label=below:$A$] (br);
      \draw
      (bl) to[out=90, in=180] (mu.west) -- (mu.east) to[out=0, in=90] (br)
      (mu) -- (epsilon);
    \end{tikzpicture}
  \end{equation*}
  The dagger of a morphism~$f : A \rightarrow B$ is given by its \define{transpose}
  \begin{equation*}
    \begin{tikzpicture}[scale=0.5, stringdiagram]
      \path coordinate[dot, label=left:$f$] (f)
      +(1,1) coordinate (epsilon)
      +(-1,-1) coordinate (eta)
      (epsilon) ++(1,-1) coordinate (a) ++(0,-1) coordinate[label=below:$B$] (bot)
      (eta) ++(-1,1) coordinate (b) ++(0,1) coordinate[label=above:$A$] (top);
      \draw (bot) -- (a.south) -- (a.north) to[out=90, in=0] (epsilon.east) -- (epsilon.west) to[out=180, in=90] (f.north) -- (f.south)
      to[out=-90, in=0] (eta.east) -- (eta.west) to[out=180, in=-90] (b.south) -- (b.north) -- (top);
    \end{tikzpicture}
  \end{equation*}
\end{proposition}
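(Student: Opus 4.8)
The plan is to argue entirely in the graphical calculus and to separate the two assertions: first that the displayed cup and cap exhibit every object as self-dual, so that $\mathcal{A}$ is compact closed, and then that the transpose is an identity-on-objects involution coherent with this structure. Throughout I write $\eta,\mu,\epsilon,\delta$ for the chosen Frobenius structure and draw on the monoid unit law, the comonoid counit law, the Frobenius axiom, and the commutativity and cocommutativity assumed in the definition of a hypergraph category.

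For the snake equations I would substitute the definitions, replacing the cup by $\delta\circ\eta$ and the cap by $\epsilon\circ\mu$. Expanding one of the two zig-zag composites gives $(\mathrm{id}_A\otimes\epsilon)\circ(\mathrm{id}_A\otimes\mu)\circ(\delta\otimes\mathrm{id}_A)\circ(\eta\otimes\mathrm{id}_A)$. The central factor $(\mathrm{id}_A\otimes\mu)\circ(\delta\otimes\mathrm{id}_A)$ is exactly one side of the Frobenius axiom, so it may be rewritten as $\delta\circ\mu$; the unit law then collapses $\mu\circ(\eta\otimes\mathrm{id}_A)$ to $\mathrm{id}_A$ and the counit law collapses $(\mathrm{id}_A\otimes\epsilon)\circ\delta$ to $\mathrm{id}_A$, leaving the identity. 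The mirror-image snake equation is handled the same way, now additionally flipping the diagram by commutativity and cocommutativity. This shows each object is its own dual and hence that $\mathcal{A}$ is compact closed.

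It remains to check that the transpose is a dagger. Identity-on-objects is immediate, and contravariant functoriality, $\mathrm{id}_A^\dagger=\mathrm{id}_A$ and $(g\circ f)^\dagger=f^\dagger\circ g^\dagger$, follows by sliding boxes along the bent wires, i.e.\ by repeated yanking using the snake equations just established. The decisive observation for the remaining conditions is that cocommutativity makes the cup symmetric, $\sigma\circ\delta\circ\eta=\delta\circ\eta$, and commutativity makes the cap symmetric; from symmetric cups and caps one obtains both the involution $(f^\dagger)^\dagger=f$ and the compatibility of the dagger with the symmetry, associator and unitors. Compatibility with the tensor, $(f\otimes g)^\dagger=f^\dagger\otimes g^\dagger$, follows from the hypergraph coherence condition, which forces the Frobenius structure---and hence the cup and cap---on $A\otimes B$ to factor through those on $A$ and $B$.

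I expect the main obstacle to be not the snake equations, which are short, but the accounting needed to confirm that the transpose is coherent with the full symmetric monoidal compact closed structure---in particular the involution law and the defining compatibility of a dagger compact category, namely that the cap is the dagger of the cup. The latter reduces to the dagger-Frobenius identities $\eta^\dagger=\epsilon$ and $\mu^\dagger=\delta$: the first is immediate, since $\eta^\dagger=(\epsilon\circ\mu)\circ(\eta\otimes\mathrm{id}_A)=\epsilon$ by the unit law, and the second reduces, after substituting the definitions, to a single use of the Frobenius axiom together with symmetry of the cup and cap. Each individual compatibility is routine in this way, but assembling all of them is the lengthy part; as the statement records, the result is well known, and the verification may be organised through, or cited from, the standard coherence theorems for dagger compact closed categories.
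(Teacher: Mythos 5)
Your proposal is correct, but there is nothing in the paper to compare it against: the paper states this proposition as a ``well known observation'' and supplies no proof at all (it is not among the results proved in the appendix), deferring instead to the literature on special commutative Frobenius algebras. Your argument is the standard one and the key steps are all in the right place: the snake equation computation via one application of the Frobenius axiom, $(\mathrm{id}_A\otimes\mu)\circ(\delta\otimes\mathrm{id}_A)=\delta\circ\mu$, followed by the unit and counit laws, is exactly right; contravariant functoriality of the transpose by yanking is right; the observation that cocommutativity and commutativity make the cup and cap symmetric, which is what gives the involution law and unitarity of the coherence isomorphisms, is the correct key point; and the hypergraph coherence condition is indeed what yields $(f\otimes g)^\dagger=f^\dagger\otimes g^\dagger$, since it expresses the cup and cap on $A\otimes B$ in terms of those on $A$ and $B$ with a middle swap. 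Two small points of care, neither a gap: your computation $\eta^\dagger=(\epsilon\circ\mu)\circ(\eta\otimes\mathrm{id}_A)=\epsilon$ silently uses that the chosen Frobenius structure on the monoidal unit $I$ is the trivial one (the nullary instance of the coherence condition, which the paper's binary diagram elides but which is part of Fong's definition); and the mirror snake equation does not actually need commutativity---the other Frobenius identity $(\mu\otimes\mathrm{id}_A)\circ(\mathrm{id}_A\otimes\delta)=\delta\circ\mu$ together with the other unit/counit laws handles it directly.
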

As a final technical point, we will be working with various categories with finite products. Throughout, we will implicitly
assume a choice of terminal object and binary products has been given. To reduce clutter, we therefore resist repeating this assumption in
the statements of our subsequent theorems.

\section{Relations}
\label{sec:algebraicqrelations}
The aim in this section is to broadly generalize the notion of binary relation between sets, in order to support our motivating examples, and to provide
scope for many other variations. We observed, for sets~$A$ and~$B$, and
quantale~$Q$, that we can consider a function~$A \times B \rightarrow Q$ as a relation, with truth values taken in
the quantale. For such generalized relations, we define the composition of relations~$R : A \rightarrow B$ and~$S : B \rightarrow C$
by analogy with the usual composition of relations
\begin{equation*}
  (S \circ R)(a,c) = \bigvee_b R(a,b) \otimes S(b,c)
\end{equation*}
With this notion of composition, the following relation, with truth values in~$Q$, serves as an identity on set~$A$:
\begin{equation*}
  1_A(a_1,a_2) = \bigvee \{ k \mid a_1 = a_2 \}
\end{equation*}
We then observe that all of these definitions actually make sense in the internal language of an arbitrary elementary topos. This
leads us to the following definition.
\begin{definition}[$Q$-relation]
  \label{def:qrelation}
  Let~$\mathcal{E}$ be a topos, and~$(Q,\otimes,k,\bigvee)$ an internal quantale. A~\define{$Q$-relation} between~$\mathcal{E}$
  objects~$A$ and~$B$ is a~$\mathcal{E}$-morphism of type
  \begin{equation*}
    A \times B \rightarrow Q
  \end{equation*}
  $\mathcal{E}$-objects and $Q$-relations between them form a category~\relq{Q}, with identities and composition as described above.
\end{definition}
Definition~\ref{def:qrelation} is a first step in the right direction, but in order to capture convexity, as discussed in the introduction,
we must find a way of incorporating algebraic structure. If we consider an algebraic signature~$(\Sigma,E)$
with set of operations~$\Sigma$ and equations~$E$, the general form of equation~\eqref{eq:algebraicclosure},
for $n$-ary operation~$\sigma \in \Sigma$, is
\begin{equation*}
  R(a_1,b_1) \wedge ... \wedge R(a_n, b_n) \Rightarrow R(\sigma(a_1,...,a_n),\sigma(b_1,...,b_n))
\end{equation*}
We will require throughout that all operation symbols have finite arity, as is conventional in universal algebra.

It is then natural to consider replacing the logical components of this definition with the structure of our chosen
quantale. This leads to the definition we require.
\begin{definition}[Algebraic $Q$-relation]
  \label{def:algebraicqrelation}
  Let~$\mathcal{E}$ be a topos, and~$(Q, \otimes, k, \bigvee)$ an internal quantale.
  Let~$(\Sigma,E)$ be an algebraic variety in~$\mathcal{E}$.
  An~\define{algebraic}~$Q$-relation between~$(\Sigma,E)$-algebras~$A$ and~$B$ is a~$Q$-relation between
  their underlying~$\mathcal{E}$-objects such that for each~$\sigma \in \Sigma$ the following axiom holds
  \begin{align}
    \label{eq:crucialinequation}
    R(a_1,b_1) \otimes ... \otimes &R(a_n,b_n) \nonumber \\
    &\leq R(\sigma(a_1,...,a_n), \sigma(b_1,...,b_n))
  \end{align}
  $(\Sigma,E)$-algebras and algebraic~$Q$-relations form a category~\relvq{(\Sigma,E)}{Q}, with identities and composition as for their
  underlying~$Q$-relations.
\end{definition}
There is some subtlety to the interaction between truth values and algebraic structure, we will return to this
topic in section~\ref{sec:algebraicstructure}. We now continue studying the categorical structure of algebraic~$Q$-relations.
\begin{restatable}{proposition}{relisacategory}\label{prop:relisacategory}
  Let~$\mathcal{E}$ be a topos, $(\Sigma,E)$ a variety in~$\mathcal{E}$, and~$(Q, \otimes, k, \bigvee)$ an internal commutative quantale.
  The category~\relvq{(\Sigma,E)}{Q} is a symmetric monoidal category. The symmetric monoidal structure is inherited from
  the finite products in~$\mathcal{E}$.
\end{restatable}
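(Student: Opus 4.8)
The plan is to exhibit the symmetric monoidal structure explicitly and then verify the axioms by reducing them, wherever possible, to the product structure of $\mathcal{E}$ and to short computations with the quantale axioms. Throughout I reason in the internal language of $\mathcal{E}$, so the argument reads as for ordinary sets, functions and $Q$-valued predicates. On objects the tensor of algebras $A$ and $B$ is their product algebra $A \times B$, computed as the product of the underlying $\mathcal{E}$-objects with componentwise operations, and the unit $I$ is the terminal algebra. On morphisms, for algebraic $Q$-relations $R : A \to B$ and $R' : A' \to B'$, I set
\begin{equation*}
  (R \otimes R')\big((a,a'),(b,b')\big) = R(a,b) \otimes R'(a',b'),
\end{equation*}
the $\otimes$ on the right being the quantale multiplication. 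The associator, unitors and symmetry are taken to be the $Q$-relations built from the canonical product isomorphisms of $\mathcal{E}$ exactly as the identity relation $1_A$ is built from the identity map; for instance the symmetry is $\bigvee\{\, k \mid (b,a) = s(a,b)\,\}$ for $s$ the underlying swap. Before turning to the monoidal axioms I note that the underlying category laws are themselves quantale calculations: associativity of composition follows from associativity of $\otimes$ and distributivity over $\bigvee$, while the unit laws follow because $k$ is the monoidal unit and $1_A$ behaves as a Kronecker delta.

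First I would check that $R \otimes R'$ is again an algebraic $Q$-relation. Monotonicity of $\otimes$ in each argument is immediate from distributivity over joins, and since operations on a product algebra act componentwise, $\sigma$ applied to a tuple of pairs yields the pair of $\sigma$-images. Using commutativity and associativity of $\otimes$ to regroup the factors,
\begin{align*}
  \bigotimes_{i} (R \otimes R')\big((a_i,a_i'),(b_i,b_i')\big)
    &= \Big(\bigotimes_{i} R(a_i,b_i)\Big) \otimes \Big(\bigotimes_{i} R'(a_i',b_i')\Big) \\
    &\leq R\big(\sigma(\vec a),\sigma(\vec b)\big) \otimes R'\big(\sigma(\vec a'),\sigma(\vec b')\big),
\end{align*}
where the inequality applies the closure axiom~\eqref{eq:crucialinequation} for $R$ and $R'$ together with monotonicity, and the right-hand side is precisely $R \otimes R'$ evaluated at the $\sigma$-images. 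A variant of the same computation shows the coherence isomorphisms are algebraic $Q$-relations: being graphs of algebra homomorphisms, their closure axiom collapses to the homomorphism property.

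The key step is bifunctoriality of $\otimes$. Preservation of identities is routine. For composition, given composable $R,S$ and $R',S'$, both $(S \circ R) \otimes (S' \circ R')$ and $(S \otimes S') \circ (R \otimes R')$ evaluate at $\big((a,a'),(c,c')\big)$ to a join over the intermediate points; distributivity of $\otimes$ over $\bigvee$ lets me pull the two inner joins outside, giving in each case a single join over pairs $(b,b')$ of a fourfold tensor of values of $R,S,R',S'$, and commutativity and associativity of $\otimes$ identify the two expressions. This interchange law is where commutativity of $Q$ is genuinely used, and it is the main obstacle; every remaining axiom is either inherited from $\mathcal{E}$ or a brief quantale manipulation.

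Finally I would verify naturality of the coherence data and the pentagon, triangle and hexagon identities. Since each coherence isomorphism is the graph of the corresponding product isomorphism in $\mathcal{E}$, and since composing such graphs computes the graph of the composite—again by distributivity, just as for $1_A$—naturality and each coherence equation reduce to the already-known corresponding equation among the product isomorphisms of $\mathcal{E}$. Invertibility is likewise inherited, since the graph of an isomorphism has the graph of its inverse as a two-sided inverse in \relvq{(\Sigma,E)}{Q}. Assembling these observations yields the symmetric monoidal structure, manifestly inherited from the finite products of $\mathcal{E}$.
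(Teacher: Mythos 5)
Your construction is essentially the paper's: the tensor is defined pointwise via the quantale multiplication, the unit is the terminal algebra, the coherence data are the graphs of the product isomorphisms of $\mathcal{E}$, bifunctoriality is the interchange computation using commutativity and distributivity, and the pentagon/triangle/hexagon equations are inherited through the graph construction. However, one of your reductions claims more than its stated justification delivers, and this is the main gap: naturality of the coherence isomorphisms. A naturality square for, say, the associator reads
\begin{equation*}
  (R \otimes (S \otimes T)) \circ \graph{\alpha_{A,B,C}} \;=\; \graph{\alpha_{A',B',C'}} \circ ((R \otimes S) \otimes T)
\end{equation*}
where $R$, $S$, $T$ are \emph{arbitrary} algebraic $Q$-relations. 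The fact that graphs compose to graphs of composites only controls diagrams all of whose edges are graphs; since the graph functor is far from full, naturality of the product isomorphisms in $\mathcal{E}$ (which is naturality with respect to algebra homomorphisms) does not transfer automatically to naturality with respect to general relations. You need an additional ingredient: either the direct computation the paper performs (expanding both sides by the definition of composition and collapsing the joins against the Kronecker-delta form of the graph), or a substitution lemma stating that pre- and post-composition with the graph of an isomorphism acts on a relation by reindexing its arguments, after which naturality follows from the pointwise definition of $\otimes$. This is easily repaired, but as written the step does not follow from what you proved.

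A second, lesser omission: you never verify that the identity relations and the composite of two algebraic $Q$-relations satisfy the closure axiom~\eqref{eq:crucialinequation}; you check it only for tensors and for the coherence isomorphisms. For composites this is not a triviality, and the paper's proof devotes its opening computation to it: one expands $(S \circ R)(a_1,c_1) \otimes \cdots \otimes (S \circ R)(a_n,c_n)$ by distributivity, regroups by commutativity, applies the axiom for $R$ and $S$ separately, and then bounds the resulting join over tuples $(b_1,\ldots,b_n)$ by the join over a single middle variable $b$ (instantiating $b = \sigma(b_1,\ldots,b_n)$). Without this check, composition has not been shown to be a well-defined operation on algebraic $Q$-relations, so the category carrying your monoidal structure has not actually been exhibited.
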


We can take the converse of an ordinary binary relation, simply by reversing its arguments. The notion of converse
generalizes smoothly to algebraic~$Q$-relations, in a manner that respects all the relevant categorical structure.
\begin{restatable}{proposition}{relconverse}[Converse]\label{prop:relconverse}
  \label{prop:relationconverse}
  Let~$\mathcal{E}$ be a topos, $(\Sigma,E)$ a variety in~$\mathcal{E}$, and~$(Q, \otimes, k, \bigvee)$ an internal commutative quantale.
  There is an identity on objects strict symmetric monoidal functor
  \begin{equation*}
    \converse{(-)} : \relvq{(\Sigma,E)}{Q}^{op} \rightarrow \relvq{(\Sigma,E)}{Q}
  \end{equation*}
  given by reversing arguments:
  \begin{equation*}
    \converse{R}(b,a) = R(a,b)
  \end{equation*}
\end{restatable}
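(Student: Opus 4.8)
The plan is to verify three things: that $\converse{(-)}$ is well-defined on morphisms (i.e. the converse of an algebraic $Q$-relation is again an algebraic $Q$-relation), that it is functorial and identity-on-objects, and that it strictly preserves the symmetric monoidal structure inherited from finite products. The key observation throughout is that commutativity of the quantale $Q$ is exactly what makes the argument-reversal symmetric, since composition of $Q$-relations involves the tensor $\otimes$, and reversing arguments swaps the order of the factors.

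First I would check well-definedness. Given an algebraic $Q$-relation $R : A \to B$, its converse $\converse{R}(b,a) = R(a,b)$ is certainly a $Q$-relation $B \to A$ in the sense of Definition \ref{def:qrelation}. To see that it satisfies the algebraic closure axiom \eqref{eq:crucialinequation}, fix $\sigma \in \Sigma$ of arity $n$ and compute, using the defining property of $R$ together with commutativity of $\otimes$,
\begin{equation*}
  \converse{R}(b_1,a_1) \otimes \dots \otimes \converse{R}(b_n,a_n)
  = R(a_1,b_1) \otimes \dots \otimes R(a_n,b_n)
  \leq R(\sigma(\vec a),\sigma(\vec b))
  = \converse{R}(\sigma(\vec b),\sigma(\vec a)).
\end{equation*}
Thus $\converse{R}$ is again an algebraic $Q$-relation. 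All of this is stated in the internal language of $\mathcal{E}$, so the reasoning is valid in an arbitrary topos, reading the operations $\sigma$ as the algebra structure maps.

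Next I would establish functoriality into the opposite category. Since argument reversal is evidently an involution, $\converse{(\converse{R})} = R$, it suffices to show $\converse{(S \circ R)} = \converse{R} \circ \converse{S}$ and that identities are preserved. For the identity, $\converse{1_A}(a_2,a_1) = 1_A(a_1,a_2) = \bigvee\{k \mid a_1 = a_2\}$, which is symmetric in its arguments, so $\converse{1_A} = 1_A$. For composition, unfolding the definition of composite $Q$-relation and using commutativity of $\otimes$ to swap the two factors inside the join gives
\begin{equation*}
  \converse{(S \circ R)}(c,a)
  = \bigvee_b R(a,b) \otimes S(b,c)
  = \bigvee_b \converse{S}(c,b) \otimes \converse{R}(b,a)
  = (\converse{R} \circ \converse{S})(c,a),
\end{equation*}
which is exactly the composite in the opposite category; here the joins agree because $\mathcal{E}$-indexed joins are preserved verbatim.

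Finally, strict preservation of the symmetric monoidal structure follows because the monoidal product on $\relvq{(\Sigma,E)}{Q}$ is built from the finite products of $\mathcal{E}$ (Proposition \ref{prop:relisacategory}) and argument reversal commutes with pairing. I would check on objects that $A \otimes B$ is sent to itself (the functor is identity-on-objects) and on morphisms that $\converse{(R \otimes S)} = \converse{R} \otimes \converse{S}$, together with coherence of the converse with the symmetry isomorphisms; all of these reduce to rearranging tuples and reindexing the defining joins, again using commutativity of $Q$. I expect the only genuinely substantive point—and hence the main obstacle—to be the verification of the composition law, since that is the one place where the quantale tensor $\otimes$ actively reorders, and it is precisely here that the commutativity hypothesis on $Q$ is indispensable; the remaining identities are routine bookkeeping in the internal language.
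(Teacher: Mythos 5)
Your proposal is correct and follows essentially the same route as the paper's own proof: check the algebraic closure axiom for the converse, preservation of identities, the composition law (the one place where commutativity of $\otimes$ genuinely does the work), and strict preservation of the tensor. Two inessential remarks: commutativity of $Q$ is not actually needed in the well-definedness or tensor steps, since the factors never change order there; and the paper additionally records the auxiliary fact $(f^{-1})_\circ = (f_\circ)^\circ$ for use in Proposition~\ref{prop:relisacategory}, but that is not part of the proposition being proved.
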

For ordinary sets and functions,
given a function
\begin{equation*}
  f :A \rightarrow B
\end{equation*}
we can form a binary relation using the graph of~$f$
\begin{equation*}
  \{ (a,b) \mid f(a) = b \}
\end{equation*}
The next proposition establishes that we can take graphs of morphisms in our underlying category of algebras,
in a manner respecting all the relevant categorical structure.
\begin{restatable}{proposition}{relgraph}[Graph]\label{prop:relgraph}
  Let~$\mathcal{E}$ be a topos, $(\Sigma,E)$ a variety in~$\mathcal{E}$, and~$(Q, \otimes, k, \bigvee)$ an internal commutative quantale.
  There is an identity on objects strict symmetric monoidal functor
  \begin{equation*}
    \graph{(-)} : \sigmaalg{\Sigma}{E} \rightarrow \relvq{(\Sigma,E)}{Q}
  \end{equation*}
  defined on morphism~$f : A \rightarrow B$ by
  \begin{equation*}
    f_\circ(a,b) = \bigvee \{ k \mid f(a) = b \}
  \end{equation*}
  The symmetric monoidal structure on~\sigmaalg{\Sigma}{E} is the finite product structure.
\end{restatable}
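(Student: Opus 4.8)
The plan is to verify three things in turn: that the assignment $f \mapsto f_\circ$ really does produce an algebraic $Q$-relation (so the functor is well defined on morphisms), that it is functorial, and that it is strictly symmetric monoidal. Throughout I would reason in the internal language of $\mathcal{E}$, noting that every inference used is intuitionistically valid and hence sound internally. The workhorse of the whole argument is the elementary identity
\begin{equation*}
  \left( \bigvee \{ k \mid \phi \} \right) \otimes \left( \bigvee \{ k \mid \psi \} \right) = \bigvee \{ k \mid \phi \wedge \psi \},
\end{equation*}
which follows by distributing $\otimes$ over the two joins and using that $k$ is the monoid unit; it reduces tensors of ``graph-shaped'' truth values to the conjunction of their defining predicates.

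For well-definedness, $f_\circ$ is patently a morphism $A \times B \rightarrow Q$, so the only content is the closure axiom~\eqref{eq:crucialinequation}. Applying the identity above $n-1$ times gives
\begin{equation*}
  f_\circ(a_1,b_1) \otimes \dots \otimes f_\circ(a_n,b_n) = \bigvee \{ k \mid f(a_1) = b_1 \wedge \dots \wedge f(a_n) = b_n \}.
\end{equation*}
Here is the one place the hypotheses on $f$ are used: because $f$ is a $(\Sigma,E)$-algebra homomorphism, the conjunction $\bigwedge_i f(a_i) = b_i$ internally entails $f(\sigma(a_1,\dots,a_n)) = \sigma(f(a_1),\dots,f(a_n)) = \sigma(b_1,\dots,b_n)$. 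Monotonicity of $\bigvee$ in its indexing predicate then bounds the display by $f_\circ(\sigma(a_1,\dots,a_n),\sigma(b_1,\dots,b_n))$. This step is the crux, and the reason the graph lands in \relvq{(\Sigma,E)}{Q} rather than merely in the plain relation category.

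Functoriality is then a short calculation. Preservation of identities is immediate by comparing the definition of $f_\circ$ at $f = 1_A$ with the identity relation $1_A(a_1,a_2) = \bigvee \{ k \mid a_1 = a_2 \}$ recorded before Definition~\ref{def:qrelation}. For composition, unfolding $g_\circ \circ f_\circ$ and applying the workhorse identity gives $\bigvee_b \bigvee \{ k \mid f(a) = b \wedge g(b) = c \}$; absorbing the outer join into an existential quantifier and using that $\exists b.\,(f(a) = b \wedge g(b) = c)$ is internally equivalent to $g(f(a)) = c$ (forwards by substitution, backwards by taking $b = f(a)$) identifies this with $(g \circ f)_\circ(a,c)$.

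Finally, strict symmetric monoidality. Since the functor is the identity on objects and the monoidal product of algebras is the finite product, \graph{(-)} preserves tensor products of objects and the unit on the nose. For morphisms $f : A \rightarrow B$ and $g : C \rightarrow D$, one application of the workhorse identity shows $(f \times g)_\circ((a,c),(b,d)) = f_\circ(a,b) \otimes g_\circ(c,d)$, which is exactly $(f_\circ \otimes g_\circ)((a,c),(b,d))$ for the product-inherited tensor of Proposition~\ref{prop:relisacategory}. As the associator, unitors and symmetry of \relvq{(\Sigma,E)}{Q} are precisely the graphs of the corresponding product isomorphisms in \sigmaalg{\Sigma}{E}, preservation of all coherence data, and hence strictness, is then immediate. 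I expect no serious obstacle beyond disciplined use of the internal language; the only conceptually essential ingredient is the homomorphism property of $f$ feeding into the closure axiom.
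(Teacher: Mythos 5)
Your proposal is correct and follows essentially the same route as the paper's proof: the same reduction of tensors of graph-shaped truth values to joins over conjunctions, the same use of the homomorphism property of $f$ to establish the closure axiom~\eqref{eq:crucialinequation}, and the same calculations for identities, composition, and $(f \times g)_\circ = f_\circ \otimes g_\circ$. If anything you are slightly more explicit than the paper at two points — spelling out the internal equivalence $\exists b.\,(f(a)=b \wedge g(b)=c) \Leftrightarrow g(f(a))=c$ in the composition step, and noting that strictness of the coherence data follows because the associator, unitors and symmetry of \relvq{(\Sigma,E)}{Q} are themselves defined as graphs — but these are refinements of detail, not a different argument.
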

The graph functor allows us to lift structures from the underlying category of algebras. The following
canonical comonoids are of particular conceptual importance.
\begin{restatable}{proposition}{canonicalcomonoid}
  \label{prop:canonicalcomonoid}
  Let~$\mathcal{E}$ be a category with finite products. Each object~$A$ carries a cocommutative comonoid structure via the canonical morphisms
  \begin{equation*}
    ! : A \rightarrow 1 \quad\text{ and }\quad \langle 1_A, 1_A \rangle : A \rightarrow A \times A
  \end{equation*}
  These morphisms satisfy the coherence condition~\eqref{eq:hypergraphcoherence}.
\end{restatable}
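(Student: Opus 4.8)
The plan is to verify each comonoid axiom together with the coherence condition~\eqref{eq:hypergraphcoherence} by repeatedly invoking the universal property of the finite products of~$\mathcal{E}$: a morphism into a product is determined by its composites with the two projections, and there is a unique morphism into the terminal object~$1$. Throughout I would write the candidate structure as~$\epsilon = {!} : A \to 1$ and~$\delta = \langle 1_A, 1_A\rangle : A \to A \times A$, so that the defining equations~$\pi_1 \circ \delta = \pi_2 \circ \delta = 1_A$ are available from the outset.

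First I would dispatch the two axioms that compare morphisms into~$A \times A$. For \emph{cocommutativity}, writing~$\sigma = \langle \pi_2, \pi_1\rangle$ for the symmetry, both projections of~$\sigma \circ \delta$ reduce to~$1_A$, whence~$\sigma \circ \delta = \delta$. For \emph{coassociativity}, each of~$(\delta \times 1_A)\circ\delta$ and~$(1_A \times \delta)\circ\delta$ has all three of its component projections onto~$A$ equal to~$1_A$, so the two agree after transport along the associator. Next come the \emph{counit} laws: the composite~$(\epsilon \times 1_A)\circ\delta : A \to 1 \times A$ has second projection~$1_A$ and first projection the unique map into~$1$, hence equals~$\langle \epsilon, 1_A\rangle$, which the left unitor~$1 \times A \cong A$ identifies with~$1_A$; the right-hand counit law is symmetric.

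It remains to check the coherence condition~\eqref{eq:hypergraphcoherence} relating~$\delta_{A\times B}$ to~$\delta_A$ and~$\delta_B$. Both sides are morphisms~$A \times B \to (A \times B)\times(A \times B)$, so once more it suffices to compare their two projections. The right-hand side~$\delta_{A\times B} = \langle 1_{A\times B}, 1_{A\times B}\rangle$ projects to~$1_{A\times B}$ on each side; and composing~$\delta_A \times \delta_B$ with the middle-four interchange isomorphism~$(A \times A)\times(B \times B) \cong (A \times B)\times(A \times B)$ assembled from associators and~$\sigma$, a short projection chase shows the left-hand side projects to~$1_{A\times B}$ on each side as well, so the two coincide.

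I do not anticipate a genuine obstacle: every step collapses to the universal property of products, and the only mild nuisance is the bookkeeping of the coherence isomorphisms (associators, unitors, and the symmetry) needed to reconcile the string-diagram equations with the strict-looking tupling notation. Read in the internal language of~$\mathcal{E}$, all of these verifications are the manifestly true element-wise identities~$a \mapsto (a,a)$ and~$(a,b) \mapsto ((a,b),(a,b))$, which makes the whole proposition essentially a formality.
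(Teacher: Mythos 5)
Your proof is correct, and it is essentially the canonical argument: the paper itself offers no proof of this proposition at all (its appendix only proves what it calls the key results, and this standard fact about finite products is taken as known), so your universal-property verification of coassociativity, counitality, cocommutativity, and the coherence condition~\eqref{eq:hypergraphcoherence} supplies exactly the routine check the authors omit. The one point worth making explicit when writing it up is that the counitor and associator in this cartesian monoidal structure are themselves built from projections and pairing, so every equation you assert really does reduce to the defining identities $\pi_i \circ \langle f, g \rangle = f, g$ and uniqueness of maps into~$1$, just as you indicate.
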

Finally, we are in a position to establish that our categories of algebraic~$Q$-relations are hypergraph categories.
\begin{restatable}{theorem}{relvqhypergraph}
  \label{thm:relvqhypergraph}
  Let~$\mathcal{E}$ be a topos, $(\Sigma,E)$ a variety in~$\mathcal{E}$, and~$(Q, \otimes, k, \bigvee)$ an internal commutative quantale.
  The category~\relvq{(\Sigma,E)}{Q} is a hypergraph category. The cocommutative comonoid structure is given by the graphs of
  the canonical comonoids described in proposition~\ref{prop:canonicalcomonoid}, and the monoid structure is given by
  their converses.
\end{restatable}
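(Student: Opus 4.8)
The plan is to transport the canonical comonoid of Proposition~\ref{prop:canonicalcomonoid} into \relvq{(\Sigma,E)}{Q} along the graph functor, dualize it via converse to obtain the monoid, and then verify by hand the only two axioms that are genuinely relational, namely the special law and the Frobenius law. For each object~$A$ I would set~$\delta_A = \graph{(\langle 1_A, 1_A\rangle)}$ and~$\epsilon_A = \graph{(!)}$ as the graphs of the canonical comonoid maps, and~$\mu_A = \converse{\delta_A}$,~$\eta_A = \converse{\epsilon_A}$ as their converses, exactly as the theorem prescribes. Since the graph functor of Proposition~\ref{prop:relgraph} is strict symmetric monoidal, it sends the cocommutative comonoid~$(\langle 1_A, 1_A\rangle, !)$ to a cocommutative comonoid~$(\delta_A, \epsilon_A)$ in \relvq{(\Sigma,E)}{Q}: strict monoidal functors preserve comonoid objects together with their cocommutativity, and the coherence condition~\eqref{eq:hypergraphcoherence} is an equation of composites of~$\delta$'s, symmetries and identities, all of which~$\graph{(-)}$ preserves on the nose, so it transports from Proposition~\ref{prop:canonicalcomonoid}. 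Applying the converse functor of Proposition~\ref{prop:relationconverse}, which is identity-on-objects, contravariant and strict symmetric monoidal, turns this comonoid into a commutative monoid~$(\mu_A, \eta_A)$ satisfying the dual coherence condition. This disposes of the (co)monoid laws, (co)commutativity, and both coherence conditions purely formally.

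It then remains to verify the special and Frobenius axioms, and for these I would unfold the definitions. Writing~$\bot = \bigvee \emptyset$ for the bottom of~$Q$, which annihilates~$\otimes$ by distributivity, the structure maps are the expected relational diagonals: by the formula of Proposition~\ref{prop:relgraph}, $\delta_A(a,(b,c))$ equals~$k$ when~$a = b = c$ and~$\bot$ otherwise, and~$\mu_A = \converse{\delta_A}$ is its transpose. Using the composition formula~$(S \circ R)(a,c) = \bigvee_b R(a,b) \otimes S(b,c)$ together with~$k \otimes k = k$, a short calculation shows that~$\mu_A \circ \delta_A$ takes value~$k$ on the diagonal and~$\bot$ off it, hence equals~$1_A$; this is the special axiom.

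The Frobenius law is where the real work lies, so I would handle it last. The key simplification is that the central term~$\delta_A \circ \mu_A$ is automatically self-converse: since~$\converse{\mu_A} = \delta_A$ and~$\converse{\delta_A} = \mu_A$ and converse is contravariant and monoidal, we get~$\converse{(\delta_A \circ \mu_A)} = \converse{\mu_A} \circ \converse{\delta_A} = \delta_A \circ \mu_A$, and likewise the two ``claw'' composites~$(\mu_A \otimes 1) \circ (1 \otimes \delta_A)$ and~$(1 \otimes \mu_A) \circ (\delta_A \otimes 1)$ are each other's converses. Hence it suffices to compute a single claw and the central term and check that they agree, the remaining equality following by applying~$\converse{(-)}$. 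Expanding the relevant composite along the formula above, treating~$A \otimes A \otimes A$ as triples since the monoidal structure is inherited from the finite products of~$\mathcal{E}$, both~$\delta_A \circ \mu_A$ and~$(\mu_A \otimes 1) \circ (1 \otimes \delta_A)$ reduce to the endo-relation on~$A \otimes A$ that takes value~$k$ exactly when all four arguments coincide and~$\bot$ otherwise. I expect the index bookkeeping in this final reduction, rather than any conceptual difficulty, to be the main obstacle; the self-converse observation is precisely what keeps it manageable.
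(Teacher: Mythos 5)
Your proposal is correct and follows essentially the same route as the paper: define the comonoid as the graph of the canonical diagonal and terminal maps, obtain the monoid by converse, let the strict symmetric monoidal graph and converse functors transport the (co)monoid laws and coherence conditions, and verify the special and Frobenius axioms by direct computation, with every composite reducing to an ``all arguments equal'' relation valued in $k$ or $\bigvee \emptyset$. Your converse-symmetry observation for the Frobenius law (that the central term is self-converse and the two claw composites are each other's converses) is a small refinement the paper does not make---it computes the claw composites directly and leaves the central term implicit---but the argument is the same in substance.
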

We quickly return to one of the examples discussed in the introduction.
\begin{example}
  \label{ex:convex}
  The convex algebras discussed in the introduction can be presented by a family of binary operations for forming pairwise convex combinations
  \begin{equation*}
    +^p \text{ where } p \in (0,1)
  \end{equation*}
  satisfying suitable equations. Writing~\signature{Convex} for this signature,
  we can construct~\convexrel as~\relvq{\signature{Convex}}{2}, where~$2$ is the two element set.
\end{example}

\section{Spans}
\label{sec:spans}
Generalizing the truth values, algebraic structure and ambient category has provided three degrees of freedom
for describing custom hypergraph categories. Currently we can vary the underlying topos, quantale and choice of algebraic structure.
We now investigate a fourth, final direction of variation.

If we consider a span of sets
\begin{equation*}
  \begin{tikzpicture}[scale=0.5, node distance=2cm, ->]
    \node (apex) {$X$};
    \node[below left of=apex] (bl) {$A$};
    \node[below right of=apex] (br) {$B$};
    \draw (apex) to node[above left]{$f$} (bl);
    \draw (apex) to node[above right]{$g$} (br);
  \end{tikzpicture}
\end{equation*}
we can consider an element~$x \in X$ as a proof witness relating~$f(x)$ and~$g(x)$.
Two spans are composed by forming the pullback
\begin{equation*}
  \begin{tikzpicture}[scale=0.5, node distance=2cm, ->]
    \node (bl) {$A$};
    \node[above right of=bl] (x) {$X$};
    \node[above right of=x] (p) {$X \times_B Y$};
    \node[below right of=x] (bm) {$B$};
    \node[above right of=bm] (y) {$Y$};
    \node[below right of=y] (br) {$C$};
    \draw (x) to node[above left]{$f$} (bl);
    \draw (x) to node[above right]{$g$} (bm);
    \draw (y) to node[above left]{$h$} (bm);
    \draw (y) to node[above right]{$k$} (br);
    \draw (p) to node[above left]{$p_1$} (x);
    \draw (p) to node[above right]{$p_2$} (y);
  \end{tikzpicture}
\end{equation*}
Recall that pullbacks in~\cset are given explicitly by
\begin{equation*}
  X \times_B Y = \{ (x,y) \mid g(x) = h(y) \}
\end{equation*}
Therefore, a pair~$(x,y)$ relates $a$ and~$c$ exactly if~$x$ relates~$a$ to some~$b$ and this~$b$ is related to~$c$ by~$y$. So, at least for the category~\cset,
we can think of spans as proof relevant relations. This is the intuition we now pursue, starting by adjusting the
notion of~$Q$-relation in definition~\ref{def:qrelation} to the setting of spans.
\begin{definition}[$Q$-span]
  \label{def:qspan}
  Let~$\mathcal{E}$ be a finitely complete category, and~$(Q, \otimes, k)$ an internal monoid. A \define{$Q$-span}
  of type~$A \rightarrow B$ is a quadruple~$(X,f,g,\chi)$ where
  \begin{itemize}
  \item $(X,f : X \rightarrow A, g : X \rightarrow B)$ is a span in~$\mathcal{E}$
  \item $\chi : X \rightarrow Q$ is a~$\mathcal{E}$-morphism, referred to as the~\define{characteristic morphism}.
  \end{itemize}
  Two $Q$-spans $(X,f, g, \chi), (Y,h,k,\xi)$ are composed by composing their underlying spans by pullback, and taking the resulting characteristic
  morphism to be
  \begin{equation*}
    X \times_C Y \xrightarrow{ \langle p_1, p_2 \rangle} X \times Y \xrightarrow{\chi \times \xi} Q \times Q \xrightarrow{ \otimes } Q 
  \end{equation*}
  where~$p_1$ and~$p_2$ are the pullback projections.
  
  A~\define{morphism of~$Q$-spans} between two~$Q$-spans of type~$A \rightarrow B$
  \begin{equation*}
    \alpha : (X_1, f_1, g_1, \chi_1) \rightarrow (X_2, f_2, g_2, \chi_2)
  \end{equation*}
  is a~$\mathcal{E}$-morphism~$\alpha : X_1 \rightarrow X_2$ such that
  \begin{align*}
    f_1 = f_2 \circ \alpha \qquad
    g_1 = g_2 \circ \alpha \qquad
    \chi_1 = \chi_2 \circ \alpha
  \end{align*}
\end{definition}
\begin{remark}
  When discussing $Q$-spans in the remainder of this paper, we actually intend isomorphism classes of spans with respect to
  the homomorphisms of definition~\ref{def:qspan}. This convention is common when considering categories of ordinary spans,
  where composition of spans via pullback is only defined up to isomorphism.
  All definitions and calculations using representatives will respect this isomorphism structure. These
  isomorphism classes of~$Q$-spans form a category~\spanq{Q}. If we write~$\chi_k$ for the constant morphism
  \begin{equation*}
    \chi_k = A \xrightarrow{!} 1 \xrightarrow{k} Q
  \end{equation*}
  then the identity at~$A$ is given by the~$Q$-span~$(A,1,1,\chi_k)$.
\end{remark}
These spans with configurable truth values provide another construction of hypergraph categories.
\begin{restatable}{theorem}{simplespans}
  \label{thm:simplespans}
  Let~$\mathcal{E}$ be a finitely complete category, and~$(Q, \otimes, k)$ an internal commutative monoid.
  The category~\spanq{Q} is a hypergraph category.
\end{restatable}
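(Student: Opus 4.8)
The plan is to equip each object of $\spanq{Q}$ with a special commutative Frobenius structure by decorating the standard Frobenius data of the ordinary span category $\spanq{\mathcal{E}}$ with the constant characteristic morphism $\chi_k$, and then to deduce every coherence equation from the corresponding equation in $\spanq{\mathcal{E}}$. For $\mathcal{E}$ finitely complete it is standard that $\spanq{\mathcal{E}}$ is a hypergraph category, with counit and comultiplication on $A$ given by the spans $(A, 1_A, {!})$ and $(A, 1_A, \langle 1_A, 1_A \rangle)$ and with unit and multiplication their converses. My strategy is to transport this structure along a strict symmetric monoidal, identity-on-objects functor $K : \spanq{\mathcal{E}} \to \spanq{Q}$ that sends each span to itself decorated by $\chi_k$.

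First I would set up the symmetric monoidal structure on $\spanq{Q}$. On objects the tensor is the finite product of $\mathcal{E}$ with unit the terminal object; on a pair of $Q$-spans it takes the product of apexes and legs, with characteristic morphism $X \times Y \xrightarrow{\chi \times \xi} Q \times Q \xrightarrow{\otimes} Q$, and the symmetry is induced by the swap maps of $\mathcal{E}$. The underlying spans assemble into the known symmetric monoidal structure of $\spanq{\mathcal{E}}$, so all coherence isomorphisms are already available at the level of apexes; what remains is to check that they are morphisms of $Q$-spans and that the tensor is a symmetric monoidal bifunctor. This last point is the one genuine use of commutativity of $(Q, \otimes, k)$: both the interchange law and naturality of the braiding force a transposition of two interior characteristic factors of the form $\chi \otimes \xi$, which is legitimate precisely because $\otimes$ is commutative, while associativity and the unit law $k \otimes k = k$ handle the remaining rearrangements.

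Next I would verify that $K$ is a well-defined strict symmetric monoidal functor. Functoriality holds because the composite of two $\chi_k$-decorated spans again carries $\chi_k$: the composite characteristic morphism factors through $(k,k)$ and $k \otimes k = k$; strict monoidality follows similarly, and $K$ preserves the terminal object and the symmetry on the nose. Since $K$ is identity on objects and strictly preserves composition, tensor, and symmetry, the images under $K$ of the Frobenius data of $\spanq{\mathcal{E}}$ equip every object of $\spanq{Q}$ with a special commutative Frobenius algebra, and each hypergraph axiom in $\spanq{Q}$ is simply the $K$-image of the corresponding valid equation in $\spanq{\mathcal{E}}$. Concretely, the cocommutative comonoid on $A$ is $(A, 1_A, {!}, \chi_k)$ together with $(A, 1_A, \langle 1_A, 1_A \rangle, \chi_k)$, and the commutative monoid is given by their converses, obtained by reversing the two legs.

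The main obstacle is bookkeeping rather than conceptual difficulty: I must confirm that every pullback arising in the Frobenius, special, and coherence axioms carries, after composition, exactly $\chi_k$, so that the $K$-image argument genuinely applies, and I must carry this out internally in an arbitrary finitely complete $\mathcal{E}$ and only up to the isomorphism classes of Definition~\ref{def:qspan}. These reductions rely on the monoid laws for $(Q, \otimes, k)$ together with the stability of constant morphisms under the relevant maps of $\mathcal{E}$; commutativity enters only through the symmetric monoidal bifunctoriality of the tensor, whereas associativity and unitality already suffice for the category and comonoid axioms. Once these routine verifications are discharged, the theorem follows.
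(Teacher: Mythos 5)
Your proposal is correct, and it reaches the theorem by a genuinely different route than the paper, even though the structure it produces is literally the same: your decorated spans $(A,1_A,{!},\chi_k)$ and $(A,1_A,\langle 1_A,1_A\rangle,\chi_k)$, with monoids given by reversing legs, are exactly the graphs of the canonical comonoids of proposition~\ref{prop:canonicalcomonoid} and their converses (cf.\ theorem~\ref{thm:spanvqhypergraph}), and your tensor is the bifunctor of lemma~\ref{lem:spanqtensor}. The difference is in how the axioms get discharged. The paper never treats $\catname{Span}(\mathcal{E})$ as an intermediary: it proves the category laws of \spanq{Q} directly (lemma~\ref{lem:monoidspancat}), constructs converse, graph and tensor by hand, lifts the monoidal coherence isomorphisms from $\mathcal{E}$ along the graph functor (surjective on objects and commuting with the tensor), reduces their naturality to explicit span isomorphisms via lemma~\ref{lem:spancalc}, and leaves the Frobenius and special equations as direct pullback computations (carried out in internal language for relations in theorem~\ref{thm:relvqhypergraph}, and only sketched for spans). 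You instead package the decoration as a strict symmetric monoidal, identity-on-objects functor $K : \catname{Span}(\mathcal{E}) \rightarrow \spanq{Q}$ --- whose functor laws come down to $k \otimes k = k$ and the stability of constant morphisms under composition --- and transport the standard special commutative Frobenius structure of ordinary spans, so that every Frobenius, special and comonoid-coherence equation is the $K$-image of an equation already valid in $\catname{Span}(\mathcal{E})$, with no pullback chasing at all. What your route buys is precisely the elimination of the paper's ``very unpleasant check'' for the hypergraph axioms, at the price of citing the folklore fact that $\catname{Span}(\mathcal{E})$ is a hypergraph category; what the paper's route buys is self-containedness, plus lemmas that are reused verbatim for the algebraic categories \spanvq{(\Sigma,E)}{Q}, where your $K$ would no longer suffice since general algebraic $Q$-spans are not $\chi_k$-decorated. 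One point you handle correctly and should keep explicit: transport along $K$ cannot establish the symmetric monoidal structure of \spanq{Q} itself, since naturality of the associator, unitors and braiding quantifies over arbitrary $Q$-spans rather than only $\chi_k$-decorated ones; your separate verification of bifunctoriality and naturality --- with commutativity of $Q$ entering exactly at interchange and braiding-naturality --- is where the real computational content remains, and there your argument coincides with lemma~\ref{lem:spanqtensor} and the naturality check in the paper's own proof.
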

We will not dwell on the explicit symmetric monoidal and hypergraph structures claimed in theorem~\ref{thm:simplespans}.
Once we have incorporated algebraic structure into our span constructions,
the required details can be found in proposition~\ref{prop:spanvqsmc} and theorem~\ref{thm:spanvqhypergraph}.

The key step now is incorporate algebraic structure into the picture, paralleling the ideas of definition~\ref{def:algebraicqrelation}.
In this case, things are slightly more complicated as we have to explicitly administer the proof witnesses in the spans. We also
must introduce an ordering on our truth values in order to specify the necessary axiom.
\begin{definition}
  Let~$\mathcal{E}$ be a topos, $(\Sigma,E)$ a variety in~$\mathcal{E}$, and~$(Q, \otimes, k, \leq)$ an internal partially ordered commutative monoid.
  For~$(\Sigma,E)$-algebras~$A$ and~$B$, an~\define{algebraic}~$Q$-span is a quadruple~$(X,f,g,\chi)$ which is a~$Q$-span between
  the underlying~$\mathcal{E}$-objects satisfying the following axiom.

  For every~$\sigma \in \Sigma$ if
  \begin{equation*}
    f(x_1) = a_1 \wedge g(x_1) = b_1 \wedge ... \wedge f(x_n) = a_n \wedge g(x_n) = b_n
  \end{equation*}
  then there exists~$x$ such that
  \begin{equation*}
    f(x) = \sigma(a_1,...,a_n) \wedge g(x) = \sigma(b_1,...,b_n)
  \end{equation*}
  and
  \begin{equation*}
    \chi(x_1) \otimes ... \otimes \chi(x_n) \leq \chi(x)
  \end{equation*}
  $(\Sigma,E)$-algebras and algebraic~$Q$-spans form a category~\spanvq{(\Sigma,E)}{Q} with identities and composition
  given as for the underlying~$Q$-spans.
\end{definition}
As with the algebraic~$Q$-relations in section~\ref{sec:algebraicqrelations}, we obtain
a symmetric monoidal category with analogues of relational converse and taking graphs.
\begin{restatable}{proposition}{spanvqsmc}
  \label{prop:spanvqsmc}
  Let~$\mathcal{E}$ be a topos, $(\Sigma,E)$ a variety in~$\mathcal{E}$, and~$(Q, \otimes, k, \leq)$ an internal partially ordered commutative monoid.
  The category~\spanvq{(\Sigma,E)}{Q} is a symmetric monoidal category. The symmetric monoidal structure is inherited from the
  finite product structure in~$\mathcal{E}$.
\end{restatable}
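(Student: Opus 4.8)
The plan is to equip $\spanvq{(\Sigma,E)}{Q}$ with the evident product-based structure and reduce the coherence data to the finite product structure of the category of algebras. On objects I set $A \otimes B = A \times B$, the product algebra (with each $\sigma$ acting componentwise), whose underlying object is the product in $\mathcal{E}$; the monoidal unit is the terminal algebra. On a pair of algebraic $Q$-spans $(X,f,g,\chi) : A \to B$ and $(Y,h,k,\xi) : C \to D$ I define the tensor to have apex $X \times Y$, legs $f \times h$ and $g \times k$, and characteristic morphism $X \times Y \xrightarrow{\chi \times \xi} Q \times Q \xrightarrow{\otimes} Q$. Each of these constructions is defined via products and so respects isomorphism of spans, hence descends to the isomorphism classes we work with.

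The first substantive step is to check that this tensor of two algebraic $Q$-spans is again algebraic. Reasoning in the internal language, suppose witnesses $(x_i,y_i)$ realize $(\sigma(a_1,\ldots,a_n),\sigma(c_1,\ldots,c_n))$ and $(\sigma(b_1,\ldots,b_n),\sigma(d_1,\ldots,d_n))$ as demanded by the axiom, so that $f(x_i)=a_i$, $g(x_i)=b_i$, $h(y_i)=c_i$, $k(y_i)=d_i$. The algebraic axioms for $X$ and for $Y$ separately produce witnesses $x$ and $y$ with the correct leg images and with $\chi(x_1)\otimes\cdots\otimes\chi(x_n)\leq\chi(x)$ and $\xi(y_1)\otimes\cdots\otimes\xi(y_n)\leq\xi(y)$. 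The pair $(x,y)$ is then the required witness, since the product algebra computes each $\sigma$ componentwise; the inequality on characteristic morphisms follows by rewriting $\bigotimes_i(\chi(x_i)\otimes\xi(y_i))$ as $(\bigotimes_i\chi(x_i))\otimes(\bigotimes_i\xi(y_i))$ using commutativity and associativity of $\otimes$, and then applying monotonicity of $\otimes$ in the partial order. This is precisely where the \emph{partially ordered commutative monoid} hypotheses on $Q$ are used.

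Next I would establish functoriality of $\otimes$. Preservation of identities is immediate from the description of the identity $Q$-span. For preservation of composition I would invoke the standard fact that in a finitely complete category products commute with pullbacks, yielding a canonical isomorphism $(X_1 \times_{B_1} Y_1)\times(X_2 \times_{B_2} Y_2) \cong (X_1 \times X_2)\times_{B_1 \times B_2}(Y_1 \times Y_2)$ between the apex of the tensor of two composites and the apex of the composite of the two tensors; under this isomorphism the two prescriptions for the composite characteristic morphism agree after one rearrangement of four factors by commutativity and associativity of $\otimes$. I would then supply the associator, unitors, and braiding as spans whose left leg is an identity, whose right leg is the corresponding canonical product isomorphism of algebras, and whose characteristic morphism is the constant $\chi_k$. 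Such graph-like spans satisfy the algebraic axiom, taking $x=\sigma(x_1,\ldots,x_n)$ and using $k\otimes\cdots\otimes k = k$, and composing with one of them merely relabels a leg of the other span by the underlying homomorphism; hence their naturality and the pentagon, triangle, and hexagon equations reduce to the already-known coherence of the finite product structure on $\sigmaalg{\Sigma}{E}$.

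I expect the main obstacle to be the interchange law, specifically reconciling the two presentations of the composite characteristic morphism across the product–pullback exchange isomorphism: keeping the bookkeeping of the four characteristic morphisms straight while invoking commutativity and associativity of $Q$ at the right places is the only point demanding real care. The preservation of the algebraic axiom under tensor is conceptually the heart of the argument, but it is a short calculation once monotonicity of $\otimes$ is in play, and everything involving the structural isomorphisms is routine because all their components carry the unit truth value $k$.
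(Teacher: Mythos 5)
Your proposal is correct and takes essentially the same approach as the paper: the same product-based tensor with characteristic morphism $\otimes \circ (\chi \times \xi)$, the same key check that the tensor of algebraic $Q$-spans is algebraic via componentwise witnesses together with commutativity and monotonicity of $\otimes$, and the same treatment of the coherence isomorphisms as graph-like spans with constant characteristic morphism $\chi_k$, whose naturality and coherence reduce (via the relabelling behaviour of composition with graphs) to the finite product structure of the base. The only difference is organizational: the paper factors the underlying non-algebraic work into separate lemmas on \spanq{Q} (tensor bifunctor, graph functor, relabelling lemma, and the hypergraph theorem for plain $Q$-spans), so that its proof of this proposition consists solely of the algebraic-preservation checks you also supply.
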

\begin{restatable}{proposition}{conversealgebraic}[Converse]
  \label{prop:spanconverse}
  Let~$\mathcal{E}$ be a topos, $(\Sigma,E)$ a variety in~$\mathcal{E}$, and~$(Q, \otimes, k, \leq)$ an internal partially ordered commutative monoid.
  There is an identity on objects strict symmetric monoidal functor
  \begin{equation*}
    \converse{(-)} : \spanvq{(\Sigma,E)}{Q}^{op} \rightarrow \spanvq{(\Sigma,E)}{Q}
  \end{equation*}
  given by reversing the legs of the underlying span:
  \begin{equation*}
    \converse{(X,f,g,\chi)} = (X,g,f,\chi)
  \end{equation*}
\end{restatable}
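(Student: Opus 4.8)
The plan is to check four things in turn: that $\converse{(-)}$ is well defined on algebraic $Q$-spans and on the isomorphism classes of definition~\ref{def:qspan}, that it is the identity on objects, that it is a contravariant functor, and that it strictly preserves the symmetric monoidal structure established in proposition~\ref{prop:spanvqsmc}. First I would verify well-definedness. Given an algebraic $Q$-span $(X,f,g,\chi)$ of type $A \rightarrow B$, the reversed quadruple $(X,g,f,\chi)$ is a span of type $B \rightarrow A$, and its defining axiom is obtained from the original one simply by interchanging the roles of $f$ and $g$ together with the witness families $a_i$ and $b_i$. Since that axiom is already symmetric in these two families of data, it continues to hold, so the converse of an algebraic $Q$-span is again algebraic. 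The same interchange shows that any morphism $\alpha$ of $Q$-spans is simultaneously a morphism between the converses: the conditions $f_1 = f_2 \circ \alpha$ and $g_1 = g_2 \circ \alpha$ are merely swapped, while $\chi_1 = \chi_2 \circ \alpha$ is untouched. Hence $\converse{(-)}$ descends to isomorphism classes, all of these checks being carried out in the internal language of $\mathcal{E}$, and that $\converse{(-)}$ is the identity on objects is immediate from its definition.

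Next I would establish contravariant functoriality. Preservation of identities is trivial, since the identity span $(A,1,1,\chi_k)$ is fixed by reversing its two equal legs. The substantive case is composition, where I must show $\converse{S \circ R} = \converse{R} \circ \converse{S}$ for $R = (X,f,g,\chi)$ of type $A \rightarrow B$ and $S = (Y,h,k,\xi)$ of type $B \rightarrow C$. Unfolding the composite through its defining pullback, the left-hand side has apex $X \times_B Y$, legs $k \circ p_2$ and $f \circ p_1$, and characteristic morphism sending $(x,y)$ to $\chi(x) \otimes \xi(y)$; the right-hand composite $\converse{R} \circ \converse{S}$ has apex $Y \times_B X$, legs $k \circ p_1$ and $f \circ p_2$ (the projections of the second pullback), and characteristic morphism sending $(y,x)$ to $\xi(y) \otimes \chi(x)$. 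The canonical symmetry isomorphism $X \times_B Y \cong Y \times_B X$ matches the two pairs of legs exactly, and identifies the two characteristic morphisms precisely when $\chi(x) \otimes \xi(y) = \xi(y) \otimes \chi(x)$. This last equation is exactly the commutativity of $\otimes$, so the two composites coincide as isomorphism classes of $Q$-spans.

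Finally I would treat the strict symmetric monoidal structure. Because the tensor product of proposition~\ref{prop:spanvqsmc} is formed componentwise from the finite products of $\mathcal{E}$, taking products of apexes and legs and combining characteristic morphisms through $\otimes$, whereas converse acts by reversing the legs of each factor independently, the two operations commute on the nose; the monoidal unit, having coincident legs, is manifestly fixed. The associators, unitors, and symmetry are graphs of the corresponding product morphisms of $\mathcal{E}$, and reversing the legs of the graph of an isomorphism returns the graph of its inverse. Since the symmetry's underlying swap in $\mathcal{E}$ is its own inverse, $\converse{(-)}$ sends the symmetry of the opposite category to that of \spanvq{(\Sigma,E)}{Q}. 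As no mediating isomorphisms are introduced anywhere, the functor is strict.

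I expect the main obstacle to be the composition case of functoriality. There one must exhibit the pullback-symmetry isomorphism explicitly and confirm that it respects all four components of a $Q$-span, and it is precisely the characteristic-morphism component that forces the commutativity hypothesis on $Q$. The remaining verifications are conceptually routine, amounting to tracking the effect of leg-reversal through the explicit definitions inherited from the underlying $Q$-spans.
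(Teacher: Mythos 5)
Your proposal is correct and follows essentially the same route as the paper: the paper delegates the work to its lemma on converses of plain $Q$-spans (identity preservation, the pullback-swap isomorphism, and commutativity of $\otimes$ for the characteristic morphisms), to the observation in proposition~\ref{prop:spanvqsmc} that the algebraic span condition is symmetric in domain and codomain, and to the remark that compatibility with the tensor is trivial. You simply inline all of these steps, identifying the same key points (the swap isomorphism $X \times_B Y \cong Y \times_B X$ and the essential use of commutativity of $Q$) that the paper's decomposed proof relies on.
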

\begin{restatable}{proposition}{graphalgebraic}[Graph]
  Let~$\mathcal{E}$ be a topos, and~$(Q, \otimes, k, \leq)$ an internal partially ordered commutative monoid.
  There is an identity on objects strict symmetric monoidal functor
  \begin{equation*}
    \graph{(-)} : \sigmaalg{\Sigma}{E} \rightarrow \spanvq{(\Sigma,E)}{Q}
  \end{equation*}
  with the action on morphism~$f : A \rightarrow B$ given by
  \begin{equation*}
    \graph{f} = (A,1,f,\chi_k)
  \end{equation*}
\end{restatable}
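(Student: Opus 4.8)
The plan is to follow the template established for the relational graph functor in Proposition~\ref{prop:relgraph}, adapting it to administer the explicit proof witnesses carried by spans. Concretely, I would verify in turn that the assignment $\graph{(-)}$ (i)~lands in algebraic $Q$-spans, (ii)~is functorial, (iii)~is identity on objects, and (iv)~strictly preserves the symmetric monoidal structure of Proposition~\ref{prop:spanvqsmc}.

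First I would check well-definedness: that $\graph{f} = (A, 1_A, f, \chi_k)$ really is an algebraic $Q$-span of type $A \to B$. Its underlying span is $A \xleftarrow{1_A} A \xrightarrow{f} B$, so the algebraic axiom requires, given $x_1, \dots, x_n$ with $1_A(x_i) = a_i$ and $f(x_i) = b_i$ (hence $x_i = a_i$ and $b_i = f(a_i)$), a witness $x$ with $1_A(x) = \sigma(a_1,\dots,a_n)$, with $f(x) = \sigma(b_1,\dots,b_n)$, and with $\chi_k(x_1) \otimes \dots \otimes \chi_k(x_n) \leq \chi_k(x)$. Taking $x = \sigma(a_1,\dots,a_n)$ handles the first leg on the nose, the second leg follows because $f$ is a $(\Sigma,E)$-algebra homomorphism, and the inequality reduces to $k \otimes \dots \otimes k \leq k$, which holds by reflexivity once we observe that the unit $k$ is idempotent for $\otimes$. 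This is the only step that genuinely uses the homomorphism property of $f$; the rest is bookkeeping.

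Next I would establish functoriality. Identities are immediate: $\graph{(1_A)} = (A, 1_A, 1_A, \chi_k)$ is precisely the identity $Q$-span recorded in the remark following Definition~\ref{def:qspan}. For composition, given $f : A \to B$ and $g : B \to C$, I would compute $\graph{g} \circ \graph{f}$ by the pullback recipe of Definition~\ref{def:qspan}. Since the left leg of $\graph{g}$ is $1_B$, the pullback of $f$ along $1_B$ is $A$ itself, with projections $1_A$ and $f$; the composite span therefore has apex $A$, legs $1_A$ and $g \circ f$, and characteristic morphism sending each point to $k \otimes k = k$. Up to the canonical span isomorphism (working, as the remark stipulates, with isomorphism classes), this is exactly $\graph{(g \circ f)}$. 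The one point requiring care is that span composition is only defined up to isomorphism of apexes, so the equation $\graph{g} \circ \graph{f} = \graph{(g \circ f)}$ is an equality of isomorphism classes, mediated by the evident iso $A \cong A \times_B B$.

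Finally, for strict symmetric monoidality, the functor is identity on objects, and since the monoidal product on both categories is inherited from the finite products of $\mathcal{E}$, the tensor of objects agrees on the nose. On morphisms, $\graph{f} \otimes \graph{f'}$ has apex $A \times A'$, legs $1_{A \times A'}$ and $f \times f'$, and characteristic morphism $k \otimes k = k$, which is visibly $\graph{(f \times f')}$. Preservation of the coherence data is where I expect the only real subtlety: I would appeal to the fact, implicit in the construction of Proposition~\ref{prop:spanvqsmc}, that the associators, unitors and symmetry of $\spanvq{(\Sigma,E)}{Q}$ are themselves the graphs of the corresponding finite-product isomorphisms in $\sigmaalg{\Sigma}{E}$; functoriality of $\graph{(-)}$ then forces these structural morphisms to be preserved on the nose, yielding strictness rather than mere strong monoidality. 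Confirming that the span-level structural maps really are these graphs is the main obstacle, and it is the only step where I would need to unfold the precise definitions used in the proof of Proposition~\ref{prop:spanvqsmc}.
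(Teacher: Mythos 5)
Your proposal is correct and takes essentially the same route as the paper, whose own proof simply defers to its supporting lemmas: the span-level graph functor of Lemma~\ref{lem:monoidspangraph}, the algebraicity check (constant characteristic morphism plus the homomorphism property of $f$) carried out in the proof of Proposition~\ref{prop:spanvqsmc}, and the compatibility of graphs with the tensor from Lemma~\ref{lem:spanqtensor}. The details you spell out --- the witness $x=\sigma(a_1,\dots,a_n)$, the pullback-along-identity isomorphism for composition, and the fact that the coherence morphisms of $\spanvq{(\Sigma,E)}{Q}$ are graphs of the finite-product isomorphisms in $\mathcal{E}$ --- are precisely the content of those lemmas.
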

As before, we can exploit the graph construction and the canonical comonoids of proposition~\ref{prop:canonicalcomonoid} to
establish the existence of a hypergraph structure.
\begin{restatable}{theorem}{spanvqhypergraph}
  \label{thm:spanvqhypergraph}
  Let~$\mathcal{E}$ be a topos, $(\Sigma,E)$ a variety in~$\mathcal{E}$, and~$(Q, \otimes, k, \leq)$ an internal partially ordered commutative monoid.
  The category~\spanvq{(\Sigma,E)}{Q} is a hypergraph category. The cocommutative comonoid structure is given by the graphs of
  the canonical comonoids described in proposition~\ref{prop:canonicalcomonoid}, and the monoid structure is given by
  their converses.
\end{restatable}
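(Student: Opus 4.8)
The plan mirrors the proof of theorem~\ref{thm:relvqhypergraph} for relations: I would transport the hypergraph structure from the underlying category of algebras through the graph functor, and then verify by hand only those axioms that genuinely mix the monoid and comonoid structures. By proposition~\ref{prop:spanvqsmc} the category $\spanvq{(\Sigma,E)}{Q}$ is already symmetric monoidal, so it remains to equip each object with a coherent special commutative Frobenius algebra. On each object $A$ I would take the comultiplication and counit to be $\delta = \graph{\langle 1_A, 1_A \rangle}$ and $\epsilon = \graph{!}$, and the multiplication and unit to be their converses $\mu = \converse{\delta}$ and $\eta = \converse{\epsilon}$, exactly as the statement claims. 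These are legitimate morphisms of $\spanvq{(\Sigma,E)}{Q}$ because $\langle 1_A, 1_A \rangle$ and $!$ are algebra homomorphisms, so the graph functor applies to them.

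For the comonoid axioms and the coherence condition~\eqref{eq:hypergraphcoherence} I would argue purely by functoriality. By proposition~\ref{prop:canonicalcomonoid}, in $\sigmaalg{\Sigma}{E}$ with its finite product symmetric monoidal structure every object carries a cocommutative comonoid satisfying~\eqref{eq:hypergraphcoherence}. Since the graph functor $\graph{(-)}$ is strict symmetric monoidal and identity on objects, it preserves cocommutative comonoids and this coherence equation, so $(\epsilon,\delta)$ is a cocommutative comonoid coherent with the tensor. Dually, $\converse{(-)}$ of proposition~\ref{prop:spanconverse} is a strict symmetric monoidal functor out of the opposite category; a cocommutative comonoid in $\spanvq{(\Sigma,E)}{Q}$ is a commutative monoid in the opposite category, and its image under $\converse{(-)}$ is therefore a commutative monoid. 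This yields the commutative monoid $(\eta,\mu)$ together with the dual coherence condition, obtained by applying $\converse{(-)}$ to~\eqref{eq:hypergraphcoherence}.

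The remaining work, and the main obstacle, is the Frobenius axiom and the special axiom, since these interleave $\mu$ and $\delta$ and so follow from neither functor alone. Here I would compute directly with representatives. The crucial simplification is that all four structure morphisms carry the constant characteristic $\chi_k$, and that composing $Q$-spans multiplies characteristics via $\otimes$ with $k$ as unit; hence every composite appearing in these axioms again has constant characteristic $k$. The $Q$-valued data therefore imposes no constraint, and each axiom reduces to an isomorphism of the underlying spans in $\mathcal{E}$. For the special axiom, the pullback of $\langle 1_A, 1_A \rangle$ against itself is isomorphic to $A$ with both legs the identity, giving $\mu \circ \delta = 1_A$. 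For the Frobenius axiom I would form the three relevant pullbacks and exhibit canonical isomorphisms between their apexes that match the legs; all three are isomorphic to the limit identifying three copies of~$A$ along equality, and the bookkeeping is exactly that of the standard special Frobenius structure on spans in a category with finite limits. Tracking the proof witnesses through these iterated pullbacks while confirming that the legs and the constant characteristic agree is the fiddly step, but it becomes routine once the constant-characteristic reduction is in place.
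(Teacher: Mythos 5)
Your proposal is correct and takes essentially the same approach as the paper: the hypergraph structure is obtained from graphs of the canonical comonoids and their converses, with every axiom check reduced to the underlying $Q$-spans because the graph, converse and tensor constructions all preserve the algebraic condition (the paper's own proof is the one-liner ``as in theorem~\ref{thm:simplespans}, since all the tools used there preserve the algebraic structure''). If anything, your direct verification of the Frobenius and special axioms via the constant-characteristic ($\chi_k$) reduction supplies detail that the paper leaves as an omitted ``straightforward (but very unpleasant) check''.
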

This construction presents new modelling possibilities, that can be combined with other features, opening fresh
directions for investigation that may not have been immediately apparent.
\begin{example}
  \label{ex:convexspan}
  The span construction allows us to construct variations on the models we are already interested in. For example, we can
  now construct a proof relevant version of the model~\ref{ex:convex}. From a practical perspective, this presents
  the possibility of models in which we can describe the interaction of cognitive phenomena, and provide quantitative
  evidence for any relationships that we conclude hold.
\end{example}

\begin{example}
  \label{ex:presheaf}
  Instead of using~\cset as our base topos in our models, we could consider using a presheaf topos~$[\mathcal{C}^{op}, \cset]$
  for a small category~$\mathcal{C}$. This allows us to construct models using ``sets varying with context'', incorporating all the
  features discussed in the previous examples. In linguistic or cognitive examples, contexts could describe time, the agents
  involved or the broader setting in which meaning should be interpreted. These context sensitive models present a lot of new expressive potential,
  and will be investigated in future work.
\end{example}

\section{Order Enrichment}
\label{sec:orderstructure}
In order to meaningfully discuss internal monads, we require some 2-categorical structure on our relational constructions.
Specifically, we introduce an appropriate ordering on our morphisms.
Order enrichment is also important from a practical perspective when modelling real world applications.
For example, in natural language applications, we are often interested in phenomena such as
ambiguity~\cite{Piedeleu2014, PiedeleuKartsaklisCoeckeSadrzadeh2015} and lexical entailment~\cite{Bankova2015},
and these are best studied from an order theoretic perspective.

Generalizing the situation for ordinary set theoretic binary relations, we introduce an ordering on $Q$-relations.
\begin{definition}
  Let~$\mathcal{E}$ be a topos and~$(Q, \otimes, k, \bigvee)$ an internal quantale. We define a partial order on~$Q$-relations as follows
  \begin{equation*}
    R \subseteq R' \quad\text{ iff }\quad \forall a,b. R(a,b) \leq R'(a,b)
  \end{equation*}
  Algebraic~$Q$-relations are ordered similarly, by comparing their underlying~$Q$-relations.
\end{definition}
\begin{restatable}{theorem}{relcanbeordered}
  Let~$\mathcal{E}$ be a topos, $(\Sigma,E)$ a variety in~$\mathcal{E}$, and~$(Q, \otimes, k, \bigvee)$ an internal commutative quantale.
  The category~\relvq{(\Sigma,E)}{Q} is a partially ordered symmetric monoidal category.
\end{restatable}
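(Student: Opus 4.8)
The plan is to verify that the pointwise order makes the hom-sets of $\relvq{(\Sigma,E)}{Q}$ into posets in a way compatible with the symmetric monoidal structure already supplied by proposition~\ref{prop:relisacategory}. Concretely, a partially ordered (equivalently, poset-enriched) symmetric monoidal category requires three things: that $\subseteq$ is a partial order on each hom-set, that composition is monotone in both arguments, and that the tensor product is monotone in both arguments. The coherence isomorphisms are fixed morphisms rather than order-indexed families, so their enriched naturality reduces to ordinary naturality together with monotonicity of composition; no separate verification is needed for them.

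First I would establish the hom-posets. Since the order on relations is defined pointwise from the internal order $\leq$ on $Q$, reflexivity, transitivity and antisymmetry all transfer from the corresponding internal properties of $\leq$ read in the internal language of $\mathcal{E}$; antisymmetry uses that two morphisms into $Q$ whose internal values agree at every argument are equal as $\mathcal{E}$-morphisms. Because algebraic $Q$-relations are ordered by comparing their underlying $Q$-relations, this order restricts without change to the hom-sets of $\relvq{(\Sigma,E)}{Q}$.

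Next, for monotonicity I would first record the key lemma that $\otimes$ is monotone in each argument. This is precisely where the quantale structure is used: distributivity of $\otimes$ over arbitrary joins gives, for $a \leq a'$, that $a' \otimes b = (a \vee a') \otimes b = (a \otimes b) \vee (a' \otimes b)$, whence $a \otimes b \leq a' \otimes b$, and symmetrically in the second argument. Monotonicity of composition then follows: given $R \subseteq R'$ and $S \subseteq S'$, in the formula $(S \circ R)(a,c) = \bigvee_b R(a,b) \otimes S(b,c)$ each summand increases by monotonicity of $\otimes$, and $\bigvee$ is itself monotone, so $S \circ R \subseteq S' \circ R'$. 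Monotonicity of the tensor is simpler, since the tensor of relations is computed using $\otimes$ pointwise, so it is immediate from the same lemma. As composition and tensor are inherited from the underlying $Q$-relations, these monotonicity statements descend to the algebraic relations with no extra work.

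The only genuine subtlety is that all of the above takes place in the internal language of an arbitrary topos rather than in ordinary sets, so each argument must be read as a derivation in the internal logic of $\mathcal{E}$ rather than a pointwise manipulation of elements. The one place demanding a real calculation is the monotonicity of $\otimes$ obtained from distributivity over joins; everything else is a matter of transporting familiar order-theoretic reasoning faithfully into $\mathcal{E}$. With the three conditions in hand, $\relvq{(\Sigma,E)}{Q}$ is a poset-enriched, and hence partially ordered, symmetric monoidal category.
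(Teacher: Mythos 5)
Your proposal is correct and follows essentially the same route as the paper's proof: verify that the pointwise order is reflexive, transitive and antisymmetric via the internal order on $Q$, then check monotonicity of composition and of the tensor using order-preservation of the quantale product under joins. The only difference is cosmetic—you derive monotonicity of $\otimes$ explicitly from the distributivity axiom, whereas the paper simply invokes it as a known property of quantales.
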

$Q$-spans can also be ordered, in a manner analogous to that for relations, but explicitly taking into account
the proof witnesses.
\begin{definition}
  For topos~$\mathcal{E}$ and internal partially ordered monoid~$Q$, we define a preorder on~$Q$-spans as follow.
  \begin{equation*}
    (X_1,f_1,g_1,\chi_1) \subseteq (X_2,f_2,g_2,\chi_2)
  \end{equation*}
  if there is a~$\mathcal{E}$-monomorphism~$m : X_1 \rightarrow X_2$ such that
  \begin{equation*}
    f_1 = f_2 \circ m \text{ and } g_1 = g_2 \circ m \text{ and } \forall x. \chi_1(x) \leq \chi_2(m(x))
  \end{equation*}
  Algebraic~$Q$-spans are ordered similarly, by comparing their underlying~$Q$-spans.
\end{definition}
\begin{restatable}{theorem}{spancanbeordered}
  Let~$\mathcal{E}$ be a topos, $(\Sigma,E)$ a variety in~$\mathcal{E}$, and~$(Q, \otimes, k, \leq)$ an internal partially ordered commutative monoid.
  The category~\spanvq{(\Sigma,E)}{Q} is a preordered symmetric monoidal category.
\end{restatable}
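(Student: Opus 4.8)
The plan is to verify the three defining conditions of a preordered (equivalently, preorder-enriched) symmetric monoidal category: that $\subseteq$ is a preorder on each hom-set, that composition is monotone, and that the tensor product is monotone. Throughout I would argue in the internal language of~$\mathcal{E}$, and since algebraic~$Q$-spans are ordered by comparing their underlying~$Q$-spans, and composites and tensors of algebraic spans are again algebraic by proposition~\ref{prop:spanvqsmc} and theorem~\ref{thm:spanvqhypergraph}, every step reduces to a statement about underlying~$Q$-spans.

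First I would establish that $\subseteq$ is a preorder. Reflexivity follows by taking the witnessing monomorphism to be $1_X$: the leg equations hold trivially and $\chi(x) \leq \chi(1_X(x))$ by reflexivity of $\leq$ in~$Q$. For transitivity, given witnesses $m : X_1 \rightarrow X_2$ and $m' : X_2 \rightarrow X_3$, the composite $m' \circ m$ is a monomorphism (monomorphisms compose), satisfies the leg equations by pasting, and satisfies $\chi_1(x) \leq \chi_2(m(x)) \leq \chi_3(m'(m(x)))$ by transitivity of $\leq$. Note this yields only a preorder: mutually related spans need not be isomorphic, which is exactly why the statement claims preordering rather than partial ordering.

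The main work, and the principal obstacle, is monotonicity of composition. Suppose $(X_1,f_1,g_1,\chi_1) \subseteq (X_2,f_2,g_2,\chi_2)$ via $m$, both of type $A \rightarrow B$, and $(Y_1,h_1,k_1,\xi_1) \subseteq (Y_2,h_2,k_2,\xi_2)$ via $n$, both of type $B \rightarrow C$. Since $g_2 \circ (m \circ p_1) = g_1 \circ p_1 = h_1 \circ p_2 = h_2 \circ (n \circ p_2)$ on the pullback $X_1 \times_B Y_1$, the universal property of $X_2 \times_B Y_2$ induces a unique mediating morphism $\bar{m} : X_1 \times_B Y_1 \rightarrow X_2 \times_B Y_2$. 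The crux is that $\bar{m}$ is monic: working internally, a pair $(x,y)$ lies in $X_1 \times_B Y_1$ iff $g_1(x) = h_1(y)$ iff $g_2(m(x)) = h_2(n(y))$ iff $(m \times n)(x,y)$ lies in $X_2 \times_B Y_2$, so $X_1 \times_B Y_1$ is the pullback of the monomorphism $X_2 \times_B Y_2 \hookrightarrow X_2 \times Y_2$ along $m \times n$, and $\bar{m}$ is therefore a monomorphism by pullback-stability of monos in a topos. The leg equations for the composite then reduce to $f_2 \circ p_1' \circ \bar{m} = f_2 \circ m \circ p_1 = f_1 \circ p_1$ and symmetrically for the right leg. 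Finally, the characteristic morphism inequality $\chi_1(x) \otimes \xi_1(y) \leq \chi_2(m(x)) \otimes \xi_2(n(y))$ follows from $\chi_1(x) \leq \chi_2(m(x))$, $\xi_1(y) \leq \xi_2(n(y))$, and the monotonicity of $\otimes$ guaranteed by the hypothesis that $Q$ is an internal partially ordered commutative monoid. This monotonicity of $\otimes$ is precisely where the ordered-monoid structure, rather than a mere monoid, is used.

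Monotonicity of the tensor product proceeds along the same lines but is simpler, since the monoidal structure of proposition~\ref{prop:spanvqsmc} is inherited from finite products. Given witnesses $m$ and $m'$ for two orderings, the product $m \times m'$ is a monomorphism (products of monos are monos, as each factor is a pullback of a mono along a projection), it commutes with the tensored legs by functoriality of the product, and the characteristic morphism condition $\chi(x) \otimes \chi'(x') \leq \chi_2(m(x)) \otimes \chi_2'(m'(x'))$ again follows from monotonicity of $\otimes$. Together these three verifications exhibit each hom-set as a preorder with monotone composition and tensor; the associators, unitors, and symmetry are already isomorphisms in~\spanvq{(\Sigma,E)}{Q} by proposition~\ref{prop:spanvqsmc}, so no further compatibility with the order is needed, and the category is a preordered symmetric monoidal category as claimed.
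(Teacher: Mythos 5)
Your proof follows essentially the same strategy as the paper's: verify the preorder axioms via identity and composite monomorphisms, then establish monotonicity of composition and of the tensor by producing a monomorphic span morphism between the composites (resp.\ tensors) and invoking monotonicity of~$\otimes$ in the partially ordered monoid. There are two small differences of execution. You prove joint (two-sided) monotonicity of composition directly, constructing the mediating morphism $\bar{m} : X_1 \times_B Y_1 \rightarrow X_2 \times_B Y_2$ and showing it monic, whereas the paper fixes one argument and uses the simpler witness $m \times_B 1$, appealing to symmetry for the other side; the two formulations are equivalent given transitivity, so both are valid. Your argument that $\bar{m}$ is monic is correct, though it can be shortened: $\bar{m}$ followed by the inclusion $X_2 \times_B Y_2 \hookrightarrow X_2 \times Y_2$ equals the composite of the mono $X_1 \times_B Y_1 \hookrightarrow X_1 \times Y_1$ with the mono $m \times n$, and any morphism whose composite with a mono is monic is itself monic.

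The one genuine omission: by the paper's standing convention (the remark following the definition of $Q$-spans), morphisms of \spanvq{(\Sigma,E)}{Q} are \emph{isomorphism classes} of $Q$-spans, so before checking the preorder axioms you must verify that the relation $\subseteq$ is independent of the choice of representatives --- that is, if $(X_1,f_1,g_1,\chi_1) \subseteq (Y_1,h_1,k_1,\xi_1)$ via a mono $m$, and $i, j$ are $Q$-span isomorphisms to $(X_2,f_2,g_2,\chi_2)$ and $(Y_2,h_2,k_2,\xi_2)$ respectively, then $j \circ m \circ i^{-1}$ witnesses $(X_2,f_2,g_2,\chi_2) \subseteq (Y_2,h_2,k_2,\xi_2)$. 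The paper devotes the opening portion of its proof to exactly this. The check is routine (isomorphisms are monos, monos compose, and the characteristic inequalities transport along the isomorphisms), and it fits into your framework without change, but without it the relation you are studying is not even well defined on the hom-sets of the category being enriched.
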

The orders are respected by the important converse operation
\begin{proposition}
  Let~$\mathcal{E}$ be a topos, and~$(\Sigma,E)$ a variety in~$\mathcal{E}$.
  Converses respect order structure, in that
  \begin{itemize}
  \item If~$(Q, \otimes, k, \bigvee)$ is an internal quantale, the converse functor of proposition~\ref{prop:relationconverse} is a partially ordered functor
  \item If~$(Q, \otimes, k, \leq)$ is an internal partially order monoid, the converse functor of proposition~\ref{prop:spanconverse} is a preordered functor
  \end{itemize}
\end{proposition}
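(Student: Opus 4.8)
The plan is to unfold the definition of an order-preserving functor and verify monotonicity on hom-objects in each of the two cases, observing that the converse operation leaves the relevant order witness essentially untouched. Since a functor between preorder-enriched (respectively poset-enriched) categories preserves order precisely when it acts monotonically on each hom, and since the orders on algebraic $Q$-relations and algebraic $Q$-spans are by definition inherited from the underlying $Q$-relations and $Q$-spans, it suffices to check monotonicity at the level of the underlying structures. All reasoning is carried out in the internal language of $\mathcal{E}$.

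For the relational case I would take $R \subseteq R'$, so that $R(a,b) \leq R'(a,b)$ holds for all $a,b$, and then appeal to the defining equation $\converse{R}(b,a) = R(a,b)$ of proposition~\ref{prop:relationconverse}. The required inequality $\converse{R}(b,a) \leq \converse{R'}(b,a)$ is then literally the hypothesis with its bound variables renamed, giving $\converse{R} \subseteq \converse{R'}$ and hence that $\converse{(-)}$ is a partially ordered functor.

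For the span case I would take $(X_1,f_1,g_1,\chi_1) \subseteq (X_2,f_2,g_2,\chi_2)$, witnessed by a monomorphism $m : X_1 \rightarrow X_2$ with $f_1 = f_2 \circ m$, $g_1 = g_2 \circ m$ and $\chi_1(x) \leq \chi_2(m(x))$ for all $x$. By proposition~\ref{prop:spanconverse} the converses are $(X_1,g_1,f_1,\chi_1)$ and $(X_2,g_2,f_2,\chi_2)$, so establishing $\converse{(X_1,f_1,g_1,\chi_1)} \subseteq \converse{(X_2,f_2,g_2,\chi_2)}$ amounts to exhibiting a monomorphism commuting with the now-swapped legs and dominating the characteristic morphisms. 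I would simply reuse $m$: the conditions $g_1 = g_2 \circ m$ and $f_1 = f_2 \circ m$ are exactly the two leg-commutation hypotheses with their roles interchanged, and the characteristic morphism condition $\chi_1(x) \leq \chi_2(m(x))$ is untouched by reversing legs, so $\converse{(-)}$ is a preordered functor.

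I anticipate no real obstacle: the argument is a routine unfolding whose only content is the observation that the order witness---a renaming of variables in the relational case, a monomorphism in the span case---is symmetric in the two legs and so transports verbatim across the converse. The one point meriting mild care is that all quantifiers and inequalities are read in the internal logic of $\mathcal{E}$; but since the converse acts purely structurally (swapping arguments, respectively swapping span legs) without disturbing the truth values or witnesses, the internal reasoning is unaffected.
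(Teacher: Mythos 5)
Your proof is correct: both verifications go through exactly as you describe, since the relational converse merely permutes the arguments under the universally quantified inequality, and the span order witness $m$ satisfies leg-commutation conditions that are symmetric under swapping the two legs while the characteristic-morphism inequality is unchanged. The paper itself offers no proof of this proposition (it is one of the results left as routine, not restated in the appendix), and your argument is precisely the routine monotonicity check on hom-orders that the authors evidently had in mind.
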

The order enrichment of $Q$-relations and $Q$-spans is crucial for us to be able to consider the internal monads
central to the second example of the introduction.
\begin{example}
  \label{ex:metric}
  The model incorporating metric spaces as internal monads, as discussed in the introduction, can be constructed with base topos~\cset, the empty
  algebraic signature and using the Lawvere quantale~\quantale{C} as the choice of truth values.
\end{example}
Now we have both algebraic and order structure available to us within the same construction, we can consider combining the features we
are interested in, by making appropriate choices for the parameters used in the construction.
\begin{example}
  \label{ex:convexmetric}
  We now see that we can combine both the convex and metric features in a single model. With underlying topos~\cset, we take our algebraic
  structure as in example~\ref{ex:convex} and our quantale~\quantale{C} as in example~\ref{ex:metric}. In this case we find the internal
  monads are distance measures~$d : A \times A \rightarrow [0,\infty]$ satisfying
  \begin{align*}
    d(a,a) &= 0 \\
    d(a,b) + d(b,c) &\geq d(a,c) \\
    d(a_1,a_2) + d(b_1,b_2) &\geq d(a_1 +^p b_1, a_2 +^p b_2)
  \end{align*}
  These are generalized metric spaces that respect convex structure. The usual metric on~$\mathbb{R}^n$ is an example of such a metric.
\end{example}

\section{From Spans to Relations}
\label{sec:fromspanstorelations}
We now begin our study of the relationship between the different parameter choices we can take. We start with the simplest case,
the binary choice between proof relevance and provability.

The next theorem shows that the orders on relations and spans are compatible, in the sense that we can collapse spans to relations
using the join of a quantale to choose optimal truth values, and this mapping is functorial and respects the order structure.
\begin{restatable}{theorem}{vfunctor}
  \label{thm:vfunctor}
  Let~$\mathcal{E}$ be a topos, $(\Sigma,E)$ a variety in~$\mathcal{E}$ and~$(Q, \otimes, k, \bigvee)$ an internal commutative quantale.
  There is an identity on objects, strict symmetric monoidal \preord-functor
  \begin{equation*}
    V : \spanvq{(\Sigma,E)}{Q} \rightarrow \relvq{(\Sigma,E)}{Q}
  \end{equation*}
\end{restatable}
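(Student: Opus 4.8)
\section*{Proof proposal for Theorem~\ref{thm:vfunctor}}

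The plan is to let $V$ be the identity on objects and to send a $Q$-span $(X,f,g,\chi)$ of type $A \rightarrow B$ to the $Q$-relation that, at each pair $(a,b)$, takes the join in $Q$ of the truth values of all proof witnesses relating $a$ and $b$:
\begin{equation*}
  V(X,f,g,\chi)(a,b) = \bigvee \{ \chi(x) \mid f(x) = a \wedge g(x) = b \}
\end{equation*}
Internally this is the composite $A \times B \to PQ \xrightarrow{\bigvee} Q$ obtained by forming the image of $\chi$ along the fibre of $\langle f,g \rangle$ over $(a,b)$; being expressible in the internal language of $\mathcal{E}$ it is a genuine $\mathcal{E}$-morphism. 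I would first check that $V$ is well defined. Invariance under the span isomorphisms of Definition~\ref{def:qspan} is immediate, since such an isomorphism induces a fibrewise bijection preserving $\chi$, hence equal joins. That the output is an \emph{algebraic} $Q$-relation, i.e.\ satisfies the crucial inequation~\eqref{eq:crucialinequation}, follows from the algebraic span axiom: given witnesses $x_i$ with $f(x_i)=a_i$ and $g(x_i)=b_i$, the axiom supplies some $x$ lying over $(\sigma(a_1,\dots,a_n),\sigma(b_1,\dots,b_n))$ with $\chi(x_1)\otimes\dots\otimes\chi(x_n)\leq\chi(x)$, and bounding the tensor of the defining joins using distributivity yields the required inequality.

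Next I would verify functoriality. The identity $Q$-span $(A,1,1,\chi_k)$ collapses to $\bigvee\{k \mid a_1=a_2\}$, which is exactly the identity relation $1_A$. For composition, take $R=(X,f_X,g_X,\chi)$ and $S=(Y,f_Y,g_Y,\xi)$; their composite has apex the pullback $X\times_B Y$ and characteristic morphism $(x,y)\mapsto\chi(x)\otimes\xi(y)$. Expanding $V(R)$ and $V(S)$ and applying the two quantale distributivity laws (using commutativity) rewrites $\bigvee_b V(R)(a,b)\otimes V(S)(b,c)$ as a single join of terms $\chi(x)\otimes\xi(y)$ ranging over $f_X(x)=a$, $g_Y(y)=c$, and the existence of a common $b$ with $g_X(x)=b=f_Y(y)$. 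That last condition is precisely membership in the pullback, so the join equals $V(S\circ R)(a,c)$. This interchange of joins with $\otimes$ is the heart of the argument.

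For the monoidal structure, both sides carry the symmetric monoidal structure inherited from the finite products of $\mathcal{E}$ (Propositions~\ref{prop:relisacategory} and~\ref{prop:spanvqsmc}), so it suffices to show $V$ commutes with the tensor of morphisms and with the structural isomorphisms. The product apex $X_1\times X_2$ with characteristic morphism $(x_1,x_2)\mapsto\chi_1(x_1)\otimes\chi_2(x_2)$ collapses, again by distributivity, to the pointwise tensor of $V(R_1)$ and $V(R_2)$, giving strictness on morphisms; and since the associators, unitors and symmetry on both categories are graphs of the corresponding product isomorphisms of $\mathcal{E}$, preservation of the coherence data reduces to the observation that $V\bigl(\graph{\theta}\bigr)=\graph{\theta}$, i.e.\ $V(A,1,\theta,\chi_k)(a,b)=\bigvee\{k\mid\theta(a)=b\}$, which matches the graph relation of Proposition~\ref{prop:relgraph}. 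Finally, $\preord$-enrichment is the easy part: if the smaller span embeds in the larger via a mono $m$ with $\chi_1(x)\leq\chi_2(m(x))$ and $m(x)$ lying over the same pair $(a,b)$, then every term of the join defining $V$ of the smaller span is dominated by a term of the larger, so monotonicity follows upon taking joins.

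The principal obstacle is carrying out the composition calculation, and indeed the whole argument, in the internal language of a general topos rather than in $\cset$. One must ensure that the ``join along a map'' defining $V$, and the reindexing used to collapse $\bigvee_b$ into a single join over the pullback, are legitimate internal manipulations: this rests on associativity and commutativity of arbitrary internal joins together with their interaction with $\otimes$ via the distributivity axioms. These are exactly the sup-lattice and distributivity properties guaranteed by $Q$ being an internal commutative quantale, so the reasoning goes through; the care required is almost entirely in the bookkeeping of fibres and the existential quantifier encoding the pullback.
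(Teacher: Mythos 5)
Your proposal is correct and takes essentially the same approach as the paper's proof: the identical join-over-fibres definition of $V$, the same identity computation, the same appeal to preservation of graphs for the monoidal coherence data, and the same monotonicity argument for the \preord-enrichment. If anything you are more thorough than the paper, which leaves the composition calculation and the check that $V$ lands in \emph{algebraic} $Q$-relations implicit; your distributivity arguments supply exactly those omitted details.
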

As we would expect, this extensional collapse of proof witnesses interacts well with the graph and converse operations, and therefore
preserves our chosen hypergraph structure on the nose.
\begin{proposition}
  With the same assumptions, the functor~$V$ of theorem~\ref{thm:vfunctor} commutes with graphs and converses. That is, the following
  diagrams commute:
  \begin{equation*}
    \begin{tikzpicture}[scale=0.5, node distance=2cm, ->]
      \node (bot) {\sigmaalg{\Sigma}{E}};
      \node[above left of=bot] (tl) {\spanvq{(\Sigma,E)}{Q}};
      \node[above right of=bot] (tr) {\relvq{(\Sigma,E)}{Q}};
      \draw (bot) to node[below left]{\graph{(-)}} (tl);
      \draw (bot) to node[below right]{\graph{(-)}} (tr);
      \draw (tl) to node[above]{$V$} (tr);
    \end{tikzpicture}
  \end{equation*}
  \begin{equation*}
    \begin{tikzpicture}[scale=0.5, node distance=1.5cm, ->]
      \node (tl) {$\spanvq{(\Sigma,E)}{Q}^{op}$};
      \node[below of=tl] (bl) {\spanvq{(\Sigma,E)}{Q}};
      \node[right of=tl, node distance=4cm] (tr) {$\relvq{(\Sigma,E)}{Q}^{op}$};
      \node[below of=tr] (br) {\relvq{(\Sigma,E)}{Q}};
      \draw (tl) to node[left]{\converse{(-)}} (bl);
      \draw (tr) to node[right]{\converse{(-)}} (br);
      \draw (tl) to node[above]{$V^{op}$} (tr);
      \draw (bl) to node[below]{$V$} (br);
    \end{tikzpicture}
  \end{equation*}
\end{proposition}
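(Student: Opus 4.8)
The plan is to verify both diagrams by direct pointwise computation, exploiting that every functor involved --- namely $V$, both graph functors, and both converse functors --- is identity on objects. It therefore suffices to compare, for a fixed morphism, the two $Q$-relations produced along the two legs of each diagram, and equality of functors follows. I would begin by recording the explicit action of $V$ from theorem~\ref{thm:vfunctor}: a $Q$-span $(X,f,g,\chi)$ of type $A \rightarrow B$ is sent to the $Q$-relation that collapses its proof witnesses by the quantale join,
\begin{equation*}
  V(X,f,g,\chi)(a,b) = \bigvee \{ \chi(x) \mid f(x) = a \wedge g(x) = b \}
\end{equation*}
read in the internal language of $\mathcal{E}$.

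For the graph triangle, I would take a morphism $f : A \rightarrow B$ in $\sigmaalg{\Sigma}{E}$ and evaluate $V$ on the span leg. The span graph of $f$ is $(A, 1, f, \chi_k)$, where $\chi_k$ is the constant morphism at $k$. Applying $V$ and using that the fibre over $(a,b)$ forces $x = a$ together with $f(a) = b$, the join collapses to
\begin{equation*}
  V(A,1,f,\chi_k)(a,b) = \bigvee \{ k \mid f(a) = b \}
\end{equation*}
which is exactly the value $f_\circ(a,b)$ of the relational graph from proposition~\ref{prop:relgraph}. Since both graph functors are identity on objects, agreement on morphisms yields equality of functors and closes the triangle.

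For the converse square, I would evaluate both legs on a $Q$-span $(X,f,g,\chi)$ of type $A \rightarrow B$. Going down-then-right, the span converse of proposition~\ref{prop:spanconverse} produces $(X,g,f,\chi)$, and $V$ then gives the relation
\begin{equation*}
  (b,a) \mapsto \bigvee \{ \chi(x) \mid g(x) = b \wedge f(x) = a \}.
\end{equation*}
Going right-then-down, $V^{op}$ produces the relation $R(a,b) = \bigvee \{ \chi(x) \mid f(x) = a \wedge g(x) = b \}$, and the relational converse of proposition~\ref{prop:relconverse} evaluates $R$ at the swapped pair, giving $\converse{R}(b,a) = R(a,b)$. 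Commutativity of conjunction in the internal logic makes the two defining predicates coincide, so the two relations are equal.

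The computations themselves are routine; genuine care is needed in just two places. First, I must justify that the fibrewise join defining $V$ is well behaved in the internal language --- concretely, that the join over the subobject of $X$ cut out by $f(x) = a \wedge g(x) = b$ really is the intended morphism $A \times B \rightarrow Q$ --- so that the manipulations above are legitimate rather than mere set-theoretic shorthand. Second, in the converse square one must keep the opposite-category conventions straight, so that both legs are genuinely compared at the same argument pair $(b,a)$; once this bookkeeping is fixed, the equality is immediate from the symmetry of $\wedge$. I expect the internal-language justification of the action of $V$ to be the main, though mild, obstacle, and it is most cleanly dispatched by reusing the concrete description of $V$ already established in the proof of theorem~\ref{thm:vfunctor}.
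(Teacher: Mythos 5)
Your proposal is correct and proceeds exactly as the paper intends: the paper gives no separate proof of this proposition, treating it as immediate from the explicit formula $V(X,f,g,\chi)(a,b) = \bigvee \{ \chi(x) \mid f(x) = a \wedge g(x) = b \}$ established in the proof of theorem~\ref{thm:vfunctor} (indeed, that proof already cites ``$V$ preserves graphs'' as clear, and its identity-preservation computation is precisely your graph calculation specialized to $f = 1_A$). Your pointwise unfolding of the graph and converse legs, with the converse square closed by symmetry of $\wedge$ in the internal logic, is the same routine verification the paper leaves implicit.
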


\section{Changing Truth Values}
\label{sec:changingtruthvalues}
We would expect that homomorphisms between our structures of truth values lead to functorial relationships
between models. This all goes through very smoothly, as we now elaborate.

Firstly, for algebraic~$Q$-relations, it is natural to consider internal quantale homomorphisms.
\begin{restatable}{theorem}{quantalemorphisminduced}
  \label{thm:quantalemorphisminduced}
  Let~$\mathcal{E}$ be a topos, $(\Sigma,E)$ a variety in~$\mathcal{E}$, and~$h : Q_1 \rightarrow Q_2$ a morphism of internal commutative quantales.
  There is an identity on objects, strict symmetric monoidal \pos-functor
  \begin{equation*}
    h^* : \relvq{(\Sigma,E)}{Q_1} \rightarrow \relvq{(\Sigma,E)}{Q_2}
  \end{equation*}
  The assignment~$h \mapsto h^*$ is functorial.
\end{restatable}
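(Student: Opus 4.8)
The plan is to define $h^*$ by postcomposition with $h$. On objects $h^*$ is the identity, since the objects of both categories are the $(\Sigma,E)$-algebras of $\mathcal{E}$ and do not depend on the choice of truth values. On a morphism, that is, a $Q_1$-relation $R : A \times B \to Q_1$, I would set $h^*(R) = h \circ R$, so that $h^*(R)(a,b) = h(R(a,b))$. The bulk of the work is then a sequence of routine verifications, each of which isolates exactly one property of a quantale homomorphism: preservation of the multiplication $\otimes$, of the unit $k$, and of arbitrary joins $\bigvee$ (the last of which also makes $h$ monotone).

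First I would check that $h^*(R)$ is again an algebraic $Q_2$-relation, i.e. that it satisfies the inequation~\eqref{eq:crucialinequation}. Applying $h$ to the instance of~\eqref{eq:crucialinequation} for $R$, using that $h$ preserves $\otimes$ to distribute it over the left-hand tensor and that $h$ is monotone to preserve the inequality, yields the required axiom for $h^*(R)$. Functoriality then follows from the homomorphism properties: $h^*$ preserves identities because $h$ preserves $k$ and joins, so that $h(1_A(a_1,a_2)) = h(\bigvee\{k \mid a_1 = a_2\}) = \bigvee\{k \mid a_1 = a_2\}$ computed in $Q_2$; and $h^*$ preserves composition because $h$ preserves joins and $\otimes$, giving $h(\bigvee_b R(a,b) \otimes S(b,c)) = \bigvee_b h(R(a,b)) \otimes h(S(b,c))$. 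All of these calculations are carried out in the internal language of $\mathcal{E}$, where they are intuitionistically valid.

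It remains to establish the three adjectives. Strict monoidality on objects is immediate, since $h^*$ is the identity on objects and the tensor of objects is the finite product in $\mathcal{E}$; on morphisms, the tensor of relations is computed using the quantale multiplication $\otimes$, so $h^*$ preserves it again because $h$ preserves $\otimes$. The structural isomorphisms (associators, unitors and symmetries) are graphs of the corresponding isomorphisms of $\mathcal{E}$, and since graphs are built from $k$ and joins in exactly the way identities are, the computation above shows that $h^*$ commutes with the graph functor of proposition~\ref{prop:relgraph} and hence sends each structural morphism of $\relvq{(\Sigma,E)}{Q_1}$ to its counterpart in $\relvq{(\Sigma,E)}{Q_2}$. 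For the \pos-enrichment I would observe that $R \subseteq R'$ means $R(a,b) \leq R'(a,b)$ pointwise, whence monotonicity of $h$ gives $h(R(a,b)) \leq h(R'(a,b))$ and so $h^*(R) \subseteq h^*(R')$. Finally, functoriality of $h \mapsto h^*$ is immediate from associativity of postcomposition: $(\mathrm{id}_Q)^* = \mathrm{id}$ and $(h' \circ h)^* = (h')^* \circ h^*$.

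Since every step is a direct application of one of the quantale-homomorphism axioms, I do not expect a serious obstacle; the only points requiring care are confirming that all reasoning remains valid in the internal logic of $\mathcal{E}$ rather than merely in \cset, and recognising that the single appeal to monotonicity (derived from join preservation) is precisely what simultaneously secures both well-definedness on morphisms and the \pos-enrichment.
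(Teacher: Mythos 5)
Your proposal is correct and takes essentially the same route as the paper: $h^*$ is defined by postcomposition with $h$, and every verification (functoriality, strict monoidality, \pos-enrichment, functoriality of $h \mapsto h^*$) reduces to $h$ preserving $\otimes$, $k$ and joins, with monotonicity derived from join preservation. You are in fact more explicit than the paper's proof, which leaves implicit both the check that $h^*(R)$ still satisfies inequation~\eqref{eq:crucialinequation} and the argument that the structural isomorphisms, being graphs, are preserved on the nose.
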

As with the extensional collapse functor of section~\ref{sec:fromspanstorelations}, the induced functor respects graphs and converses, and therefore
preserves the hypergraph structure on the nose.
\begin{proposition}
  With the same assumptions, the functor~$h^*$ of theorem~\ref{thm:quantalemorphisminduced} commutes with graphs and converses.
  That is, the following diagrams commute:
  \begin{equation*}
    \begin{tikzpicture}[scale=0.5, node distance=2cm, ->]
      \node (bot) {\sigmaalg{\Sigma}{E}};
      \node[above left of=bot] (tl) {\relvq{(\Sigma,E)}{Q_1}};
      \node[above right of=bot] (tr) {\relvq{(\Sigma,E)}{Q_2}};
      \draw (bot) to node[below left]{\graph{(-)}} (tl);
      \draw (bot) to node[below right]{\graph{(-)}} (tr);
      \draw (tl) to node[above]{$h^*$} (tr);
    \end{tikzpicture}
  \end{equation*}
  \begin{equation*}
    \begin{tikzpicture}[scale=0.5, node distance=1.5cm, ->]
      \node (tl) {$\relvq{(\Sigma,E)}{Q_1}^{op}$};
      \node[below of=tl] (bl) {\relvq{(\Sigma,E)}{Q_1}};
      \node[right of=tl, node distance=4cm] (tr) {$\relvq{(\Sigma,E)}{Q_2}^{op}$};
      \node[below of=tr] (br) {\relvq{(\Sigma,E)}{Q_2}};
      \draw (tl) to node[left]{\converse{(-)}} (bl);
      \draw (tr) to node[right]{\converse{(-)}} (br);
      \draw (tl) to node[above]{$(h^*)^{op}$} (tr);
      \draw (bl) to node[below]{$h^*$} (br);
    \end{tikzpicture}
  \end{equation*}
\end{proposition}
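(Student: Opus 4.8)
The plan is to exploit that all the functors involved are identity on objects, so that commutativity of each diagram at the level of objects is immediate, and only the action on morphisms must be checked. Throughout I would reason in the internal language of $\mathcal{E}$, recalling from Theorem~\ref{thm:quantalemorphisminduced} that $h^*$ sends a $Q_1$-relation $R$ to the $Q_2$-relation $h \circ R$, i.e. $h^*(R)(a,b) = h(R(a,b))$.

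For the graph triangle, I would take an algebra morphism $f : A \rightarrow B$ and compare the two composites. Along the leg through $\relvq{(\Sigma,E)}{Q_1}$ followed by $h^*$ we obtain the $Q_2$-relation
\begin{equation*}
  h^*(\graph{f})(a,b) = h\!\left( \bigvee \{ k \mid f(a) = b \} \right),
\end{equation*}
whereas the direct leg into $\relvq{(\Sigma,E)}{Q_2}$ yields $\graph{f}(a,b) = \bigvee \{ k \mid f(a) = b \}$, this join now computed in $Q_2$. Equality follows from the two defining properties of a quantale homomorphism: $h$ preserves arbitrary joins and the monoid unit $k$, so that $h(\bigvee\{k \mid f(a)=b\}) = \bigvee\{h(k) \mid f(a)=b\} = \bigvee\{k \mid f(a)=b\}$ in $Q_2$. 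In particular, when $f(a) \neq b$ both sides collapse to the empty join, which $h$ likewise preserves.

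For the converse square, I would take a $Q_1$-relation $R : A \rightarrow B$ and trace it both ways. Passing first through $(h^*)^{op}$ and then $\converse{(-)}$ gives, at position $(b,a)$, the value $h^*(R)(a,b) = h(R(a,b))$; passing first through $\converse{(-)}$ and then $h^*$ gives $h(\converse{R}(b,a)) = h(R(a,b))$. Since reversing the arguments of a relation and postcomposing its characteristic morphism with $h$ are manifestly independent operations, the two agree on the nose, and the square commutes.

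I expect no serious obstacle. The converse case is a purely formal commutation of \emph{swap arguments} with \emph{postcompose by $h$}, and the only content in the graph case is that $h$, being an internal quantale homomorphism, preserves the unit $k$ and all joins---exactly the data appearing in the definition of $\graph{(-)}$. The single point that demands a little care is that these manipulations of the subobject-indexed join $\bigvee\{k \mid f(a)=b\}$ be carried out in the internal language of $\mathcal{E}$ rather than by pointwise reasoning, but this is precisely what the hypothesis that $h$ is a morphism of \emph{internal} quantales supplies.
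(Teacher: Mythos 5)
Your proof is correct and is essentially the argument the paper intends: the paper gives no explicit proof of this proposition (its appendix only outlines the key results), but its proof of Theorem~\ref{thm:quantalemorphisminduced} defines $h^*$ as postcomposition with $h$, and from that definition your two computations are exactly what is required --- preservation of the unit $k$ and of arbitrary (including empty) joins for the graph triangle, and the purely formal interchange of argument-swapping with postcomposition for the converse square. Your closing remark that these join manipulations live in the internal language of $\mathcal{E}$, justified by $h$ being a morphism of \emph{internal} quantales, is the right point of care and matches the paper's conventions.
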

In the case of the span constructions, morphisms of partially ordered monoids are the appropriate notion of homomorphism to consider.
\begin{restatable}{theorem}{preorderedmonoidmorphisminduced}
  \label{thm:preorderedmonoidmorphisminduced}
  Let~$\mathcal{E}$ be a topos, $(\Sigma,E)$ a variety in~$\mathcal{E}$, and~$h : Q_1 \rightarrow Q_2$ a morphism of internal partially ordered commutative monoids.
  There is an identity on objects, strict symmetric monoidal \preord-functor
  \begin{equation*}
    h^* : \spanvq{(\Sigma,E)}{Q_1} \rightarrow \spanvq{(\Sigma,E)}{Q_2}
  \end{equation*}
  The assignment~$h \mapsto h^*$ is functorial.
\end{restatable}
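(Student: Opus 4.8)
The plan is to define $h^*$ by post-composing characteristic morphisms with $h$, in exact parallel with the quantale-homomorphism case of theorem~\ref{thm:quantalemorphisminduced}, and then to observe that every required property reduces to the two defining features of $h$: that it is an internal commutative monoid homomorphism, so it preserves $\otimes$ and the unit $k$, and that it is monotone, so it preserves $\leq$. On objects $h^*$ is the identity, since both categories have the same $(\Sigma,E)$-algebras as objects. On a $Q_1$-span I would set $h^*(X,f,g,\chi) = (X,f,g,h\circ\chi)$, leaving the apex and legs untouched and transporting only the truth values along $h$.

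The first thing to check, and the only genuinely non-routine point, is that $(X,f,g,h\circ\chi)$ is again an \emph{algebraic} $Q_2$-span. Reasoning in the internal language of $\mathcal{E}$, suppose $f(x_i)=a_i$ and $g(x_i)=b_i$ for $i=1,\dots,n$. The defining axiom of the original $Q_1$-span supplies a witness $x$ with $f(x)=\sigma(a_1,\dots,a_n)$, $g(x)=\sigma(b_1,\dots,b_n)$ and $\chi(x_1)\otimes\cdots\otimes\chi(x_n)\leq\chi(x)$. Since $f$, $g$ and the witness $x$ are unchanged, the same $x$ serves for the new span, so it remains only to transport the inequality. Because $h$ preserves $\otimes$ we have $h(\chi(x_1))\otimes\cdots\otimes h(\chi(x_n)) = h\bigl(\chi(x_1)\otimes\cdots\otimes\chi(x_n)\bigr)$, and because $h$ is monotone this is $\leq h(\chi(x))$, which is exactly the axiom for the $Q_2$-span. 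This is where both hypotheses on $h$ are needed simultaneously; note that, unlike the relational case of theorem~\ref{thm:quantalemorphisminduced}, no joins appear here, which is precisely why the weaker hypothesis of a partially ordered monoid morphism suffices.

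The remaining verifications are bookkeeping, each reducing to preservation of $\otimes$, $k$ or $\leq$. Well-definedness on isomorphism classes is immediate: a span morphism $\alpha$ with $\chi_1=\chi_2\circ\alpha$ also satisfies $h\circ\chi_1=(h\circ\chi_2)\circ\alpha$. For functoriality of $h^*$, the identity $Q_1$-span at $A$ has characteristic morphism $\chi_k$ picking out $k$, and since $h(k)=k$ the composite $h\circ\chi_k$ is the $Q_2$-identity; composition is preserved because the underlying pullback is untouched and $h(\chi\otimes\xi)=h\chi\otimes h\xi$. The strict symmetric monoidal structure is inherited from the finite products of $\mathcal{E}$ by proposition~\ref{prop:spanvqsmc}, and since $h^*$ is the identity on objects and fixes apices and legs, the only content is that the tensor of characteristic morphisms, again built from $\otimes$, is preserved, while associators, unitors and symmetry come from $\mathcal{E}$ and are fixed on the nose. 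For the \preord-enrichment, if a monomorphism $m$ witnesses $(X_1,f_1,g_1,\chi_1)\subseteq(X_2,f_2,g_2,\chi_2)$ with $\chi_1(x)\leq\chi_2(m(x))$, then monotonicity of $h$ gives $h(\chi_1(x))\leq h(\chi_2(m(x)))$, so the same $m$ witnesses the ordering between the images; hence $h^*$ is locally monotone.

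Finally, functoriality of $h\mapsto h^*$ is immediate from associativity of composition: $(\mathrm{id}_Q)^*$ post-composes with the identity and so is the identity functor, while $(h_2\circ h_1)^*$ post-composes with $h_2\circ h_1$, which equals applying $h_1^*$ and then $h_2^*$. The only point to keep in mind throughout is that the existential and equational reasoning should be read internally in $\mathcal{E}$; but since $h^*$ alters nothing except the composite $h\circ\chi$, no genuinely topos-theoretic obstacle arises, and I expect the whole argument to be short.
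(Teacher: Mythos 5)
Your proposal is correct and follows exactly the paper's route: the paper defines $h^*$ by postcomposition with $h$ ``as in the relational case'' and delegates all verifications to the argument of theorem~\ref{thm:quantalemorphisminduced}. You simply spell out the details the paper leaves implicit (the algebraic-span axiom, well-definedness on isomorphism classes, preservation of identities, composition, tensor and preorder), and your observation that only monotonicity and preservation of $\otimes$ and $k$ are needed --- no joins, hence the weaker hypothesis on $h$ suffices --- is precisely the point of the theorem.
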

Again, the induced functor commutes with graphs and converses.
\begin{proposition}
  With the same assumptions, the functor~$h^*$ of theorem~\ref{thm:preorderedmonoidmorphisminduced} commutes with graphs and converses.
  That is, the following diagrams commute:
  \begin{equation*}    
    \begin{tikzpicture}[scale=0.5, node distance=2cm, ->]
      \node (bot) {\sigmaalg{\Sigma}{E}};
      \node[above left of=bot] (tl) {\spanvq{(\Sigma,E)}{Q_1}};
      \node[above right of=bot] (tr) {\spanvq{(\Sigma,E)}{Q_2}};
      \draw (bot) to node[below left]{\graph{(-)}} (tl);
      \draw (bot) to node[below right]{\graph{(-)}} (tr);
      \draw (tl) to node[above]{$h^*$} (tr);
    \end{tikzpicture}
  \end{equation*}
  \begin{equation*}
    \begin{tikzpicture}[scale=0.5, node distance=1.5cm, ->]
      \node (tl) {$\spanvq{(\Sigma,E)}{Q_1}^{op}$};
      \node[below of=tl] (bl) {\spanvq{(\Sigma,E)}{Q_1}};
      \node[right of=tl, node distance=4cm] (tr) {$\spanvq{(\Sigma,E)}{Q_2}^{op}$};
      \node[below of=tr] (br) {\spanvq{(\Sigma,E)}{Q_2}};
      \draw (tl) to node[left]{\converse{(-)}} (bl);
      \draw (tr) to node[right]{\converse{(-)}} (br);
      \draw (tl) to node[above]{$(h^*)^{op}$} (tr);
      \draw (bl) to node[below]{$h^*$} (br);
    \end{tikzpicture}
  \end{equation*}
\end{proposition}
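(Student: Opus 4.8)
The plan is to recall the explicit action of the induced functor $h^*$ from theorem~\ref{thm:preorderedmonoidmorphisminduced} and then verify both diagrams by direct computation on span representatives. The functor $h^*$ leaves the underlying span untouched and post-composes the characteristic morphism with $h$, so on a $Q_1$-span it acts by
\begin{equation*}
  h^*(X,f,g,\chi) = (X,f,g,h \circ \chi).
\end{equation*}
Since $h^*$, the graph functor, and the converse functor are all identity on objects, it suffices to compare the two composites on morphisms. Moreover, because we work throughout with isomorphism classes of spans, it is enough to exhibit equal representatives for each pair of composites.

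For the graph triangle, I would start from a morphism $f : A \rightarrow B$ in \sigmaalg{\Sigma}{E}. Writing $k_1$ and $k_2$ for the units of $Q_1$ and $Q_2$, the graph in the $Q_1$-valued category is $\graph{f} = (A, 1, f, \chi_{k_1})$ with $\chi_{k_1} = A \xrightarrow{!} 1 \xrightarrow{k_1} Q_1$. Applying $h^*$ replaces the characteristic morphism by $h \circ \chi_{k_1} = A \xrightarrow{!} 1 \xrightarrow{k_1} Q_1 \xrightarrow{h} Q_2$. The key step is that $h$ is a morphism of monoids and hence preserves the unit, so $h \circ k_1 = k_2$ and therefore $h \circ \chi_{k_1} = \chi_{k_2}$. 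Thus $h^*(\graph{f}) = (A,1,f,\chi_{k_2})$, which is precisely the $Q_2$-valued graph of $f$, and the triangle commutes on the nose.

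For the converse square, I would take a $Q_1$-span $(X,f,g,\chi)$. Going down then across, the converse produces $(X,g,f,\chi)$ and then $h^*$ yields $(X,g,f,h \circ \chi)$. Going across then down, $(h^*)^{op}$ produces $(X,f,g,h \circ \chi)$ and the converse then swaps its legs to give $(X,g,f,h \circ \chi)$. The two composites agree exactly.

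There is no serious obstacle here; the argument is a routine unwinding of the definitions. The only points meriting care are that the graph identity genuinely uses the unit-preservation clause of the monoid homomorphism $h$ — the rest of the monoid structure plays no role — whereas the converse identity is purely formal, since the converse operation never alters the characteristic morphism and therefore commutes with post-composition by $h$ automatically.
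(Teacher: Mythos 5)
Your proposal is correct and follows exactly the approach the paper's definitions dictate: since $h^*$ acts by postcomposing the characteristic morphism with $h$ (as stated in the paper's proof of theorem~\ref{thm:preorderedmonoidmorphisminduced}), the converse square commutes purely formally, and the graph triangle reduces to unit preservation, $h \circ \chi_{k_1} = \chi_{k_2}$. The paper omits an explicit proof of this proposition (its appendix only outlines key results), but your computation is precisely the routine check intended, and it is complete.
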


\begin{example}
  For any commutative quantale~$Q$ there is a partially ordered monoid morphism~$1 \rightarrow Q$, induced by the monoid unit.
  Here, $1$~is the terminal quantale. Therefore there is a strict symmetric monoidal functor
  \begin{equation*}
    \spanvq{(\Sigma,E)}{1} \rightarrow \spanvq{(\Sigma,E)}{Q}
  \end{equation*}
  This example motivates our use of partially ordered monoids, rather than simply restricting
  to the quantales of interest in our primary applications, as the required morphism is not a quantale morphism.
\end{example}

\begin{example}
  There is a quantale morphism~$\quantale{B} \rightarrow \quantale{C}$ from the Boolean to the Lawvere quantale. The induced functor identifies
  the ordinary binary relations as living within the category~\relq{C} that we introduced to capture
  metric spaces as internal monads.
\end{example}

\section{Algebraic Structure}
\label{sec:algebraicstructure}
We now investigate the interaction between truth values and algebraic structure. Again, this will lead to
functorial relationships between models, but the subject is more delicate than in the previous sections.
The essential detail is that inequation~\eqref{eq:crucialinequation} is only required to hold for the operations in our signature.
It does not directly say anything about derived terms and operations. We will require several definitions
in order to make the situation precise.
\begin{definition}
  Let~$(\Sigma,E)$ be an algebraic signature. We say that a term~$\tau$ over a finite set of variables is
  \begin{itemize}
  \item \define{Linear} if it uses each variable is used exactly once
  \item \define{Affine} if it uses each variable at most once
  \item \define{Relevant} if it uses each variable at least once
  \item \define{Cartesian} to emphasize that its use of variables is unrestricted
  \end{itemize}
  We use the same terminology for the derived operation associated to~$\tau$. 
\end{definition}
\begin{definition}[Interpretation]
  An~\define{interpretation} of signature~$(\Sigma_1,E_1)$ in signature~$(\Sigma_2,E_2)$ is a mapping
  assigning each~$\sigma \in \Sigma_1$ to a derived term of~$(\Sigma_2,E_2)$ of the same arity, such
  that the equations~$E_1$ can be proved in equational logic from~$E_2$. We say that an interpretation
  is linear (affine, relevant, cartesian) if all the derived terms used in the interpretation are
  linear (affine, relevant, cartesian). We write~\linsignatures (\affsignatures, \relsignatures, \cartsignatures)
  for the corresponding categories with objects signatures and morphisms linear (affine, relevant, cartesian) interpretations.
\end{definition}
\begin{definition}
  Let~$\mathcal{E}$ be a topos, and~$(Q, \otimes, k, \leq)$ an internal partially ordered monoid. We say that~$Q$ is
  \begin{itemize}
  \item \define{Linear} to emphasize that no additional axioms are assumed to hold
  \item \define{Affine} if the axiom
    \begin{equation*}
      \forall p, q. p \otimes q \leq p
    \end{equation*}
    is valid
  \item \define{Relevant} if the axiom
    \begin{equation*}
      \forall q. q \leq q \otimes q
    \end{equation*}
    is valid
  \item \define{Cartesian} if it is both affine and relevant
  \end{itemize}
\end{definition}
We note the following important special case.
\begin{example}
  A cartesian commutative quantale is a locale.
\end{example}
So in the case where our truth values have a genuine locale structure, everything becomes very well behaved.
\begin{definition}
  Let~$\mathcal{E}$ be a topos, and~$(Q, \otimes, k, \bigvee)$ an internal commutative quantale. We say that a $Q$-relation is
  \begin{itemize}
  \item \define{Linear} to emphasize that no additional axioms are assumed to hold
  \item \define{Affine} if the axiom
    \begin{equation*}
      \forall a_1,a_2,b_1,b_2. R(a_1,b_1) \otimes R(a_2,b_2) \leq R(a_1,b_1)
    \end{equation*}
    is valid
  \item \define{Relevant} if the axiom
    \begin{equation*}
      \forall a,b. R(a,b) \leq R(a,b) \otimes R(a,b)
    \end{equation*}
    is valid
  \item \define{Cartesian} if it is both affine and relevant
  \end{itemize}
  We write \linrelvq{(\Sigma,E)}{Q}, \affrelvq{(\Sigma,E)}{Q}, \relrelvq{(\Sigma,E)}{Q} and \cartrelvq{(\Sigma,E)}{Q}
  for the corresponding subcategories of algebraic Q-relations.
\end{definition}
Our terminology is derived from that sometimes used for variants of linear logic. If we view truth values as resources, the question is when
can these resources be ``copied'' or ``deleted''. We have adopted a naming convention that is slightly redundant, for example~$\linrelvq{(\Sigma,E)}{Q}$ is the same thing as~$\relvq{(\Sigma,E)}{Q}$.
We permit this redundancy in order to allow uniform statements of the subsequent theorems. We begin with important closure properties of our various classes of morphisms.
\begin{proposition}
  \label{prop:relclosure}
  The subcategories of linear (affine, relevant, cartesian) algebraic $Q$-relations are closed under tensors, converses and the functors induced
  by quantale homomorphisms. Also, the algebraic $Q$-relations in the image of the graph functor are all cartesian.
\end{proposition}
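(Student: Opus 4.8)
The plan is to verify each of the four closure claims pointwise in the internal language of $\mathcal{E}$, reducing everything to the axioms of the commutative quantale; commutativity of $\otimes$, its distribution over arbitrary joins, and the unit law $k \otimes k = k$ are the workhorses throughout. The linear case requires no argument, since it imposes no additional axiom and the linear subcategory is all of $\relvq{(\Sigma,E)}{Q}$; so it suffices to treat affine and relevant, cartesian following as their conjunction.

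First I would treat tensors. Recalling that the monoidal product of relations acts pointwise by $(R \otimes S)((a,c),(b,d)) = R(a,b) \otimes S(c,d)$, the affine inequation for $R \otimes S$ expands into a product of four factors. Using commutativity and associativity of $\otimes$ I would regroup these as $[R(a_1,b_1) \otimes R(a_2,b_2)] \otimes [S(c_1,d_1) \otimes S(c_2,d_2)]$, then apply the affine axioms for $R$ and $S$ together with monotonicity of $\otimes$ to bound this above by $R(a_1,b_1) \otimes S(c_1,d_1)$, which is the required value. The relevant case is entirely dual: one duplicates each factor using relevance of $R$ and $S$, then regroups to recover $(R \otimes S)((a,c),(b,d)) \otimes (R \otimes S)((a,c),(b,d))$.

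Converses are immediate. Since $R^{\circ}(b,a) = R(a,b)$ by proposition~\ref{prop:relationconverse}, the affine and relevant axioms for $R^{\circ}$ are literally those for $R$ after renaming bound variables, so each class is preserved on the nose. For the functors induced by a quantale homomorphism $h$ from theorem~\ref{thm:quantalemorphisminduced}, I would use that $(h^* R)(a,b) = h(R(a,b))$, and that $h$, being a quantale morphism, preserves $\otimes$ and is monotone (it preserves all joins, hence the order). Applying $h$ to the affine or relevant inequation for $R$ and pushing it through $\otimes$ then yields the corresponding inequation for $h^* R$.

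The one step demanding genuine care is that every graph is cartesian, because the internal logic of $\mathcal{E}$ is intuitionistic and I cannot case-split on whether $f(a) = b$. Instead I would argue directly with the subsingleton joins defining $f_{\circ}(a,b) = \bigvee\{k \mid f(a) = b\}$. Distributing $\otimes$ over these joins and using $k \otimes k = k$ gives $f_{\circ}(a_1,b_1) \otimes f_{\circ}(a_2,b_2) = \bigvee\{k \mid f(a_1) = b_1 \wedge f(a_2) = b_2\}$; since the predicate $f(a_1) = b_1 \wedge f(a_2) = b_2$ entails $f(a_1) = b_1$, monotonicity of joins bounds this above by $f_{\circ}(a_1,b_1)$, giving affineness. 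The same computation at a single point shows $f_{\circ}(a,b) \otimes f_{\circ}(a,b) = f_{\circ}(a,b)$, which yields relevance (indeed with equality), so every graph is cartesian. The main obstacle is precisely this last intuitionistic bookkeeping with joins over propositions; the other three parts are routine rearrangements once the pointwise formulas are in hand.
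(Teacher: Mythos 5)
Your proposal is correct in all four parts, and the checks all go through: the regrouping argument for tensors, the variable-renaming observation for converses, monotonicity of a join-preserving homomorphism for $h^*$, and the subsingleton-join computation $f_{\circ}(a_1,b_1) \otimes f_{\circ}(a_2,b_2) = \bigvee\{k \mid f(a_1)=b_1 \wedge f(a_2)=b_2\}$ for graphs are each sound in the internal (intuitionistic) language. Note that the paper itself states this proposition without proof, treating it as routine; your argument supplies exactly the kind of verification the paper performs for its neighbouring results (the identity $\bigvee\{k\mid\varphi\}\otimes\bigvee\{k\mid\psi\}=\bigvee\{k\mid\varphi\wedge\psi\}$ you rely on is the same workhorse used in the paper's proofs of propositions~\ref{prop:relisacategory} and~\ref{prop:relgraph}), so your proof is consistent in both substance and style with what the authors evidently had in mind.
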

A straightforward corollary of the closure properties of proposition~\ref{prop:relclosure} is
\begin{theorem}
  For a topos~$\mathcal{E}$, variety ~$(\Sigma,E)$ in~$\mathcal{E}$ and internal commutative quantale~$(Q, \otimes, k, \bigvee)$, the categories~\linrelvq{(\Sigma,E)}{Q}, \affrelvq{(\Sigma,E)}{Q},
  \relrelvq{(\Sigma,E)}{Q} and \cartrelvq{(\Sigma,E)}{Q} are sub-hypergraph categories of~\relvq{(\Sigma,E)}{Q}.
\end{theorem}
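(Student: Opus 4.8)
The plan is to deduce the statement directly from Proposition~\ref{prop:relclosure} and the explicit description of the hypergraph structure on $\relvq{(\Sigma,E)}{Q}$ given in Theorem~\ref{thm:relvqhypergraph}. Write $\mathcal{C}$ for the ambient hypergraph category $\relvq{(\Sigma,E)}{Q}$ and let $\mathcal{D}$ stand for any of the four subcategories $\linrelvq{(\Sigma,E)}{Q}$, $\affrelvq{(\Sigma,E)}{Q}$, $\relrelvq{(\Sigma,E)}{Q}$, $\cartrelvq{(\Sigma,E)}{Q}$; that each is a subcategory of $\mathcal{C}$ on the same objects is already part of its defining description. The observation driving the argument is that a cartesian relation is by definition both affine and relevant, while linearity imposes no constraint at all, so any morphism that is cartesian automatically lies in all four subcategories simultaneously. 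Hence, to present each $\mathcal{D}$ as a sub-hypergraph category, I would check only that $\mathcal{D}$ is closed under the inherited symmetric monoidal structure and that every chosen Frobenius structure morphism of $\mathcal{C}$ lies in $\mathcal{D}$; all the hypergraph coherence equations then transfer for free, since each is an equation between morphisms already valid in $\mathcal{C}$, and $\mathcal{D}$ has the same morphisms and composition.

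For monoidal closure, Proposition~\ref{prop:relclosure} supplies closure under tensors outright. The remaining structural isomorphisms, namely the associators, unitors and the symmetry, are the images under the strict symmetric monoidal graph functor $\graph{(-)}$ of Proposition~\ref{prop:relgraph} of the corresponding isomorphisms in $\sigmaalg{\Sigma}{E}$, and the last clause of Proposition~\ref{prop:relclosure} records that every relation in the image of the graph functor is cartesian. By the driving observation these isomorphisms therefore lie in each of the four subcategories, so each $\mathcal{D}$ is a symmetric monoidal subcategory of $\mathcal{C}$.

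For the Frobenius structure, Theorem~\ref{thm:relvqhypergraph} tells us that the comonoid morphisms $\delta$ and $\epsilon$ on every object are the graphs of the canonical comonoids of Proposition~\ref{prop:canonicalcomonoid}, hence cartesian by Proposition~\ref{prop:relclosure}, while the monoid morphisms $\mu$ and $\eta$ are their converses and remain cartesian by the closure of cartesian relations under converse recorded in the same proposition. Thus the entire chosen special commutative Frobenius algebra on each object is cartesian, and so lies inside every one of the four subcategories; the coherence condition~\eqref{eq:hypergraphcoherence} with the tensor is then inherited from $\mathcal{C}$.

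Combining these, each $\mathcal{D}$ inherits both the symmetric monoidal structure and the chosen Frobenius algebras of $\mathcal{C}$, the coherence axioms transfer automatically, and the inclusion $\mathcal{D} \hookrightarrow \mathcal{C}$ is a strict hypergraph functor. I do not expect a genuine obstacle here---this is why the result is billed as a corollary---but the one point deserving care is the bookkeeping that the complete complement of structure morphisms, not merely $\mu, \eta, \delta, \epsilon$ but also the symmetry, the remaining coherence isomorphisms and the unit, all fall within $\mathcal{D}$. That this holds is precisely what cartesianness of graph images together with closure under converse delivers, so no fresh calculation beyond Proposition~\ref{prop:relclosure} is required.
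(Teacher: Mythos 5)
Your proposal is correct and follows exactly the route the paper intends: the paper states this theorem as a ``straightforward corollary'' of Proposition~\ref{prop:relclosure} with no further written proof, and your argument is precisely the fleshing-out of that corollary---tensor closure from Proposition~\ref{prop:relclosure}, cartesianness of graph images (hence of the coherence isomorphisms, identities, $\delta$ and $\epsilon$ from Theorem~\ref{thm:relvqhypergraph}), closure under converse for $\mu$ and $\eta$, and the observation that cartesian morphisms lie in all four subcategories so the equations transfer for free.
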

Our restricted classes of relations respect the corresponding classes of terms.
\begin{restatable}{proposition}{relclosedunderterms}\label{prop:relclosedunderterms}
  Let~$\mathcal{E}$ be a topos, $(\Sigma,E)$ a variety in~$\mathcal{E}$ and~$(Q, \otimes, k, \bigvee)$ an internal commutative quantale. For linear (affine, relevant, cartesian)
  algebraic $Q$-relation~$R : A \rightarrow B$ the axiom
  \begin{equation*}
    R(a_1,b_1) \otimes ... \otimes R(a_n,b_n) \leq R(\tau(a_1,...,a_n),\tau(b_1,...,b_n))
  \end{equation*}
  holds for every linear (affine, relevant, cartesian)~$n$-ary derived operation~$\tau$.
\end{restatable}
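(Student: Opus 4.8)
The plan is to prove the result by structural induction on the derived operation~$\tau$, separating its purely structural content from the resource management that distinguishes the four classes. Throughout I work in the internal language of~$\mathcal{E}$, with the induction running over the external syntax of~$\tau$. I first record two facts used repeatedly: since~$Q$ is a quantale, $\otimes$ distributes over joins and is therefore monotone in each argument, and since~$Q$ is commutative I may freely reorder tensor factors.

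First I would prove a graded statement valid for every derived operation~$\tau$, irrespective of class. Writing~$m_i$ for the number of occurrences of the variable~$x_i$ in~$\tau$, the claim is
\begin{equation*}
  \bigotimes_{i=1}^n R(a_i,b_i)^{\otimes m_i} \leq R(\tau(a_1,\ldots,a_n),\tau(b_1,\ldots,b_n)).
\end{equation*}
The base case~$\tau = x_i$ is immediate, as both sides equal~$R(a_i,b_i)$. For the inductive step~$\tau = \sigma(\tau_1,\ldots,\tau_p)$, I would apply the induction hypothesis to each~$\tau_j$, combine the resulting inequalities by monotonicity of~$\otimes$, and then appeal to the defining inequation~\eqref{eq:crucialinequation} for~$\sigma$ with arguments~$\tau_j(a_1,\ldots,a_n)$ and~$\tau_j(b_1,\ldots,b_n)$. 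The multiplicities behave correctly because~$m_i$ is the sum over~$j$ of the occurrences of~$x_i$ in~$\tau_j$, so regrouping the tensor by associativity and commutativity recovers exactly~$\bigotimes_i R(a_i,b_i)^{\otimes m_i}$ on the left. Note that nullary operation symbols form the base of this step, instantiating~\eqref{eq:crucialinequation} with the empty tensor~$k$.

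Second I would convert this graded inequality into the stated one using the class hypothesis on~$R$, abbreviating~$\rho = R(\tau(a_1,\ldots,a_n),\tau(b_1,\ldots,b_n))$ and observing that~$\rho$ is itself a value of~$R$. In the linear case every~$m_i = 1$, so the graded inequality is already the claim. In the relevant case every~$m_i \geq 1$; applying the relevance axiom~$R(a,b) \leq R(a,b) \otimes R(a,b)$ repeatedly yields~$R(a_i,b_i) \leq R(a_i,b_i)^{\otimes m_i}$, and tensoring these together by monotonicity gives~$\bigotimes_i R(a_i,b_i) \leq \bigotimes_i R(a_i,b_i)^{\otimes m_i} \leq \rho$. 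In the affine case every~$m_i \leq 1$; since~$\rho$ is an~$R$-value, the affine axiom gives~$\rho \otimes R(a_i,b_i) \leq \rho$, whence by monotonicity~$\rho$ absorbs any tensor of~$R$-values. Splitting~$\bigotimes_{i=1}^n R(a_i,b_i)$ into the factors with~$m_i = 1$ and those with~$m_i = 0$, the former are bounded by~$\rho$ via the graded inequality, and the latter are then absorbed. The cartesian case combines both manoeuvres.

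The main obstacle I anticipate is exactly this resource management in the affine and cartesian cases, and in particular the degenerate situation in which~$\tau$ discards some of its arguments entirely (for instance a constant regarded as an operation of positive arity), so that there is no surviving~$R$-factor to delete against. The key observation that unlocks this is that the right-hand side~$\rho = R(\tau(a_1,\ldots,a_n),\tau(b_1,\ldots,b_n))$ is always a legitimate value of~$R$, so the affine axiom applies with~$\rho$ as the retained factor; this lets~$\rho$ serve as a uniform anchor absorbing all unwanted factors, including when the graded left-hand side degenerates to the unit~$k$.
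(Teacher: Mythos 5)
Your proof is correct, but it takes a genuinely different route from the paper's. The paper argues by decomposing a derived term into its structural generators: operations are handled by induction on composition, and then projections (for the affine case) and diagonals (for the relevant case) are each shown to satisfy the same inequation as a ``generic operation'' by invoking the corresponding axiom on~$R$ locally, once per occurrence. You instead prove a class-free \emph{graded} inequality $\bigotimes_i R(a_i,b_i)^{\otimes m_i} \leq R(\tau(\vec a),\tau(\vec b))$ for \emph{every} term and \emph{every} algebraic $Q$-relation, by induction over operation application alone, and only afterwards use the class axioms of~$R$ in a single global step to repair the multiplicities (relevance to duplicate factors, affineness to absorb them). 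Your factorization buys two things the paper's proof is vague about: first, the degenerate affine/cartesian case in which~$\tau$ discards variables altogether (e.g.\ a constant viewed as an operation of positive arity) is not literally covered by ``compositions of operations and projections'', since projections cannot reduce arity to zero, and the paper's affine step tacitly assumes a surviving factor to absorb into; your observation that $\rho = R(\tau(\vec a),\tau(\vec b))$ is itself an $R$-value and can serve as the anchor for the affine axiom closes this gap cleanly. Second, you avoid the paper's awkward side remark that a diagonal is not itself a term and must always be pre-composed with an operation, since duplication is handled implicitly by the multiplicities. The paper's local decomposition, on the other hand, transfers more directly to the span-level analogue (proposition~\ref{prop:spanclosedunderterms}), where witnesses must be produced constructively and a global multiplicity-counting argument would need to track existentials.
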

The next proposition is straightforward, it establishes that once our truth values are sufficiently nice, all our relations inherit the same property.
\begin{restatable}{proposition}{relniceness}
  \label{prop:relniceness}
  Let~$\mathcal{E}$ be a topos, $(\Sigma,E)$ a variety in~$\mathcal{E}$ and~$(Q, \otimes, k, \bigvee)$ an internal commutative quantale. If~$Q$ is linear (affine, relevant, cartesian),
  every morphism in~\relvq{(\Sigma,E)}{Q} is linear (affine, relevant, cartesian).
\end{restatable}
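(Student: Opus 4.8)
The plan is to observe that the three nontrivial niceness conditions on a $Q$-relation are precisely substitution instances of the corresponding axioms imposed on the quantale $Q$, so that nothing beyond instantiation is required. The linear case is vacuous: since ``linear'' asserts that no additional axiom is assumed to hold, every morphism in \relvq{(\Sigma,E)}{Q} is trivially linear, and there is nothing to verify.

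For the affine case I would fix an algebraic $Q$-relation $R : A \rightarrow B$. By hypothesis $Q$ is affine, so the internal statement $\forall p, q.\ p \otimes q \leq p$ is valid. Working in the internal language of $\mathcal{E}$, I would instantiate this universally quantified inequality by substituting $p := R(a_1, b_1)$ and $q := R(a_2, b_2)$ for generalized elements $a_1, a_2$ of $A$ and $b_1, b_2$ of $B$. This substitution yields exactly
\[
  R(a_1, b_1) \otimes R(a_2, b_2) \leq R(a_1, b_1),
\]
which is the defining axiom for $R$ to be affine. The relevant case is dual: from the validity of $\forall q.\ q \leq q \otimes q$ one substitutes $q := R(a, b)$ to obtain $R(a,b) \leq R(a,b) \otimes R(a,b)$. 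The cartesian case then follows immediately by combining the two arguments, since ``cartesian'' means ``affine and relevant'' uniformly, both for $Q$ and for relations.

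The only point requiring care — and the reason this warrants a proposition rather than a bare remark — is the internal-logical justification that a universally quantified inequality valid in $Q$ may be instantiated at the composite morphisms $R(a_i, b_i)$. This is standard: the niceness axioms of $Q$ are closed formulae valid in the internal language, and validity is preserved under substitution of terms, here the terms obtained by pairing generalized elements of $A$ and $B$ and postcomposing with $R$. I expect no real obstacle. Notably, the algebraic structure $(\Sigma,E)$ and the closure inequation~\eqref{eq:crucialinequation} play no role at all, since the conclusion concerns only the interaction of $R$ with the monoidal product $\otimes$ on truth values, and so the argument is insensitive to the choice of variety.
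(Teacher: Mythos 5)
Your proposal is correct and matches the paper's own proof: the paper likewise treats the affine case by noting that $R(a_1,b_1)$ and $R(a_2,b_2)$ are elements of $Q$, instantiates the affine axiom of $Q$ at them, and then universally quantifies, with the other cases declared similar. Your additional remark about substitution into internally valid formulae just makes explicit the internal-language step the paper leaves implicit.
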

In particular, if our quantale is in fact a locale, proposition~\ref{prop:relniceness} tells us that everything becomes as straightforward as we might hope.

Finally, we can establish a contravariant functorial relationship between interpretations and functors between models.
\begin{restatable}{theorem}{interpretationinducedrel}
  \label{thm:interpretationinducedrel}
  Let~$\mathcal{E}$ be a topos and~$(Q, \otimes, k, \bigvee)$ an internal commutative quantale.
  Let~$i : (\Sigma_1, E_1) \rightarrow (\Sigma_2, E_2)$ be a linear interpretation of signatures. There is a strict monoidal functor
  \begin{equation*}
    i^* : \linrelvq{(\Sigma_2,E_2)}{Q} \rightarrow \linrelvq{(\Sigma_1,E_1)}{Q}
  \end{equation*}
  The assignment~$i \mapsto i^*$ extends to a contravariant functor.

  Similar results hold for affine, relevant and cartesian interpretations and relations.
\end{restatable}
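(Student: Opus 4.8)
The plan is to construct $i^*$ from the classical algebraic reduct functor, observing that on underlying $Q$-relations it is simply the identity, so that all composition and hypergraph data is transported for free and the only genuine content is checking the defining inequation. For the object action, recall that a linear interpretation $i$ assigns to each $\sigma \in \Sigma_1$ a linear derived term $i(\sigma)$ over $(\Sigma_2,E_2)$ of matching arity. Given a $(\Sigma_2,E_2)$-algebra $A$, I define $i^*A$ to be the same underlying $\mathcal{E}$-object equipped, for each $\sigma \in \Sigma_1$, with the derived operation interpreting $i(\sigma)$ in $A$. Since $i$ is an interpretation, every equation of $E_1$ is provable from $E_2$; as $A$ satisfies $E_2$ and equational logic is sound in the internal language of $\mathcal{E}$, the reduct $i^*A$ satisfies $E_1$, so it is a genuine object of $\linrelvq{(\Sigma_1,E_1)}{Q}$.

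Next I would treat morphisms, which is the crux. Let $R : A \rightarrow B$ be a morphism of $\linrelvq{(\Sigma_2,E_2)}{Q}$, that is, a linear algebraic $Q$-relation on the underlying objects. I set $i^*R$ to be the \emph{same} underlying $Q$-relation, now regarded as a relation between $i^*A$ and $i^*B$. To see this is legitimate I must verify inequation~\eqref{eq:crucialinequation} for each $\sigma \in \Sigma_1$, namely
\begin{equation*}
  R(a_1,b_1) \otimes \cdots \otimes R(a_n,b_n) \leq R\big(\sigma^{i^*A}(\vec a), \sigma^{i^*B}(\vec b)\big).
\end{equation*}
But $\sigma^{i^*A}$ and $\sigma^{i^*B}$ are by construction the linear derived operation $i(\sigma)$ interpreted in $A$ and $B$, and $R$ is linear, so this is precisely the conclusion of proposition~\ref{prop:relclosedunderterms} applied to the linear derived operation $\tau = i(\sigma)$. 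Hence $i^*R$ is a linear algebraic $Q$-relation for $(\Sigma_1,E_1)$. The matching of the \emph{linear} class of interpretation to the \emph{linear} class of relation is exactly what makes this work, and is the only real obstacle in the argument: it is the resource-sensitivity phenomenon, and substituting a non-linear term into a linear relation would break the inequation.

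Functoriality of $i^*$ is then immediate, since identities and composites of $Q$-relations depend only on the underlying $\mathcal{E}$-data, which $i^*$ leaves untouched. For strictness of the monoidal structure I would note that the tensor is inherited from the finite products of $\mathcal{E}$ (proposition~\ref{prop:relisacategory}) and that the reduct of a product algebra is computed componentwise; since derived operations are interpreted componentwise in a product, $i^*(A \times B) = i^*A \times i^*B$ and $i^*$ fixes the unit, on the nose. Thus $i^*$ is a strict monoidal functor.

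For contravariant functoriality of $i \mapsto i^*$, I would use that the identity interpretation induces the identity functor and that composing interpretations substitutes derived terms: interpreting $i(\sigma)$ through operations that are themselves interpretations of $j$ agrees with interpreting the substituted term $(j \circ i)(\sigma)$, so reducts compose contravariantly on objects while $i^*$ is the identity on underlying relations, giving $(j \circ i)^* = i^* \circ j^*$. Finally, the affine, relevant and cartesian cases are verbatim the same argument, invoking the corresponding clauses of proposition~\ref{prop:relclosedunderterms} with the term class again matched to the relation class; closure of each interpretation class under composition (for example, substituting affine terms into affine terms yields affine terms) ensures that the relevant category of signatures and the induced assignment are well defined.
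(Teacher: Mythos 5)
Your proposal is correct and follows essentially the same route as the paper's proof: $i^*$ takes reducts on objects, acts as the identity on underlying $Q$-relations, and the verification that morphisms remain algebraic is delegated to proposition~\ref{prop:relclosedunderterms} with $\tau = i(\sigma)$, matching the term class to the relation class. You in fact spell out more than the paper does (strictness of the monoidal structure via componentwise reducts, and the contravariant functoriality of $i \mapsto i^*$ via substitution of derived terms), both of which the paper dismisses as trivial.
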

As usual, the induced functors respect the essential graph and converse structure.
\begin{proposition}
  With the same assumptions, the induced functor~$i^*$ of theorem~\ref{thm:interpretationinducedrel} commutes with graphs and converses.
  That is, the following diagrams commute:
  \begin{equation*}
    \begin{tikzpicture}[scale=0.5, node distance=1.5cm, ->]
      \node (bl) {\sigmaalg{\Sigma_2}{E_2}};
      \node[above of=bl] (tl) {\linrelvq{(\Sigma_2,E_2)}{Q}};
      \node[right of=bl, node distance=4cm] (br) {\sigmaalg{\Sigma_1}{E_1}};
      \node[above of=br] (tr) {\linrelvq{(\Sigma_1,E_1)}{Q}};
      \draw (bl) to node[left]{\graph{(-)}} (tl);
      \draw (br) to node[right]{\graph{(-)}} (tr);
      \draw (bl) to node[below]{$i^*$} (br);
      \draw (tl) to node[above]{$i^*$} (tr);
    \end{tikzpicture}
  \end{equation*}
  \begin{equation*}
    \begin{tikzpicture}[scale=0.5, node distance=1.5cm, ->]
      \node (tl) {$\linrelvq{(\Sigma_2,E_2)}{Q}^{op}$};
      \node[below of=tl] (bl) {\linrelvq{(\Sigma_2,E_2)}{Q}};
      \node[right of=tl, node distance=4cm] (tr) {$\linrelvq{(\Sigma_1,E_1)}{Q}^{op}$};
      \node[below of=tr] (br) {\linrelvq{(\Sigma_1,E_1)}{Q}};
      \draw (tl) to node[left]{\converse{(-)}} (bl);
      \draw (tr) to node[right]{\converse{(-)}} (br);
      \draw (tl) to node[above]{$(i^*)^{op}$} (tr);
      \draw (bl) to node[below]{$i^*$} (br);
    \end{tikzpicture}
  \end{equation*}
  The bottom functor in these diagrams is the obvious induced functor between categories of algebras.
  Similar diagrams commute for affine, relevant and cartesian interpretations and relations.
\end{proposition}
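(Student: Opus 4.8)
The plan is to observe that the induced functor $i^*$ acts trivially on the underlying $Q$-relation, changing only the $(\Sigma,E)$-algebra structure carried by the objects, and that both graph and converse are defined purely in terms of this underlying data. Once this is established, both diagrams commute essentially by inspection.

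First I would recall the construction of $i^*$ from theorem~\ref{thm:interpretationinducedrel}. On objects, $i^*$ reinterprets each $(\Sigma_1,E_1)$-algebra structure by composing the interpretation $i$ with the ambient $(\Sigma_2,E_2)$-structure; this is precisely the obvious induced functor between categories of algebras appearing as the bottom arrow of the first diagram, and crucially it leaves the underlying $\mathcal{E}$-object unchanged. On morphisms, given an algebraic $Q$-relation $R : A \rightarrow B$, the functor $i^*$ returns the same underlying map $A \times B \rightarrow Q$, now regarded as a relation between the reinterpreted algebras $i^*(A)$ and $i^*(B)$. Well-definedness---that $i^*(R)$ still satisfies the required inequation for every $\sigma \in \Sigma_1$---is exactly the content of proposition~\ref{prop:relclosedunderterms}, since each such $\sigma$ is assigned a linear (respectively affine, relevant, cartesian) derived operation in $(\Sigma_2,E_2)$.

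With this in hand, the graph square follows by evaluating both composites on an algebra morphism $f : A \rightarrow B$. Going up then across yields $i^*(\graph{f})$, whose underlying relation is $\bigvee\{k \mid f(a) = b\}$. Going across then up first forms $i^*(f)$, which shares the same underlying $\mathcal{E}$-morphism as $f$, and then takes its graph, again producing the relation sending $(a,b)$ to $\bigvee\{k \mid f(a) = b\}$. Since $i^*$ matches the algebra functor on objects, the two composites agree on both objects and morphisms. The converse square is handled identically: for a relation $R$, one composite gives $i^*(\converse{R})$ with underlying map sending $(b,a)$ to $R(a,b)$, while the other gives $\converse{(i^*(R))}$ with the same underlying map, and the objects again coincide.

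The only point requiring genuine care---and the nearest thing to an obstacle---is verifying that $i^*$ really does agree on objects with the algebra-level functor, so that the four corners of each square live in the intended categories; this is immediate from the definition of $i^*$ but must be stated for the squares to type-check. The four resource-sensitivity variants are uniform, since proposition~\ref{prop:relclosedunderterms} supplies the needed closure in each case, so no separate argument is required for the affine, relevant and cartesian diagrams.
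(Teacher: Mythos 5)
Your proposal is correct and matches the paper's intended argument: the paper treats this proposition as immediate because, as its proof of theorem~\ref{thm:interpretationinducedrel} establishes, $i^*$ is the identity on morphisms and merely reinterprets the algebra structure on objects via $i$, so commutation with \graph{(-)} and \converse{(-)} follows by inspection of the underlying data. Your appeal to proposition~\ref{prop:relclosedunderterms} for well-definedness across the linear, affine, relevant and cartesian variants is likewise exactly the mechanism the paper uses.
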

We now introduce similar definitions for the setting of spans, in order to proceed with a similar analysis.
\begin{definition}
  Let~$\mathcal{E}$ be a topos, and~$(Q, \otimes, k, \leq)$ an internal partially ordered monoid. We say that a~$Q$-span~$(X,f,g,\chi)$ is
  \begin{itemize}
  \item \define{Linear} to emphasize that no additional axioms are assumed to hold
  \item \define{Affine} if the axiom~$\forall x_1,x_2. \chi(x_1) \otimes \chi(x_2) \leq \chi(x_1)$ is valid
  \item \define{Relevant} if the axiom~$\forall x. \chi(x) \leq \chi(x) \otimes \chi(x)$ is valid
  \item \define{Cartesian} if it is both affine and relevant
  \end{itemize}
  We write \linspanvq{(\Sigma,E)}{Q}, \affspanvq{(\Sigma,E)}{Q}, \relspanvq{(\Sigma,E)}{Q} and \cartspanvq{(\Sigma,E)}{Q}
  for the corresponding subcategories of algebraic Q-spans.
\end{definition}
Spans with sufficient structure respect the corresponding types of terms.
\begin{restatable}{proposition}{spanclosedterms}\label{prop:spanclosedunderterms}
  Let~$\mathcal{E}$ be a topos, $(\Sigma,E)$ a variety in~$\mathcal{E}$, and~$(Q, \otimes, k, \leq)$ an internal partially ordered commutative monoid.
  For~$(\Sigma,E)$-algebras~$A$ and~$B$, and linear (affine, relevant, cartesian) algebraic $Q$-span~$(X,f,g,\chi)$
  and $n$-ary linear (affine, relevant, cartesian) term~$\tau$ if
  \begin{equation*}
    f(x_1) = a_1 \wedge g(x_1) = b_1 \wedge ... \wedge f(x_n) = a_n \wedge g(x_n) = b_n
  \end{equation*}
  then there exists~$x$ such that
  \begin{equation*}
    f(x) = \tau(a_1,...,a_n) \wedge g(x) = \tau(b_1,...,b_n)
  \end{equation*}
  and
  \begin{equation*}
    \chi(x_1) \otimes ... \otimes \chi(x_n) \leq \chi(x)
  \end{equation*}
\end{restatable}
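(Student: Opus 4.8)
The plan is to prove the statement by structural induction on the term $\tau$, running the linear, affine, relevant and cartesian cases in parallel. The inductive engine is the defining axiom of an algebraic $Q$-span (the case $\tau = \sigma$ for a single operation symbol), while the two resource-sensitivity axioms on $Q$, namely the affine axiom $\chi(x_1) \otimes \chi(x_2) \leq \chi(x_1)$ and the relevant axiom $\chi(x) \leq \chi(x) \otimes \chi(x)$, absorb the bookkeeping for variables used fewer or more times than once. This parallels the relation version, Proposition~\ref{prop:relclosedunderterms}, the only genuine addition being that we must now explicitly produce the proof witness $x$ rather than merely an inequality of truth values. Throughout, the induction on term structure is external while each individual step is an argument in the internal language of $\mathcal{E}$; chaining the internal existentials supplied by the induction hypotheses and the span axiom is routine intuitionistic reasoning. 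I write $\tau(\vec a)$ for $\tau(a_1,\ldots,a_n)$.

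For the base case $\tau = v_i$, I would take $x := x_i$, so that $f(x) = a_i = \tau(\vec a)$ and $g(x) = b_i = \tau(\vec b)$ hold at once. The only content is the inequality $\chi(x_1) \otimes \cdots \otimes \chi(x_n) \leq \chi(x_i)$. A projection is a linear or relevant $n$-ary term only when $n = 1$, where this is reflexivity; in the affine and cartesian cases, where variables may be discarded, I would weaken away each factor $\chi(x_j)$ for $j \neq i$ using commutativity of $\otimes$ together with the affine axiom.

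For the inductive step, write $\tau = \sigma(\tau_1, \ldots, \tau_m)$, and let $V_i \subseteq \{v_1,\ldots,v_n\}$ be the variables occurring in $\tau_i$, so that each $\tau_i$ is a term of the same class over $V_i$. Applying the induction hypothesis to each $\tau_i$ (with the witnesses $x_j$ for $j \in V_i$) yields a witness $y_i$ satisfying $f(y_i) = \tau_i(\vec a)$, $g(y_i) = \tau_i(\vec b)$ and $\bigotimes_{j \in V_i} \chi(x_j) \leq \chi(y_i)$. Feeding $y_1,\ldots,y_m$ into the defining span axiom for $\sigma$ then produces an $x$ with $f(x) = \sigma(\tau_1(\vec a),\ldots,\tau_m(\vec a)) = \tau(\vec a)$, likewise $g(x) = \tau(\vec b)$, and $\chi(y_1) \otimes \cdots \otimes \chi(y_m) \leq \chi(x)$. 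By monotonicity of $\otimes$ we obtain $\bigotimes_{i}\bigotimes_{j \in V_i}\chi(x_j) \leq \chi(x)$, and it remains only to compare $\chi(x_1) \otimes \cdots \otimes \chi(x_n)$ with the iterated product on the left.

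The heart of the argument, and the step I expect to be the main obstacle, is precisely this last comparison, since it is where the four cases diverge and where the resource discipline of $\tau$ must be matched against the resource axioms of $Q$. Each $\chi(x_j)$ occurs in $\bigotimes_{i}\bigotimes_{j \in V_i}\chi(x_j)$ with multiplicity $c_j$ equal to the number of subterms $\tau_i$ containing $v_j$. For linear $\tau$ we have $c_j = 1$ for every $j$, so the two products agree up to commutativity. For affine $\tau$ we have $c_j \in \{0,1\}$, and the omitted variables are weakened away using the affine axiom; for relevant $\tau$ we have $c_j \geq 1$, and each factor is duplicated as required using the relevant axiom; the cartesian case combines both manoeuvres. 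Verifying that commutativity of $\otimes$ permits the necessary reordering, and tracking the multiplicities $c_j$ carefully, is the only delicate point; everything else follows formally from the span axiom, monotonicity and transitivity of $\leq$.
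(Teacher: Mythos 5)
Your proposal is correct and takes essentially the same route as the paper: an induction on term structure whose engine is the defining span axiom for operation symbols, with discarded variables (projections) absorbed by the affine axiom and duplicated variables (diagonals) absorbed by the relevant axiom. If anything, your explicit tracking of the multiplicities $c_j$ is more careful than the paper's own write-up, which proves only the linear case in detail and defers the affine, relevant and cartesian cases to an adaptation of the relational argument of proposition~\ref{prop:relclosedunderterms}.
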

Again, we have good closure of our various classes of morphisms.
\begin{proposition}
  \label{prop:spanclosure}
  The subcategories of linear (affine, relevant, cartesian) algebraic $Q$-spans are closed under tensors, converses and the functors induced
  by quantale homomorphisms. Also, the algebraic $Q$-spans in the image of the graph functor are all cartesian.
\end{proposition}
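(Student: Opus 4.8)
The plan is to follow the template of proposition~\ref{prop:relclosure}, exploiting the fact that the linear, affine, relevant and cartesian conditions are constraints on the characteristic morphism~$\chi$ alone, and say nothing about the legs~$f$ and~$g$ nor about the ambient algebraic span axiom. Consequently the underlying span data and the algebraic span axiom are already preserved under tensors, converses and the functors~$h^*$ by proposition~\ref{prop:spanvqsmc}, proposition~\ref{prop:spanconverse} and theorem~\ref{thm:preorderedmonoidmorphisminduced} respectively, so the only work is to track~$\chi$ through each operation and verify the single relevant inequality. Throughout I would use that the monoid multiplication of a partially ordered commutative monoid is monotone, commutative and associative, and that~$k \otimes k = k$.

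I would dispatch the easy cases first. The converse of~$(X,f,g,\chi)$ is~$(X,g,f,\chi)$, which has the \emph{same} characteristic morphism, so each of the four conditions is preserved verbatim. For the functor~$h^*$ induced by a homomorphism~$h : Q_1 \rightarrow Q_2$, the characteristic morphism becomes~$h \circ \chi$; since~$h$ is a monoid homomorphism we have~$h(\chi(x_1)) \otimes h(\chi(x_2)) = h(\chi(x_1) \otimes \chi(x_2))$, and since~$h$ is order-preserving, applying~$h$ to the affine inequality~$\chi(x_1) \otimes \chi(x_2) \leq \chi(x_1)$ yields the affine inequality for~$h \circ \chi$; the relevant case is identical. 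For the graph functor, the characteristic morphism is the constant~$\chi_k$, and the identity~$k \otimes k = k$ makes both the affine and relevant inequalities hold with equality, so every span in the image of the graph functor is cartesian.

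The only genuinely calculational step is closure under tensors. Here the tensor of~$(X_1,f_1,g_1,\chi_1)$ and~$(X_2,f_2,g_2,\chi_2)$ carries, by the monoidal structure of proposition~\ref{prop:spanvqsmc}, the characteristic morphism~$(x_1,x_2) \mapsto \chi_1(x_1) \otimes \chi_2(x_2)$. For the affine case I would expand
\[
  \big(\chi_1(x_1) \otimes \chi_2(x_2)\big) \otimes \big(\chi_1(y_1) \otimes \chi_2(y_2)\big),
\]
reorder by commutativity and associativity so as to group the two~$X_1$-factors and the two~$X_2$-factors, apply the affine axiom in each component, and recombine using monotonicity to obtain a bound by~$\chi_1(x_1) \otimes \chi_2(x_2)$. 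The relevant case is dual: starting from~$\chi_1(x_1) \otimes \chi_2(x_2)$, inflate each factor using the relevant axiom and monotonicity, then regroup. The cartesian case is the conjunction of these two, and the linear case is vacuous.

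I expect no serious obstacle: the argument is uniform across the four classes and is essentially bookkeeping. The one point demanding care is the tensor computation, where the rearrangement of the four-fold product into its~$X_1$- and~$X_2$-components must be justified by commutativity and associativity, and the componentwise inequalities combined via monotonicity of~$\otimes$ --- exactly the structure guaranteed by the partially ordered commutative monoid hypothesis.
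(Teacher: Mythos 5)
Your proposal is correct, and it is exactly the routine verification the paper intends: the paper states proposition~\ref{prop:spanclosure} without an explicit proof in the appendix, treating it as immediate bookkeeping, and your argument supplies precisely that bookkeeping. Your reduction to tracking only~$\chi$ (converse leaves~$\chi$ unchanged, $h^*$ post-composes with a monotone monoid homomorphism, graphs have constant characteristic~$\chi_k$ with~$k \otimes k = k$, and the tensor case follows by regrouping the four-fold product via commutativity, associativity and monotonicity of~$\otimes$) is the natural and correct route.
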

As with relations, the closure properties of proposition~\ref{prop:spanclosure} yield a straightforward corollary about our subcategories of algebraic~$Q$-spans.
\begin{theorem}
  For a topos~$\mathcal{E}$, variety ~$(\Sigma,E)$ in~$\mathcal{E}$ and internal partially ordered monoid~$(Q, \otimes, k, \leq)$, the categories~\linspanvq{(\Sigma,E)}{Q}, \affspanvq{(\Sigma,E)}{Q},
  \relspanvq{(\Sigma,E)}{Q} and \cartspanvq{(\Sigma,E)}{Q} are sub-hypergraph categories of~\spanvq{(\Sigma,E)}{Q}.
\end{theorem}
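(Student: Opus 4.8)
The plan is to check, for all four classes simultaneously, the three requirements for a subcategory to be a sub-hypergraph category of $\spanvq{(\Sigma,E)}{Q}$: that it is a genuine subcategory, that it is closed under the symmetric monoidal structure, and that it contains the chosen special commutative Frobenius structure on every object. Since a cartesian span is by definition both affine and relevant, while every span is vacuously linear, it is enough throughout to exhibit the relevant structural morphisms as \emph{cartesian}, as they then automatically lie in all four classes at once.

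First I would confirm we have subcategories. The identity at $A$ is the graph $\graph{1_A} = (A,1,1,\chi_k)$, so by the final clause of proposition~\ref{prop:spanclosure} (spans in the image of the graph functor are cartesian) all identities lie in every class. For closure under composition, recall that the composite of $(X,f,g,\chi)$ and $(Y,h,k,\xi)$ carries characteristic morphism $(x,y) \mapsto \chi(x) \otimes \xi(y)$; using commutativity, associativity and monotonicity of $\otimes$ one rewrites the affine inequality for the composite as $(\chi(x) \otimes \chi(x')) \otimes (\xi(y) \otimes \xi(y')) \leq \chi(x) \otimes \xi(y)$ and factors it through the affine inequalities of the two components, and symmetrically for the relevant inequality. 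Closure under the tensor is exactly part of proposition~\ref{prop:spanclosure}.

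It then remains to place the symmetric monoidal and Frobenius data inside each subcategory. Because $\graph{(-)}$ is a strict symmetric monoidal functor, the associators, unitors and symmetries of $\spanvq{(\Sigma,E)}{Q}$ are the graphs of the corresponding isomorphisms of $\sigmaalg{\Sigma}{E}$, hence cartesian by proposition~\ref{prop:spanclosure}; this finishes the verification that each class is a sub-symmetric monoidal category. By theorem~\ref{thm:spanvqhypergraph} the comonoid morphisms $\epsilon$ and $\delta$ on each object are the graphs of the canonical comonoids of proposition~\ref{prop:canonicalcomonoid}, so they are cartesian for the same reason. The monoids $\eta$ and $\mu$ are their converses, and proposition~\ref{prop:spanclosure} guarantees closure under converse---which in any case merely reverses the legs of a span and leaves its characteristic morphism, and hence its class, untouched---so these too are cartesian. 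All the Frobenius data therefore lie in the cartesian class, and a fortiori in the affine, relevant and linear classes.

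I expect no serious obstacle: the entire hypergraph structure is assembled from graphs and their converses, graphs fall into the smallest class and converses preserve every class, so each structural morphism lands in all four subcategories. The special commutative Frobenius axioms, the hypergraph coherence conditions and the symmetric monoidal coherences are then inherited verbatim from $\spanvq{(\Sigma,E)}{Q}$, since they are equations between morphisms that now all lie in the subcategory, which is itself closed under composition and tensor. The only genuinely computational point is the closure under composition of the affine and relevant classes noted above, which is the routine inequality manipulation indicated.
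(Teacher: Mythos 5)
Your proposal is correct and follows essentially the same route as the paper, which presents this theorem as a straightforward corollary of proposition~\ref{prop:spanclosure}: the hypergraph structure of \spanvq{(\Sigma,E)}{Q} consists of graphs (which are cartesian, hence in every class) and their converses (which preserve each class), and the classes are closed under composition and tensor. The only detail you make explicit that the paper leaves implicit in the word ``subcategories'' of proposition~\ref{prop:spanclosure} is the closure of the affine and relevant classes under composition, and your monotonicity-and-commutativity manipulation of the composite characteristic morphism $\chi(x)\otimes\xi(y)$ is exactly the right verification.
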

As with relations, algebraic~$Q$-spans inherit good properties from their underlying quantale.
\begin{proposition}
  Let~$\mathcal{E}$ be a topos, $(\Sigma,E)$ a variety in~$\mathcal{E}$ and~$(Q, \otimes, k, \leq)$ an internal partially ordered monoid. If~$Q$ is linear (affine, relevant, cartesian),
  every morphism in~\spanvq{(\Sigma,E)}{Q} is linear (affine, relevant, cartesian).
\end{proposition}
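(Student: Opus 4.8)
The plan is to establish each of the four cases by directly instantiating the defining axiom on the truth values $Q$ at values supplied by the characteristic morphism, with all reasoning carried out in the internal language of $\mathcal{E}$. This mirrors the argument for Proposition~\ref{prop:relniceness}, but is in fact more immediate: for spans the niceness conditions are phrased directly in terms of $\chi$, so no manipulation of composite relations or of the legs $f,g$ is needed.

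The linear case is vacuous, since ``linear'' imposes no additional axiom on either $Q$ or the span, so every algebraic $Q$-span is trivially linear. For the affine case, I would assume $Q$ validates $\forall p, q.\ p \otimes q \leq p$ and, given an algebraic $Q$-span $(X,f,g,\chi)$, instantiate this axiom with $p := \chi(x_1)$ and $q := \chi(x_2)$, yielding $\chi(x_1) \otimes \chi(x_2) \leq \chi(x_1)$, which is exactly the affineness condition on the span. The relevant case is handled dually: instantiating $\forall q.\ q \leq q \otimes q$ at $q := \chi(x)$ gives $\chi(x) \leq \chi(x) \otimes \chi(x)$. The cartesian case then follows by combining the previous two, since being cartesian means being simultaneously affine and relevant.

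I do not anticipate any real obstacle, as the content is essentially a single universal instantiation in each case. The only point deserving care is that these instantiations must be understood in the internal logic of the topos rather than via global elements: the defining properties are universally quantified internal formulas, and the characteristic morphism $\chi : X \rightarrow Q$ provides precisely the internal term to substitute, so each derivation transports verbatim into $\mathcal{E}$. Consequently the conclusion holds for every morphism of the category of algebraic $Q$-spans.
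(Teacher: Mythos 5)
Your proof is correct and takes essentially the same approach as the paper's argument (which is given explicitly only for the relational analogue, the span case being left as ``similar''): one instantiates the axiom assumed of $Q$ at values of the characteristic morphism within the internal language and then universally generalizes, the linear and cartesian cases being respectively vacuous and a conjunction of the other two. The only difference is cosmetic, in that you spell out all four cases where the paper proves the affine case and declares the rest analogous.
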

Again, we can now establish a contravariant functorial relationship between interpretations and functors between models.
\begin{restatable}{theorem}{interpretationinducedspan}
  \label{thm:interpretationinducedspan}
  Let~$\mathcal{E}$ be a topos and~$(Q, \otimes, k, \leq)$ an internal partially ordered commutative monoid.
  Let~$i : (\Sigma_1, E_1) \rightarrow (\Sigma_2, E_2)$ be a linear interpretation of signatures. There is a strict monoidal functor
  \begin{equation*}
    i^* : \linspanvq{(\Sigma_2,E_2)}{Q} \rightarrow \linspanvq{(\Sigma_1,E_1)}{Q}
  \end{equation*}
  The assignment~$i \mapsto i^*$ extends to a contravariant functor.

  Similar results hold for affine, relevant and cartesian interpretations and spans.
\end{restatable}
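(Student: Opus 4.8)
The plan is to realize $i^*$ through the standard reindexing of algebras along the interpretation $i$, carrying the underlying $Q$-span data along unchanged, and to reduce the only nontrivial verification to proposition~\ref{prop:spanclosedunderterms}.

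First I would recall the reindexing functor on algebras. The interpretation assigns to each $\sigma \in \Sigma_1$ a derived term $i(\sigma)$ of $(\Sigma_2,E_2)$ of the same arity, with $E_1$ provable from $E_2$. Hence any $(\Sigma_2,E_2)$-algebra $A$ becomes a $(\Sigma_1,E_1)$-algebra $i^*A$ on the same underlying $\mathcal{E}$-object, by interpreting each $\sigma$ as the derived operation $i(\sigma)^A$; the equations $E_1$ hold because they are derivable from $E_2$. This assignment is contravariantly functorial in $i$ by the usual composition of substitutions, and it strictly preserves finite products, since products of algebras and all derived operations are computed componentwise.

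Next, on morphisms I would define $i^*$ to leave the underlying $Q$-span $(X,f,g,\chi)$ untouched, merely reinterpreting its endpoints as $(\Sigma_1,E_1)$-algebras. The crux is to check that this data is a genuine algebraic $Q$-span over $(\Sigma_1,E_1)$: for each $\sigma \in \Sigma_1$ the span axiom must hold with respect to the operation $\sigma^{i^*A} = i(\sigma)^A$ and its counterpart on $B$. But since $i$ is a linear interpretation the derived term $i(\sigma)$ is linear, and by hypothesis the span is linear, so proposition~\ref{prop:spanclosedunderterms} supplies exactly the required witness $x$ with $f(x) = i(\sigma)^A(a_1,\dots,a_n)$, $g(x) = i(\sigma)^B(b_1,\dots,b_n)$ and $\chi(x_1) \otimes \dots \otimes \chi(x_n) \leq \chi(x)$. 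The affine, relevant and cartesian cases are identical, appealing to the corresponding clause of that proposition; membership in the relevant subcategory is then automatic, since the affine, relevant and cartesian conditions constrain only $\chi$, which $i^*$ does not alter.

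It remains to assemble the functor. Because the identities $(A,1,1,\chi_k)$, composition of $Q$-spans by pullback, the morphisms of $Q$-spans, and the symmetric monoidal structure are all defined purely in terms of the underlying $\mathcal{E}$-data, they are untouched by reindexing; hence $i^*$ immediately preserves identities and composition, descends to isomorphism classes, and is strict symmetric monoidal (the product-preservation of algebra reindexing noted above matching the inherited tensor of proposition~\ref{prop:spanvqsmc}). Contravariant functoriality of $i \mapsto i^*$ reduces to that of algebra reindexing. I expect no obstacle beyond this bookkeeping: the genuine work of producing a proof witness while respecting the resource discipline is carried entirely by proposition~\ref{prop:spanclosedunderterms}, and the alignment of the linearity of the interpretation with the linearity of the spans is precisely what makes that proposition applicable. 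This mirrors the relational argument of theorem~\ref{thm:interpretationinducedrel}, with the one additional subtlety that here an actual witness, rather than an inequality of truth values, must be exhibited.
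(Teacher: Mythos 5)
Your proposal is correct and follows essentially the same route as the paper's proof: define $i^*$ by reindexing algebras along the interpretation (identity on the underlying $\mathcal{E}$-object and on morphisms), and invoke proposition~\ref{prop:spanclosedunderterms} to see that a linear (affine, relevant, cartesian) algebraic $Q$-span over $(\Sigma_2,E_2)$ remains algebraic when its endpoints are reinterpreted as $(\Sigma_1,E_1)$-algebras, with functoriality then trivial. Your write-up is in fact more explicit than the paper's (which simply defers to the relational case of theorem~\ref{thm:interpretationinducedrel}), but the decomposition and the key lemma are identical.
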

The induced functors respect the usual essential structure.
\begin{proposition}
  For the same assumptions, the induced functor~$i^*$ of theorem~\ref{thm:interpretationinducedspan} commutes with
  graphs and converses. That is, the following diagrams commute
  \begin{equation*}
    \begin{tikzpicture}[scale=0.5, node distance=1.5cm, ->]
      \node (bl) {\sigmaalg{\Sigma_2}{E_2}};
      \node[above of=bl] (tl) {\linspanvq{(\Sigma_2,E_2)}{Q}};
      \node[right of=bl, node distance=4cm] (br) {\sigmaalg{\Sigma_1}{E_1}};
      \node[above of=br] (tr) {\linspanvq{(\Sigma_1,E_1)}{Q}};
      \draw (bl) to node[left]{\graph{(-)}} (tl);
      \draw (br) to node[right]{\graph{(-)}} (tr);
      \draw (bl) to node[below]{$i^*$} (br);
      \draw (tl) to node[above]{$i^*$} (tr);
    \end{tikzpicture}
  \end{equation*}
  \begin{equation*}
    \begin{tikzpicture}[scale=0.5, node distance=1.5cm, ->]
      \node (tl) {$\linspanvq{(\Sigma_2,E_2)}{Q}^{op}$};
      \node[below of=tl] (bl) {\linspanvq{(\Sigma_2,E_2)}{Q}};
      \node[right of=tl, node distance=4cm] (tr) {$\linspanvq{(\Sigma_1,E_1)}{Q}^{op}$};
      \node[below of=tr] (br) {\linspanvq{(\Sigma_1,E_1)}{Q}};
      \draw (tl) to node[left]{\converse{(-)}} (bl);
      \draw (tr) to node[right]{\converse{(-)}} (br);
      \draw (tl) to node[above]{$(i^*)^{op}$} (tr);
      \draw (bl) to node[below]{$i^*$} (br);
    \end{tikzpicture}
  \end{equation*}
  The bottom functor in these diagrams is the obvious induced functor between categories of algebras.
  Similar diagrams commute for affine, relevant and cartesian interpretations and relations.
\end{proposition}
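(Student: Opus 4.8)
The plan is to show that both squares commute strictly, exploiting the fact that the functor~$i^*$ on spans, together with the graph and converse functors, all act trivially on the underlying span data in~$\mathcal{E}$, modifying only the algebra structure carried by the objects. Concretely, I would first recall the explicit description of~$i^*$ obtained in the proof of theorem~\ref{thm:interpretationinducedspan}: on an algebraic $Q$-span~$(X,f,g,\chi)$ between $(\Sigma_2,E_2)$-algebras~$A$ and~$B$, the functor~$i^*$ returns the very same quadruple~$(X,f,g,\chi)$, now regarded as a span between the reinterpreted $(\Sigma_1,E_1)$-algebras~$i^*A$ and~$i^*B$; that this remains a valid linear algebraic $Q$-span is precisely proposition~\ref{prop:spanclosedunderterms}, since each~$\sigma \in \Sigma_1$ is sent by~$i$ to a linear derived term over~$\Sigma_2$. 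On the algebra side, the bottom functor~$i^*$ leaves both the underlying $\mathcal{E}$-object and the underlying $\mathcal{E}$-morphism unchanged, reinterpreting only the operations. Since graph and converse are identity on objects, and~$i^*$ acts on objects purely by reinterpretation, the two squares automatically agree on objects, so it remains only to check morphisms.

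For the converse square I would chase a span~$(X,f,g,\chi)$ both ways. Along converse-then-$i^*$ we obtain~$i^*(X,g,f,\chi) = (X,g,f,\chi)$ over the reinterpreted algebras, while along~$(i^*)^{op}$-then-converse we obtain~$\converse{(X,f,g,\chi)} = (X,g,f,\chi)$, again over the reinterpreted algebras; both routes merely swap the two legs of an otherwise untouched span, so they produce literally the same quadruple and the square commutes on the nose. For the graph square I would take an algebra morphism~$f : A \rightarrow B$. Going up-then-right yields~$i^*(\graph{f}) = i^*(A,1,f,\chi_k) = (i^*A, 1, f, \chi_k)$, whereas going right-then-up yields~$\graph{i^*(f)} = (i^*A, 1, i^*(f), \chi_k)$; because the algebra-level~$i^*$ preserves the underlying $\mathcal{E}$-arrow, $i^*(f)$ and~$f$ coincide, and~$\chi_k$ depends only on~$k$ and the terminal map, so the two spans are identical. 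I would also note that~$\graph{f}$ is cartesian by proposition~\ref{prop:spanclosure}, hence lies in the linear subcategory, so the graph functor genuinely lands in~$\linspanvq{(\Sigma_2,E_2)}{Q}$.

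There is essentially no obstacle of substance here: the entire content is that graph, converse and~$i^*$ are all defined on the underlying $\mathcal{E}$-level span data, which the reinterpretation functor does not disturb. The only point demanding care is the bookkeeping that the reinterpreted spans occurring along each path really satisfy the linear axiom, so that the diagrams live in the stated categories; this is delegated to propositions~\ref{prop:spanclosedunderterms} and~\ref{prop:spanclosure}. Finally I would remark that the identical arguments go through verbatim for the affine, relevant and cartesian variants, invoking the corresponding cases of those propositions.
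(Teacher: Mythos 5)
Your proposal is correct and takes essentially the same approach the paper relies on: since $i^*$ is by definition the identity on morphisms (only reinterpreting the algebra structure on objects, with proposition~\ref{prop:spanclosedunderterms} guaranteeing the reinterpreted spans remain algebraic), and graph and converse only manipulate the underlying $\mathcal{E}$-level span data, both squares commute on the nose --- which is exactly why the paper treats this proposition as immediate and gives no separate proof. Your explicit diagram chase, together with the appeal to proposition~\ref{prop:spanclosure} for the affine, relevant and cartesian variants, simply spells out that triviality.
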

The extensional collapse functor of section~\ref{sec:fromspanstorelations} also respects our different classes of spans
and relations.
\begin{proposition}
  Let~$\mathcal{E}$ be a topos, $(\Sigma,E)$ a variety in~$\mathcal{E}$ and~$(Q, \otimes, k, \bigvee)$ an internal commutative quantale.
  The functor of theorem~\ref{thm:vfunctor} maps linear (affine, relevant, cartesian) algebraic \mbox{$Q$-spans}
  to linear (affine, relevant, cartesian) algebraic \mbox{$Q$-relations}.
\end{proposition}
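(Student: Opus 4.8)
The plan is to use the explicit description of the collapse functor $V$ of theorem~\ref{thm:vfunctor}, which sends a $Q$-span $(X,f,g,\chi)$ to the $Q$-relation
\begin{equation*}
  R(a,b) = \bigvee \{ \chi(x) \mid f(x) = a \wedge g(x) = b \}
\end{equation*}
and to verify each of the four defining axioms separately, reasoning throughout in the internal language of~$\mathcal{E}$. The linear case is vacuous since no axiom is imposed, and the cartesian case follows immediately once the affine and relevant cases are done. So the real content lies in the affine and relevant cases, and in both the decisive tool is distributivity of~$\otimes$ over arbitrary joins in the quantale.

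For the affine case I would start from the definition of~$R$ and distribute the monoidal product over both joins. Writing $x_1$ for a witness admissible for $(a_1,b_1)$ and $x_2$ for one admissible for $(a_2,b_2)$, distributivity gives
\begin{equation*}
  R(a_1,b_1) \otimes R(a_2,b_2) = \bigvee \{ \chi(x_1) \otimes \chi(x_2) \mid x_1, x_2 \text{ admissible} \}
\end{equation*}
The affine span axiom supplies $\chi(x_1) \otimes \chi(x_2) \leq \chi(x_1)$ termwise, so the right-hand side is bounded above by $\bigvee \{ \chi(x_1) \mid f(x_1) = a_1 \wedge g(x_1) = b_1 \} = R(a_1,b_1)$, which is exactly the affine relation axiom. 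The one point needing care is when no witness exists for $(a_2,b_2)$: then $R(a_2,b_2)$ is the empty join, and distributivity forces the product to be the bottom element, so the inequality holds trivially. This boundary case is precisely where full distributivity, rather than mere monotonicity, is used.

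The relevant case is more direct. Applying the relevant span axiom $\chi(x) \leq \chi(x) \otimes \chi(x)$ termwise, and then using monotonicity of~$\otimes$ together with the fact that $\chi(x) \leq R(a,b)$ for every admissible witness~$x$, I obtain
\begin{equation*}
  R(a,b) = \bigvee_x \chi(x) \leq \bigvee_x \bigl( \chi(x) \otimes \chi(x) \bigr) \leq R(a,b) \otimes R(a,b)
\end{equation*}
which is the relevant relation axiom; the cartesian case is then the conjunction of the two. I expect the main obstacle to be bookkeeping rather than anything conceptual: phrasing the join manipulations so that they remain valid in the internal language of an arbitrary topos, and handling the empty-join boundary case in the affine argument correctly. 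Everything else reduces to quantale laws already in hand.
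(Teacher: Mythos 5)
Your proof is correct. The paper itself gives no proof of this proposition (it is among the results whose verification is omitted from the appendix), and your argument --- unfolding the explicit formula $V(X,f,g,\chi)(a,b) = \bigvee\{\chi(x) \mid f(x)=a \wedge g(x)=b\}$, distributing $\otimes$ over the joins for the affine case (which also disposes of the empty-join boundary case), and using termwise monotonicity of $\otimes$ for the relevant case --- is exactly the routine internal-language verification the paper leaves implicit, in the same style as its proofs of theorem~\ref{thm:vfunctor} and proposition~\ref{prop:relniceness}.
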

We briefly discuss some examples.
\begin{example}
  Let~$(\emptyset, \emptyset)$ denote the signature with no operations or equations. For any signature~$(\Sigma,E)$ there is a trivial \emph{linear}
  interpretation~$(\emptyset, \emptyset) \rightarrow (\Sigma,E)$.
  We therefore have, for every choice of internal quantale~$Q$, strict symmetric monoidal forgetful functors
  \begin{align*}
    \relvq{(\Sigma,E)}{Q} &\rightarrow \relvq{(\emptyset, \emptyset)}{Q}\\
    \spanvq{(\Sigma,E)}{Q} &\rightarrow \spanvq{(\emptyset, \emptyset)}{Q}
  \end{align*}
\end{example}
\begin{example}
  We can present real vector spaces by a signature with a constant element representing the origin, and a family
  of binary mixing operations, indexed by the scalars involved, satisfying suitable equations. We denote this signature as~\signature{Linear}. There is an interpretation
  in~\linsignatures of type~$\signature{Convex} \rightarrow \signature{Linear}$. For any commutative quantale~$Q$, this interpretation induces a functor
  \begin{equation*}
    \relvq{\signature{Linear}}{Q} \rightarrow \relvq{\signature{Convex}}{Q}
  \end{equation*}
  So, as we would expect, we can find the vector spaces in the convex algebras, in a manner respecting all the relevant categorical structure.
\end{example}
\begin{example}
  An affine join semilattice is a set with an associative, commutative, idempotent binary operation. From an information theoretic perspective, we think of convex algebras as describing probabilistic
  ambiguity. Affine join semilattices can then be thought of as modelling unquantified ambiguity. If we denote the signature for affine join semilattices as~$\signature{Affine}$
  there is an interpretation in~\linsignatures of type~$\signature{Convex} \rightarrow \signature{Affine}$ inducing a functor
  \begin{equation*}
    \relvq{\signature{Affine}}{Q} \rightarrow \relvq{\signature{Convex}}{Q}
  \end{equation*}
  relating these two different models of epistemic phenomena. This exhibits another interesting subcategory of~\convexrel.
\end{example}

\section{Changing Topos}
\label{sec:changingtopos}
We now explore the last axis of variation, the topos structure. 
We would expect that, if~$\mathcal{E}$ and~$\mathcal{F}$ are elementary toposes,
given a suitable functor~$L:\mathcal{E} \to \mathcal{F}$ it would be possible to lift it to a functor between their respective
relation and span constructions. Since the definitions of these categories make wide use of the internal language, it should not be surprising that
by ``suitable'' we actually mean that~$L$ behaves well with respect to the logical properties of $\mathcal{E}, \mathcal{F}$.
\begin{definition}
	Given toposes $\mathcal{E}, \mathcal{F}$, a functor $L:\mathcal{E} \to \mathcal{F}$ is called \emph{logical} if:
	\begin{itemize}
		\item $L$ preserves products
		\item $L$ preserves exponentials
		\item $L$ preserves the subobject classifier.
	\end{itemize}
\end{definition}
Logical functors are the right functors to consider, 
since they preserve the validity of internal formulas: If $\models \phi$ in $\mathcal{E}$, then $\models \phi$ in $\mathcal{F}$ for 
every formula $\phi$ written in the language of first order intuitionistic logic.

To make the following results more readable, we will have to slightly refine our notation, 
writing~$\relvtq{\mathcal{E}}{(\Sigma, E)}{Q}$  and~$\spanvtq{\mathcal{E}}{\Sigma, E}{Q}$ to explicitly indicate
that the constructions are performed on topos~$\mathcal{E}$. 
If $L:\mathcal{E} \rightarrow \mathcal{F}$ is a logical functor and $Q$ 
is an internal quantale in $\mathcal{E}$, then the fact that $L$ preserves models 
of first order intuitionistic theories implies that $LQ$ is an internal quantale 
in $\mathcal{F}$. It makes sense, then, to consider how 
$\relvtq{\mathcal{E}}{(\Sigma, E)}{Q}$ and $\relvtq{\mathcal{F}}{(\Sigma, E)}{LQ}$ 
are related. The main result of the section is the following:

\begin{restatable}{theorem}{logicalfunctorinducedrel}
	\label{thm:logicalfunctorinducedrel}
	Let~$\mathcal{E}, \mathcal{F}$ be toposes, and $L:\mathcal{E} \rightarrow \mathcal{F}$ be a logical functor.
	Let~$(Q, \otimes, k, \bigvee)$ be an internal commutative quantale in $\mathcal{E}$ and~$(\Sigma, E)$ be a signature.
	There is a symmetric monoidal functor
	\begin{equation*}
		L^*: \relvtq{\mathcal{E}}{(\Sigma, E)}{Q} \rightarrow \relvtq{\mathcal{F}}{(\Sigma, E)}{LQ}
	\end{equation*}
	The assignment~$L \mapsto L^*$ is functorial.
	
\end{restatable}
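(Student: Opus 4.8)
The plan is to let $L^*$ act as $L$ itself, transported through the canonical product comparisons, and to verify that logicality of $L$ preserves every piece of structure used to define $\relvtq{\mathcal{E}}{(\Sigma,E)}{Q}$. On objects, I would send a $(\Sigma,E)$-algebra $A$ to $LA$; since $L$ preserves finite products, each operation $\sigma_A : A^n \to A$ transports to the composite $(LA)^n \xrightarrow{\sim} L(A^n) \xrightarrow{L\sigma_A} LA$, and the equations $E$, being closed formulas of the internal language valid in $\mathcal{E}$, remain valid in $\mathcal{F}$, so $LA$ is again a $(\Sigma,E)$-algebra. On a morphism $R : A \times B \to Q$ I would set $L^* R$ to be the composite $LA \times LB \xrightarrow{\sim} L(A\times B) \xrightarrow{LR} LQ$, using that $LQ$ is an internal commutative quantale in $\mathcal{F}$ as already noted in the text. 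That $L^* R$ is a genuine algebraic $LQ$-relation is then immediate: the closure inequality \eqref{eq:crucialinequation} is an internal formula built from $\otimes$, $\leq$ and the operations $\sigma$, all preserved by $L$, so its validity transfers from $\mathcal{E}$ to $\mathcal{F}$.

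The crux is functoriality of $L^*$, i.e. preservation of identities and composition, and this is where the argument must be made carefully. Both the identity $1_A(a_1,a_2) = \bigvee\{ k \mid a_1 = a_2\}$ and the composite $(S\circ R)(a,c) = \bigvee_b R(a,b) \otimes S(b,c)$ are interpretations of terms in the internal language enriched with the quantale structure. I would invoke the standard strengthening of the fact noted above: a logical functor preserves not merely the validity of closed formulas but the whole interpretation of the Mitchell--Bénabou language \cite{MacLaneMoerdijk1992}, since this interpretation is built inductively from finite products, exponentials, the subobject classifier and the induced quantifiers, all of which $L$ preserves. Because $L$ additionally carries $\bigvee$, $\otimes$ and $k$ of $Q$ to the corresponding structure of $LQ$, the interpretations of these two terms are preserved, giving $L^*(1_A) = 1_{LA}$ and $L^*(S \circ R) = L^*S \circ L^*R$. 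I expect the genuine obstacle to be exactly the composite: the join $\bigvee_b$ is a quantale join applied to the direct image of a map into $Q$, so preserving it requires that $L$ preserve both the existential quantification (the direct image along the projection off $B$) and the join map $\bigvee : PQ \to Q$, and that these commute with the product comparison isomorphisms. Checking that pasting is the delicate step.

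For the monoidal structure, I would appeal to Proposition~\ref{prop:relisacategory}: the symmetric monoidal product on each category is inherited from the finite products of the underlying topos. Since $L$ preserves finite products up to canonical isomorphism, the comparison maps $L(A \times B) \cong LA \times LB$ and $L1 \cong 1$ supply the coherence data making $L^*$ a strong symmetric monoidal functor; the associativity, unit and symmetry constraints transport automatically because they are themselves determined by the product structure. Note that $L^*$ is only strong, not strict, which is precisely why the statement claims a symmetric monoidal rather than a strict symmetric monoidal functor.

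Finally, for functoriality of the assignment $L \mapsto L^*$, I would check that $(\mathrm{id}_{\mathcal{E}})^* = \mathrm{id}$ and $(L' \circ L)^* = L'^* \circ L^*$ for composable logical functors $L, L'$. On objects and morphisms this is immediate from $(L' \circ L)(A) = L'(LA)$ and the corresponding identity on relations; the only thing left is to confirm that the product-comparison isomorphisms of a composite functor factor as the pasting of those of the factors, which is a routine coherence verification. This completes the plan.
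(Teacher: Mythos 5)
Your proposal is correct and follows the paper's overall strategy: define $L^*$ as $L$ on objects and as $L^*R = LR \circ \kappa$ on morphisms, where $\kappa : LA \times LB \to L(A\times B)$ is the product comparison, then use logicality of $L$ to transfer the internal-language definitions of the closure inequality~\eqref{eq:crucialinequation}, the identities and the composition from $\mathcal{E}$ to $\mathcal{F}$. Where you differ is in how the crucial preservation step is discharged. You appeal to the general meta-theorem that a logical functor preserves the whole interpretation of the Mitchell--B\'enabou language --- in particular direct images along projections and the join map $\bigvee : PQ \to Q$, which together handle $\bigvee_b R(a,b) \otimes S(b,c)$ --- and apply it directly to the formulas defining $1_A$ and $S \circ R$. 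The paper instead makes this concrete: it writes out an explicit first-order theory $T$ whose sorts are $A$, $B$, $C$, $Q$, whose constants are the algebra operations, the quantale structure and three relations $R_{AB}$, $R_{BC}$, $R_{AC}$, and whose axioms include $\alpha_{\textit{comp}}$ asserting that $R_{AC}$ is the composite of the other two; since logical functors preserve models of such theories, the $L$-image of a composable pair together with its composite is again such a configuration, which is precisely $L^*(S\circ R) = L^*S \circ L^*R$. The two justifications are equivalent in substance; yours is shorter and cites the standard general fact, while the paper's is more self-contained in exhibiting the exact theory whose models are being transported. Your proposal is actually more complete on two points the paper leaves implicit: the verification that $L^*$ is a \emph{strong} (not strict) symmetric monoidal functor via the comparison isomorphisms, and the verification that $L \mapsto L^*$ respects identities and composition of logical functors. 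The one step you flag but do not carry out --- that quantifier preservation pastes correctly with the product comparisons --- is the same level of detail at which the paper also stops, so this is not a gap relative to the paper's own standard of rigour.
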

As in the previous cases, graph and converse functors are preserved.
\begin{restatable}{proposition}{logicalcommutesrel}
	With the same assumptions, the induced functor~$L^*$ of theorem~\ref{thm:logicalfunctorinducedrel} commutes with graphs and converses.
	That is, the following diagrams commute:
        
	\begin{equation*}
	\begin{tikzpicture}[scale=0.5, node distance=1.5cm, ->]
	\node (bl) {$\sigmaalgt{\mathcal{E}}{\Sigma}{E}$};
	\node[above of=bl] (tl) {\relvtq{\mathcal{E}}{(\Sigma, E)}{Q}};
	\node[right of=bl, node distance=4cm] (br) {$\sigmaalgt{\mathcal{F}}{\Sigma}{E}$};
	\node[above of=br] (tr) {\relvtq{\mathcal{F}}{(\Sigma, E)}{LQ}};
	\draw (bl) to node[left]{\graph{(-)}} (tl);
	\draw (br) to node[right]{\graph{(-)}} (tr);
	\draw (bl) to node[below]{$L$} (br);
	\draw (tl) to node[above]{$L^*$} (tr);
	\end{tikzpicture}
	\end{equation*}
	\begin{equation*}
	\begin{tikzpicture}[scale=0.5, node distance=1.5cm, ->]
	\node (tl) {$\relvtq{\mathcal{E}}{(\Sigma, E)}{Q}^{op}$};
	\node[below of=tl] (bl) {\relvtq{\mathcal{E}}{(\Sigma, E)}{Q}};
	\node[right of=tl, node distance=4cm] (tr) {$\relvtq{\mathcal{F}}{(\Sigma, E)}{LQ}^{op}$};
	\node[below of=tr] (br) {\relvtq{\mathcal{F}}{(\Sigma, E)}{LQ}};
	\draw (tl) to node[left]{\converse{(-)}} (bl);
	\draw (tr) to node[right]{\converse{(-)}} (br);
	\draw (tl) to node[above]{$(L^*)^{op}$} (tr);
	\draw (bl) to node[below]{$L^*$} (br);
	\end{tikzpicture}
	\end{equation*}
\end{restatable}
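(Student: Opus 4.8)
The plan is to unwind the definition of the functor $L^*$ from theorem~\ref{thm:logicalfunctorinducedrel}: on objects it sends an algebra $A$ to $LA$ (which is a $(\Sigma,E)$-algebra in $\mathcal{F}$ because $L$, being logical, preserves finite products and the validity of the equations $E$), and on a $Q$-relation $R : A \times B \to Q$ it returns the $LQ$-relation obtained by transporting $LR$ along the canonical isomorphism $LA \times LB \cong L(A \times B)$ furnished by preservation of products. Both squares in the proposition then agree on objects, since every composite carries $A$ to $LA$, so it remains only to compare the actions on morphisms. I would treat the two squares in turn.

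For the graph square the goal is to show $L^*(\graph{f}) = \graph{Lf}$ for each algebra homomorphism $f : A \to B$ in \sigmaalgt{\mathcal{E}}{\Sigma}{E}. The left path applies $L^*$ to the relation $\graph{f}$ with $f_\circ(a,b) = \bigvee \{ k \mid f(a) = b \}$ of proposition~\ref{prop:relgraph}, while the right path first forms $Lf : LA \to LB$ and then takes its graph $\graph{Lf}$, now computed in $\mathcal{F}$ with truth values in $LQ$. The key observation is that the defining expression for $f_\circ$ lives entirely in the internal language: it is assembled from the equality predicate $f(a) = b$ together with the quantale join and unit. Since $L$ is logical it preserves finite products, the subobject classifier and hence equality, and it carries the internal quantale $(Q, \otimes, k, \bigvee)$ to $(LQ, L\otimes, Lk, L\bigvee)$; consequently $L$ sends the interpretation of $f_\circ$ to the interpretation of the graph of $Lf$, which is precisely the claimed equality.

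For the converse square the goal is $L^*(\converse{R}) = \converse{(L^*R)}$, where by proposition~\ref{prop:relationconverse} the converse is given by precomposition with the symmetry $B \times A \xrightarrow{\sigma} A \times B$. Since $(L^*)^{op}$ acts as $L^*$ on the underlying relations, both composites unwind to this single statement. As $L$ preserves finite products it also preserves the associated symmetry isomorphism, so transporting $LR$ along the product isomorphisms and then swapping coincides with swapping and then transporting; matching the coherence isomorphisms then yields the required identity. Functoriality of the two assignments, already established for $L^*$ itself, is then inherited formally.

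I expect the main obstacle to be the careful bookkeeping of the product-preservation coherence isomorphisms $LA \times LB \cong L(A \times B)$ that are implicit in the definition of $L^*$, and in particular making precise the claim that a logical functor preserves the specific internal-language-definable morphisms $f_\circ$ and $1_A$ rather than merely the validity of closed formulas. Once the graph relation is recognised as such a definable entity, the preservation of first-order intuitionistic structure by $L$ does the substantive work, and the converse case reduces to the elementary fact that product-preserving functors respect symmetries.
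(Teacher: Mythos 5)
Your proposal is correct and takes essentially the same route as the paper: both proofs note that all functors involved agree on objects, both treat the graph relation as internal-language data that the logical functor $L$ preserves (alongside the canonical product isomorphisms $LA \times LB \cong L(A \times B)$ built into the definition of $L^*$), and both dispatch the converse square by the single remark that a product-preserving functor preserves symmetries. The only difference is bookkeeping: where you appeal directly to preservation of the definable morphism $f_\circ$, the paper factors $f_\circ$ in $\mathcal{E}$ as the relational identity on $B$ precomposed with $f \times 1_B$ and then cites preservation of relational identities from theorem~\ref{thm:logicalfunctorinducedrel}, which is exactly how it discharges the obstacle you flag in your final paragraph.
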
	
As with relations, morphisms between toposes extend functorially to morphisms between spans.	
\begin{restatable}{theorem}{logicalfunctorinducedspan}
	\label{thm:logicalfunctorinducedspan}
	Let~$\mathcal{E}, \mathcal{F}$ be toposes, and $L:\mathcal{E} \rightarrow \mathcal{F}$ be a logical functor.
	Let~$(Q, \otimes, k, \leq)$ be an internal partially ordered commutative monoid in $\mathcal{E}$ and~$(\Sigma, E)$ be a signature.
	There is a symmetric monoidal functor
	\begin{equation*}
	L^*: \spanvtq{\mathcal{E}}{(\Sigma, E)}{Q} \rightarrow \spanvtq{\mathcal{F}}{(\Sigma, E)}{LQ}
	\end{equation*}
	The assignment~$L \mapsto L^*$ is functorial.	
\end{restatable}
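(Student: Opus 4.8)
The plan is to define $L^*$ by applying $L$ to all the span data, and to verify each structural property by appeal to the fact that a logical functor preserves the interpretation of first-order intuitionistic logic. This parallels Theorem~\ref{thm:logicalfunctorinducedrel}, the essential new ingredient being that composition of spans is computed by pullback. First I would check that $L$ carries the parameter data to admissible data in $\mathcal{F}$: since $L$ preserves finite products and the validity of equational theories, each $(\Sigma,E)$-algebra $A$ is sent to a $(\Sigma,E)$-algebra $LA$, and the internal partially ordered commutative monoid $Q$ acquires a corresponding structure on $LQ$, the monoid and order axioms being first-order formulas whose validity $L$ preserves. On objects set $L^*A = LA$, and on a $Q$-span $(X,f,g,\chi)$ of type $A \to B$ set
\begin{equation*}
  L^*(X,f,g,\chi) = (LX, Lf, Lg, L\chi),
\end{equation*}
where $L\chi : LX \to LQ$.

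Next I would establish well-definedness. The defining axiom of an algebraic $Q$-span is a formula of first-order intuitionistic logic in the internal language, so its validity is transported along $L$; hence $(LX, Lf, Lg, L\chi)$ satisfies the corresponding axiom over $LQ$ and is a genuine algebraic $LQ$-span. Since we work with isomorphism classes of spans and $L$ preserves the span isomorphisms of Definition~\ref{def:qspan}, $L^*$ is well defined on isomorphism classes.

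The functoriality of $L^*$ is the technical core. Preservation of identities is immediate: the identity at $A$ is $(A,1,1,\chi_k)$ with $\chi_k = A \xrightarrow{!} 1 \xrightarrow{k} Q$, and as $L$ preserves the terminal object and sends $k$ to the unit of $LQ$, we recover the identity at $LA$. For composition, recall that the composite of two composable $Q$-spans $(X,f,g,\chi)$ and $(Y,h,k,\xi)$ has apex the pullback $X \times_C Y$ and characteristic morphism $\otimes \circ (\chi \times \xi) \circ \langle p_1, p_2 \rangle$. The crucial point is that $L$ preserves pullbacks: although the definition of a logical functor only lists preservation of finite products, exponentials and the subobject classifier, these together force preservation of all finite limits, since equalizers and pullbacks are definable in the internal language from products and the internal equality predicate, whose interpretation $L$ preserves. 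Thus the canonical comparison $L(X \times_C Y) \cong LX \times_{LC} LY$ is an isomorphism of $LQ$-spans, and because $L$ also preserves the projections and the multiplication $\otimes$, the composite characteristic morphism is transported correctly. Working with isomorphism classes, composition is therefore preserved on the nose.

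Finally, the symmetric monoidal structure is inherited from the finite products of $\mathcal{E}$ by Proposition~\ref{prop:spanvqsmc}, and a logical functor preserves finite products only up to canonical coherent isomorphism. Taking the coherence data of $L^*$ to be the comparison morphisms $L(A \times B) \cong LA \times LB$ and $L1 \cong 1$ exhibits $L^*$ as a symmetric monoidal functor, not strict in general, with the monoidal and symmetry coherence laws inherited from those of the product comparisons of $L$. Functoriality of the assignment $L \mapsto L^*$ is then clear, since $L^*$ postcomposes the span data with $L$, so that $(M \circ L)^* = M^* \circ L^*$ and $\mathrm{id}^* = \mathrm{id}$. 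I expect the main obstacle to be the preservation of pullbacks, deriving it from the stated logical-functor axioms, together with the bookkeeping of product-comparison isomorphisms responsible for the non-strictness of the monoidal structure.
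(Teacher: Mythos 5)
Your proposal is correct and follows essentially the same route as the paper: define $L^*$ by applying $L$ to all span data, use the fact that logical functors preserve validity of internal first-order formulas to see that algebraic $LQ$-spans result, handle identities via preservation of the terminal object and the unit $k$, handle composition via preservation of pullbacks, and obtain a non-strict symmetric monoidal structure from the product comparison isomorphisms. The only notable difference is cosmetic: the paper packages the internal-language argument as an explicit logical theory whose models are algebraic spans, whereas you appeal directly to preservation of internal validity, and you additionally spell out why the stated logical-functor axioms force preservation of pullbacks, a point the paper asserts without justification.
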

The essential structure is again respected by the induced functors.
\begin{proposition}
	With the same assumptions, the induced functor~$L^*$ of theorem~\ref{thm:logicalfunctorinducedspan} commutes with graphs and converses.
	That is, the following diagrams commute:
	\begin{equation*}
	\begin{tikzpicture}[scale=0.5, node distance=1.5cm, ->]
	\node (bl) {$\sigmaalgt{\mathcal{E}}{\Sigma}{E}$};
	\node[above of=bl] (tl) {\spanvtq{\mathcal{E}}{(\Sigma, E)}{Q}};
	\node[right of=bl, node distance=4cm] (br) {$\sigmaalgt{\mathcal{F}}{\Sigma}{E}$};
	\node[above of=br] (tr) {\spanvtq{\mathcal{F}}{(\Sigma, E)}{LQ}};
	\draw (bl) to node[left]{\graph{(-)}} (tl);
	\draw (br) to node[right]{\graph{(-)}} (tr);
	\draw (bl) to node[below]{$L$} (br);
	\draw (tl) to node[above]{$L^*$} (tr);
	\end{tikzpicture}
	\end{equation*}
	\begin{equation*}
	\begin{tikzpicture}[scale=0.5, node distance=1.5cm, ->]
	\node (tl) {$\spanvtq{\mathcal{E}}{(\Sigma, E)}{Q}^{op}$};
	\node[below of=tl] (bl) {\spanvtq{\mathcal{E}}{(\Sigma, E)}{Q}};
	\node[right of=tl, node distance=4cm] (tr) {$\spanvtq{\mathcal{F}}{(\Sigma, E)}{LQ}^{op}$};
	\node[below of=tr] (br) {\spanvtq{\mathcal{F}}{(\Sigma, E)}{LQ}};
	\draw (tl) to node[left]{\converse{(-)}} (bl);
	\draw (tr) to node[right]{\converse{(-)}} (br);
	\draw (tl) to node[above]{$(L^*)^{op}$} (tr);
	\draw (bl) to node[below]{$L^*$} (br);
	\end{tikzpicture}
	\end{equation*}
\end{proposition}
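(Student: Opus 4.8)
The plan is to reuse the explicit, componentwise description of $L^*$ from the proof of theorem~\ref{thm:logicalfunctorinducedspan}: on an algebraic $Q$-span it acts by $(X,f,g,\chi) \mapsto (LX, Lf, Lg, L\chi)$, and on an algebra $A$ it agrees with the induced functor $L$ on algebras, sending $A$ to $LA$. With this in hand, each square reduces to a direct comparison of the two composites on a representative span or algebra morphism, and commutativity follows because the graph and converse constructions are built from elementary data that $L$ preserves.

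First I would dispatch the converse square. By proposition~\ref{prop:spanconverse}, converse acts by $\converse{(X,f,g,\chi)} = (X,g,f,\chi)$, transposing the two legs while fixing the apex and characteristic morphism; it is moreover the identity on objects. Hence on objects both composites coincide with $L$ on algebras. On a span, going via $(L^*)^{op}$ then $\converse{(-)}$ yields $(LX, Lg, Lf, L\chi)$, while going via $\converse{(-)}$ then $L^*$ yields $L^*(X,g,f,\chi) = (LX, Lg, Lf, L\chi)$. Transposing legs commutes trivially with applying $L$ to each component, so the square commutes.

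For the graph square, recall from the graph proposition for spans that $\graph{f} = (A, 1_A, f, \chi_k)$ for $f : A \to B$. One composite gives $L^*(\graph{f}) = (LA, L1_A, Lf, L\chi_k)$, and since $L$ is a functor $L1_A = 1_{LA}$. The remaining point is that $L\chi_k$ is the constant morphism $\chi_k$ in the target: as $\chi_k$ factors as $A \xrightarrow{!} 1 \xrightarrow{k} Q$, and $L$ preserves the terminal object while carrying the internal unit $k$ of $Q$ to the unit of $LQ$ (as $L$ transports the internal monoid structure to $\mathcal{F}$), its image is the constant morphism at the unit of $LQ$. Therefore $L^*(\graph{f}) = (LA, 1_{LA}, Lf, \chi_k) = \graph{Lf}$, which is exactly the other composite.

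The verifications are routine once the componentwise action of $L^*$ is fixed; the only mildly delicate ingredient is the identification $L\chi_k = \chi_k$, which rests on $L$ preserving the terminal object and the unit of the internal monoid. I would close by remarking that, since the span categories are taken up to the isomorphisms of definition~\ref{def:qspan} and $L$, being a functor, preserves these isomorphisms, all assignments are well defined on representatives and the diagrams commute on the nose.
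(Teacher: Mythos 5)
Your proposal is correct and is essentially the argument the paper intends: although the paper omits an explicit appendix proof of this span-case proposition, its proof of theorem~\ref{thm:logicalfunctorinducedspan} already fixes the componentwise action $L^*(X,f,g,\chi) = (LX,Lf,Lg,L\chi)$ and uses exactly your key observation that logicality gives $L\chi_k = \chi_k$ (terminal object and internal monoid unit preserved), which is the only non-trivial ingredient in the graph square. Your converse-square argument (swapping legs commutes trivially with applying $L$ componentwise) and your closing remark about well-definedness on isomorphism classes of spans likewise match the paper's conventions, so nothing further is needed.
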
	
\begin{example}
  Given any category $\mathcal{C}$ we can form a corresponding \emph{presheaf category},
  having representable functors from $\mathcal{C}$ to $\cset$ as objects and natural transformations between them
  as morphisms. Presheaves constitute one of the most important examples of toposes, and it makes sense to ask how 
  Theorems~\ref{thm:logicalfunctorinducedrel}, \ref{thm:logicalfunctorinducedspan} behave in these circumstances.
  
  In general, given arbitrary categories~$\mathcal{C}, \mathcal{D}$ it is difficult to say when a 
  functor $F:\mathcal{C} \to \mathcal{D}$ lifts to a logical functor between the corresponding presheaves. 
  Nevertheless, the following result holds: If $\mathcal{C}, \mathcal{D}$ are groupoids (categories in which every arrow is an isomorphism), 
  then any functor $F:\mathcal{C} \to \mathcal{D}$ lifts to a logical functor $\bar{F}: \mathcal{C} \to \mathcal{D}$. This is because
  truth values in presheaf toposes are defined in terms of \emph{sieves} (subfunctors of the homset functor) and these sieves trivialize
  when the only arrows at our disposal are isos. This in turn trivializes the structure of truth values in the presheaf itself, 
  that ends up to be defined pointwise from $\cset$.
  
  Theorems~\ref{thm:logicalfunctorinducedrel}, \ref{thm:logicalfunctorinducedspan} 
  then ensure that~$\bar{F}$ can be lifted to the relational and span structures 
  built on~$\cset^{\mathcal{C}}$ and~$\cset^\mathcal{D}$, respectively.

\end{example}	
\begin{example}
	If~$\mathcal{E}$ is a topos, and~$f:I \to J$ is a morphism of~$\mathcal{E}$, then pulling back along~$f$ induces a logical functor~$F:\mathcal{E}/J \to\mathcal{E}/I$. Theorem~\ref{thm:logicalfunctorinducedrel} guarantees the existence of a functor~$F^*: \relvtq{\mathcal{E}/J}{(\Sigma, E)}{Q} \to \relvtq{\mathcal{E}/I}{(\Sigma, E)}{FQ}$. In particular, this means that there is always a functor~$F^*: \relvtq{\mathcal{E}}{(\Sigma, E)}{Q} \to \relvtq{\mathcal{E}/I}{(\Sigma, E)}{FQ}$, where~$\mathcal{E}/I$ is any slice topos of~$\mathcal{E}$.
\end{example}

\section{Independence of the axes of variation}
Finally, we establish that our various induced functors between models are independent,
in that they all commute with each other.Unfortunately, the commutativity of the functors induced by interpretations between algebras, order structure and quantale morphisms with $L^*$ will hold only up to isomorphism.
This depends intrinsicly on the definition of logical functor, that is, in turn, defined to preserve validity of formulas in the internal language only up to natural isomorphism.

\begin{restatable}{theorem}{thebox}
	\label{thm:thebox}
	Let~$\mathcal{E}$ be a topos, $h : Q_1 \rightarrow Q_2$ a morphism of internal commutative quantales,
	 $i : (\Sigma_1, E_1) \rightarrow (\Sigma_2, E_2)$ a linear interpretation and $L:\mathcal{E} \to \mathcal{F}$ a logical functor. For the induced
	functors of theorems~\ref{thm:quantalemorphisminduced}, \ref{thm:preorderedmonoidmorphisminduced}, \ref{thm:interpretationinducedrel}, \ref{thm:interpretationinducedspan}, \ref{thm:logicalfunctorinducedrel} and \ref{thm:logicalfunctorinducedspan},
	the following diagram commutes (be aware that in the hypercube below commutative squares involving~$L^*$ only commute up to isomorphism. Other squares commute up to equality):
	\begin{equation*}
	\begin{tikzpicture}[scale=1.5, ->]
	\tikzstyle{vertex}=[circle,minimum size=20pt,inner sep=0pt]
	\tikzstyle{selected vertex} = [vertex, fill=red!24]
	\node[vertex] (fbl) at (-0.2,-0.2) {$\bullet$};
	\node[vertex] (ftl) at (-0.2,0.8) {$\bullet$};
	\node[vertex] (fbr) at (1,-0.2) {$\bullet$};
	\node[vertex] (ftr) at (1,0.8) {$\bullet$};
	\node[vertex] (bbl) at (0.2, 0.4) {$\bullet$};
	\node[vertex] (btl) at (0.2,1.4) {$\bullet$};
	\node[vertex] (bbr) at (1.3,0.4) {$\bullet$};
	\node[vertex] (btr) at (1.3,1.4) {$\bullet$};
	
	\node[vertex] (fblo) at (-1.3,-1.2) {$\bullet$};
	\node[vertex] (ftlo) at (-1.3,1.8) {$\bullet$};
	\node[vertex] (fbro) at (2,-1.2) {$\bullet$};
	\node[vertex] (ftro) at (2,1.8) {$\bullet$};	
	
	\node[vertex] (bblo) at (-0.5,0) {$\bullet$};
	\node[vertex] (btlo) at (-0.5,3) {$\bullet$};
	\node[vertex] (bbro) at (2.5,0) {$\bullet$};
	\node[vertex] (btro) at (2.5,3) {$\bullet$};	
	
	\draw (fbl) -- (fbr);
	\draw (ftl) -- (fbl);
	\draw (ftl) -- (ftr);
	\draw (ftr) -- (fbr);

	\draw (bbl) -- (bbr);
	\draw (btl) -- (bbl);
	\draw (btl) -- (btr);
	\draw (btr) -- (bbr);

	\draw (bbl) -- (fbl);
	\draw (bbr) -- (fbr);
	\draw (btl) -- (ftl);
	\draw (btr) -- (ftr);
	
	\draw (fblo) -- (fbro);
	\draw (ftlo) -- (fblo);
	\draw (ftlo) -- (ftro);
	\draw (ftro) -- (fbro);
	
	\draw (bblo) -- (bbro);
	\draw (btlo) -- (bblo);
	\draw (btlo) -- (btro);
	\draw (btro) -- (bbro);
	
	\draw (bblo) -- (fblo);
	\draw (bbro) -- (fbro);
	\draw (btlo) -- (ftlo);
	\draw (btro) -- (ftro);
	
	\draw (fbl) to node[below right]{$L^*$} (fblo);
	\draw (ftl) to node[below left]{$L^*$} (ftlo);
	\draw (fbr) to node[below left]{$L^*$} (fbro);
	\draw (ftr) to node[below right]{$L^*$} (ftro);
	\draw (bbl) to (bblo);
	\draw (btl) to node[right]{$L^*$} (btlo);
	\draw (bbr) to node[above]{$L^*$} (bbro);
	\draw (btr) to node[above]{$L^*$} (btro);
	\end{tikzpicture}
	\end{equation*}
	
	Where the inner cube is
	\begin{equation*}
	\begin{tikzpicture}[scale=0.4, node distance=2cm, ->]
	\node (backtl) {\linspanvtq{\mathcal{E}}{(\Sigma_2,E_2)}{Q_1}};
	\node[right of=backtl, node distance=4cm] (backtr) {\linspanvtq{\mathcal{E}}{(\Sigma_1,E_1)}{Q_1}};
	\node[below left of=backtl] (fronttl) {\linspanvtq{\mathcal{E}}{(\Sigma_2,E_2)}{Q_2}};
	\node[right of=fronttl, node distance=4cm] (fronttr) {\linspanvtq{\mathcal{E}}{(\Sigma_1, E_1)}{Q_2}};
	
	\node[below of=backtl, node distance=3cm] (backbl) {\linrelvtq{\mathcal{E}}{(\Sigma_2, E_2)}{Q_1}};
	\node[below of=backtr, node distance=3cm] (backbr) {\linrelvtq{\mathcal{E}}{(\Sigma_1, E_1)}{Q_1}};
	\node[below of=fronttl, node distance=3cm] (frontbl) {\linrelvtq{\mathcal{E}}{(\Sigma_2, E_2)}{Q_2}};
	\node[below of=fronttr, node distance=3cm] (frontbr) {\linrelvtq{\mathcal{E}}{(\Sigma_1, E_1)}{Q_2}};
	
	\draw (backtl) to node[above]{$i^*$} (backtr);
	\draw (fronttl) to node[above]{$i^*$} (fronttr);
	\draw (backtl) to node[left]{$h^*$} (fronttl);
	\draw (backtr) to node[left]{$h^*$} (fronttr);
	\draw (backbl) to node[above]{$i^*$} (backbr);
	\draw (frontbl) to node[above]{$i^*$} (frontbr);
	\draw (backbl) to node[right]{$h^*$} (frontbl);
	\draw (backbr) to node[right]{$h^*$} (frontbr);
	\draw (backtl) -- (backbl);
	\draw (fronttl) -- (frontbl);
	\draw (backtr) -- (backbr);
	\draw (fronttr) -- (frontbr);
	\end{tikzpicture}
	\end{equation*}
	and the outer cube is 
	\begin{equation*}
	\begin{tikzpicture}[scale=0.4, node distance=2cm, ->]
	\node (backtl) {\linspanvtq{\mathcal{F}}{(\Sigma_2,E_2)}{LQ_1}};
	\node[right of=backtl, node distance=4cm] (backtr) {\linspanvtq{\mathcal{F}}{(\Sigma_1,E_1)}{LQ_1}};
	\node[below left of=backtl] (fronttl) {\linspanvtq{\mathcal{F}}{(\Sigma_2,E_2)}{LQ_2}};
	\node[right of=fronttl, node distance=4cm] (fronttr) {\linspanvtq{\mathcal{F}}{(\Sigma_1, E_1)}{LQ_2}};
	
	\node[below of=backtl, node distance=3cm] (backbl) {\linrelvtq{\mathcal{F}}{(\Sigma_2, E_2)}{LQ_1}};
	\node[below of=backtr, node distance=3cm] (backbr) {\linrelvtq{\mathcal{F}}{(\Sigma_1, E_1)}{LQ_1}};
	\node[below of=fronttl, node distance=3cm] (frontbl) {\linrelvtq{\mathcal{F}}{(\Sigma_2, E_2)}{LQ_2}};
	\node[below of=fronttr, node distance=3cm] (frontbr) {\linrelvtq{\mathcal{F}}{(\Sigma_1, E_1)}{LQ_2}};
	
	\draw (backtl) to node[above]{$i^*$} (backtr);
	\draw (fronttl) to node[above]{$i^*$} (fronttr);
	\draw (backtl) to node[left]{$(Lh)^*$} (fronttl);
	\draw (backtr) to node[left]{$(Lh)^*$} (fronttr);
	\draw (backbl) to node[above]{$i^*$} (backbr);
	\draw (frontbl) to node[above]{$i^*$} (frontbr);
	\draw (backbl) to node[right]{$(Lh)^*$} (frontbl);
	\draw (backbr) to node[right]{$(Lh)^*$} (frontbr);
	\draw (backtl) -- (backbl);
	\draw (fronttl) -- (frontbl);
	\draw (backtr) -- (backbr);
	\draw (fronttr) -- (frontbr);
	\end{tikzpicture}
	\end{equation*}
	In both cases the vertical arrows are the functors of theorem~\ref{thm:vfunctor}.
	Similar diagrams commute for affine, relevant and cartesian interpretations, relations and spans.
\end{restatable}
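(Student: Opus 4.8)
The plan is to reduce commutativity of the whole hypercube to commutativity of its two-dimensional faces, and then inspect each type of face in turn. A cubical diagram of functors commutes as soon as all of its square faces do, by the usual pasting argument; here the faces split into a strict part and a pseudo part, so I verify that the faces not involving $L^*$ commute on the nose, while the faces involving $L^*$ commute up to a specified natural isomorphism, checking at the end that these comparison cells are coherent. Since the four induced functors sit along four distinct axes, there are six types of square: the three faces $(i^*,h^*)$, $(i^*,V)$, $(h^*,V)$ internal to each of the inner and outer $3$-cubes, and the three faces $(i^*,L^*)$, $(h^*,L^*)$, $(V,L^*)$ connecting the two cubes along the $L^*$ direction. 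Throughout I must also confirm that every construction descends to the isomorphism classes of spans fixed by our convention, so that $V$ and $L^*$ are well defined on representatives.

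The conceptual reason for commutativity is that each of the four functors modifies a different coordinate of the data describing a generalized relation or span and is the identity on the remaining coordinates. Concretely, $i^*$ only re-interprets the $(\Sigma,E)$-algebra structure on the unchanged underlying objects, leaving the relation $R$ (respectively the span $(X,f,g,\chi)$) untouched; $h^*$ only post-composes the truth values, sending $R \mapsto h \circ R$ (respectively $\chi \mapsto h \circ \chi$), leaving the algebra structure and the span shape fixed; $V$ only collapses proof witnesses by joining characteristic values over fibres, leaving the algebra and the quantale fixed; and $L^*$ applies $L$ uniformly to all underlying objects, morphisms and spans, transporting $Q$ to $LQ$. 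For the three strict faces one writes out both composites on a representative morphism and observes that they agree coordinatewise, the action on objects being immediate since each functor acts on objects as the identity or via $L$.

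The one strict face with genuine interaction through the quantale is $(h^*,V)$: here $V$ forms a join $\bigvee$ of characteristic values while $h^*$ applies $h$, so commutativity is exactly the identity $h(\bigvee S) = \bigvee h(S)$, which holds because $h$ is a morphism of commutative quantales and hence preserves arbitrary joins. Note that the outer cube uses $(Lh)^*$ in place of $h^*$, since $L$ carries $h : Q_1 \to Q_2$ to $Lh : LQ_1 \to LQ_2$; this is forced by the fact that $L^*$ transports truth values along $L$.

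The main obstacle is the three faces involving $L^*$, which commute only up to isomorphism. A logical functor preserves the constructions of the internal language---finite limits, exponentials and the subobject classifier, and hence the derived join and pullback operations---only up to canonical natural isomorphism, so, for instance, $L(\bigvee S) \cong \bigvee L(S)$, and $L$ of a pullback is only canonically isomorphic to the pullback of the $L$-images. The work is to exhibit these comparison isomorphisms explicitly as the $2$-cells filling the $L^*$-faces, to confirm that they are natural in the morphism argument, and to verify coherence. Because $L^*$ occupies a single axis, every monotone path across the hypercube crosses $L^*$ exactly once, and the orthogonal operations $i^*, h^*, V$ are strict; hence the comparison cells for the distinct $L^*$-faces paste together with no obstruction beyond the naturality already established, and the whole hypercube commutes up to a well-defined canonical isomorphism. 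The affine, relevant and cartesian variants follow verbatim, using the closure results of propositions~\ref{prop:relclosure} and~\ref{prop:spanclosure} to guarantee that the functors restrict to the corresponding subcategories.
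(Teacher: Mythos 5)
Your proposal is correct and follows essentially the same route as the paper: decompose the hypercube into its square faces, verify the $(i^*,h^*)$, $(i^*,V)$ and $(h^*,V)$ faces strictly (the first two because the functors act on disjoint coordinates, the last because a quantale morphism preserves joins), and fill the three $L^*$-faces with canonical isomorphisms arising from logicality of $L$. The only difference is one of packaging: where you appeal generically to preservation of internal-language constructions up to canonical natural isomorphism, the paper makes this concrete by writing down explicit logical theories (for derived terms under $i^*$, for the $V$-image of a span, and for quantale homomorphisms) whose models $L$ preserves, which is the same idea carried out in detail.
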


\section{Conclusion}
We have developed a parameterized scheme for constructing hypergraph categories, by generalizing the
notion of binary relation along four axes of variation:
\begin{itemize}
\item The ambient mathematical background via the choice of underlying category
\item The truth values via a choice of internal quantale
\item The choice of algebraic structure
\item The choice between proof relevance and provability
\end{itemize}
This construction provides a conceptually motivated approach for producing models of process theories when investigating new applications.
Many existing examples in the literature are covered by our approach, including examples used for linguistics, cognition, linear dynamical
systems and non-deterministic computation.

We showed that the resulting categories are preorder enriched, providing more flexible modelling possibilities.
It was also established that varying each of the parameters is functorial,
preserving all the important hypergraph and order structure. 
In the case of the algebraic structure, this functoriality
exhibited an interesting relationship between algebra and resource sensitivity in the sense of linear logic.
Our constructions were also shown to have well behaved functorial analogues of the notions of taking the converse of a relations,
and taking the graph of a map to construct a new relation.

Interestingly, our framework points to new models in which features can be combined. This was a key objective of
this direction of research.
For example the model incorporating both convexity and metrics of example~\ref{ex:convexmetric},
the proof relevant models of cognition of example~\ref{ex:convexspan} and the possibility of incorporating contexts as discussed in example~\ref{ex:presheaf}.
The application of these constructions to models of cognition and natural language will be explored in forthcoming work.

In order to gain a strict composition operation, in section~\ref{sec:spans} we used isomorphism classes of spans,
and then introduced an analogue of the usual order structure for relations in section~\ref{sec:orderstructure}.
If we use spans, rather than their equivalence classes, they should form a symmetric monoidal bicategory, sacrificing strict composition for a richer 2-cell structure.
This is of practical interest as internal monads have been
important in our examples. The internal monads in categories of spans correspond to internal categories~\cite{RosebrughWood2002},
which would open up further interesting possibilities.
Some related work on bicategorical aspects of the decorated cospan construction appears in~\cite{Courser2016}.
In fact, the resulting categories should be an appropriate bicategorical generalization of a hypergraph category. Such bicategorical
aspects are left to later work.

\subsection*{Acknowledgments}
The authors would like to thank Bob Coecke, Ignacio Funke, Kohei Kishida and Martha Lewis for feedback and discussions.
This work was partially funded by the AFSOR grant ``Algorithmic and Logical Aspects when Composing Meanings'' and the FQXi grant ``Categorical Compositional Physics''.

\bibliographystyle{IEEEtran}
\bibliography{Relations}

\appendix
We outline proofs of the key results.
\relisacategory*
\begin{proof}
	The proof that~\relvq{(\Sigma,E)}{Q} is a category follows from the unit and associative properties of quantale multiplication. 
	We must first confirm that identities respect algebraic structure. For~$\sigma \in \Sigma$
	\begin{align*}
	1_A(a_1,a'_1) &\otimes ... \otimes 1_A(a_n,a'_n) = \\
	&= \left[ \bigvee \{ k \mid a_1 = a'_1 \} \right] \otimes ... \otimes \left[ \bigvee \{ k \mid a_n = a'_n \} \right] \\
	&= \bigvee \{ k \mid (a_1 = a'_1) \wedge ... \wedge (a_n = a'_n) \}\\
	&\leq \bigvee \{ k \mid \sigma(a_1,...,a_n) = \sigma(a'_1,...,a'_n) \}\\
	&= 1_A(\sigma(a_1,...,a_n), \sigma(a'_1,...,a'_n))
	\end{align*}
	We do the same for compositions
	\begin{align*}
	&(S \circ R)(a_1,c_1) \otimes ... \otimes (S \circ R)(a_n,c_n) =\\
	&= \left[\bigvee_{b_1}R(a_1,b_1) \otimes S(b_1,c_1)\right] \otimes \dots \\
	&\qquad \qquad \dots \otimes \left[\bigvee_{b_n}R(a_n,b_n) \otimes S(b_n,c_n)\right]\\
	&= \bigvee_{b_1,...,b_n} R(a_1,b_1) \otimes S(b_1,c_1) \otimes \dots \otimes R(a_n,b_n) \otimes S(b_n,c_n)\\
	&= \bigvee_{b_1,...,b_n} \left[ R(a_1,b_1) \otimes ... \otimes R(a_n,b_n) \right] \otimes \\
	&\qquad \qquad \qquad \otimes \left[ S(b_1,c_1) \otimes ... \otimes S(b_n,c_n)\right]\\
	&\leq \bigvee_{b_1,...,b_n} R(\sigma(a_1,...,a_n),\sigma(b_1,...,b_n)) \otimes\\
	&\qquad \qquad \qquad \otimes S(\sigma(b_1,...,b_n),\sigma(c_1,...,c_n))\\
	&\leq \bigvee_b R(\sigma(a_1,...,a_n),b) \otimes S(b,\sigma(c_1,...,c_n))\\
	&= (S \circ R)(\sigma(a_1,...,a_n), \sigma(c_1,...,c_n))
	\end{align*}
	Now we check explicitly the associativity of composition. For relations $R, S, T$, we have
	\begin{align*}
	(T \circ (S \circ R))(a,d) &= \bigvee_c (S \circ R)(a,c) \otimes T(c,d)\\
	&=\bigvee_c \left[ \bigvee_d R(a,b) \otimes S(b,c) \right] \otimes T(c,d)\\
	&=\bigvee_d R(a,b) \otimes \left[ \bigvee_c S(b,c) \otimes T(c,d) \right]\\
	&=\bigvee_d R(a,b) \otimes (T \circ S)(b,d)\\
	&=((T \circ S) \circ R) (a,d)
	\end{align*}
	We also check the right identity law.
	\begin{align*}
	(R \circ 1_A) (a,b) &= \bigvee_{a'} 1_A(a,a') \otimes R(a',b)\\
	&= \bigvee_{a'} \left[ \bigvee \{ k \mid a = a' \} \right] \otimes R(a',b)\\
	&= \bigvee_{a'} \bigvee \{ k \otimes R(a',b) \mid a = a' \}\\
	&= \bigvee_{a'} \{ R(a,b) \}\\
	&= R(a,b)
	\end{align*}
	The left identity law proof is similar. 
	
	Now we prove this category is symmetric monoidal. We define the monoidal unit to be the terminal algebra. 
	On objects, the tensor takes products of algebras. We define the action
	on morphisms pointwise as
	\begin{equation*}
	(R \otimes R')(a,a',b,b') = R(a,b) \otimes R'(a',b')
	\end{equation*}
	We first confirm~$R \otimes R'$ respects the algebraic structure.
	For~$\sigma \in \Sigma$ with arity~$n$
	\begin{align*}
	&(R \otimes R')(a_1,a'_1,b_1,b'_1) \otimes ... \otimes (R \otimes R')(a_n,a'_n,b_n,b'_n)=\\
	&= R(a_1,b_1) \otimes R'(a'_1,b'_1) \otimes ... \otimes R(a_n,b_n) \otimes R'(a'_n,b'_n)\\
	&= \left[ R(a_1,b_1) \otimes ... \otimes R(a_n,b_n) \right] \otimes \left[ R'(a'_1,b'_1) \otimes ... \otimes R'(a'_n,b'_n) \right]\\
	&\leq R(\sigma(a_1,...,a_n),\sigma(b_1,...,b_n)) \otimes\\
	&\qquad\qquad \otimes R'(\sigma(a'_1,...,a'_n),\sigma(b'_1,...,b'_n))\\
	&= (R \otimes R')((\sigma(a_1,...,a_n), \sigma(a'_1,...,a'_n)), \\
	&\qquad\qquad\qquad\qquad (\sigma(b_1,...,b_n),\sigma(b'_1,...,b'_n)))\\
	&= (R \otimes R')(\sigma((a_1,a'_1),...,(a_n,a'_n)), \sigma((b_1,b'_1),...,(b_n,b'_n)))
	\end{align*}
	Then, we show that the tensor is bifunctorial. Identities are preserved:
	\begin{align*}
	(1_{A_1} \otimes 1_{A_2})(a_1,a_2,&a'_1,a'_2) = 1_{A_1}(a_1,a'_1) \otimes 1_{A_2}(a_2,a'_2)\\
	&= \bigvee\{k \mid a_1 = a'_1\} \otimes \bigvee\{k \mid a_2 = a'_2\}\\
	&= \bigvee\{k \mid (a_1 = a'_1) \wedge (a_2 = a'_2)\}\\
	&= \bigvee\{k \mid (a_1, a_2) = (a'_1, a'_2)\}\\
	&= 1_{A_1 \otimes A_2}(a_1,a_2,a'_1,a'_2)
	\end{align*}
	For composition,
	\begin{align*}
	&\left[(S_1 \otimes S_2) \circ (R_1 \otimes R_2)\right](a_1,a_2,c_1,c_2) = \\
	&= \bigvee_{b_1,b_2} (R_1 \otimes R_2)(a_1,a_2,b_1,b_2) \otimes (S_1 \otimes S_2)(b_1,b_2,c_1,c_2)\\
	&= \bigvee_{b_1,b_2} R_1(a_1, b_1) \otimes R_2(a_2,b_2) \otimes S_1(b_1,c_1) \otimes S_2(b_2,c_2)\\
	&= \bigvee_{b_1,b_2} R_1(a_1,b_1) \otimes S_1(b_1,c_1) \otimes R_2(a_2,b_2) \otimes S_2(b_2,c_2)\\
	&= \left[ \bigvee_{b_1} R_1(a_1,b_1) \otimes S_1(b_1,c_1) \right] \otimes \\
	&\qquad \qquad \otimes \left[ \bigvee_{b_2} R_2(a_2,b_2) \otimes S_2(b_2,c_2) \right]\\
	&= (S_1 \circ R_1)(a_1,c_1) \otimes (S_2 \circ R_2)(a_2, c_2)\\
	&= \left[ (S_1 \circ R_1) \otimes (S_2 \circ R_2) \right](a_1,a_2,c_1,c_2)
	\end{align*}
	We consider~$\mathcal{E}$ as a symmetric monoidal category with respect to our choice
	of binary products and terminal object. We then take the graphs (see proposition~\ref{prop:relgraph} for the definition) 
	of the corresponding left and right unitors, associator and symmetry as the corresponding structure in~\relq{Q}.
	
	We must confirm that these coherence morphisms are natural in their parameters. The proofs are all
	similar, we check the associator explicitly
	\begin{align*}
	R \otimes &(S \otimes T) \circ \alpha_{A,B,C} = \\
	&= \bigvee_{x,y,z} \alpha_{A,B,C}(((a,b),c), (x,(y,z))) \otimes \\
	&\qquad \qquad \otimes \left[R \otimes (S \otimes T)\right]((x,(y,z)),(a',(b',c')))\\
	&= \bigvee_{x,y,z} \left[ \bigvee \{ k \mid (a = x) \wedge (b = y) \wedge (c =z) \} \right] \otimes\\
	& \qquad \qquad \qquad \qquad \otimes R(x,a') \otimes S(y,b') \otimes T(z,c')\\
	&= \bigvee_{x,y,z} \bigvee \{ R(x,a') \otimes S(y,b') \otimes T(z,c') \mid\\
	& \qquad \qquad \qquad \qquad \mid (a = x) \wedge (b = y) \wedge (c = z) \}\\
	&= R(a,a') \otimes S(b,b') \otimes T(c,c')\\
	&= \bigvee_{x,y,z} \bigvee \{ R(a,x) \otimes S(b,y) \otimes T(c,z) \mid\\
	& \qquad \qquad \qquad \qquad \mid (x = a') \wedge (y = b') \wedge (z = c') \}\\
	&= \bigvee_{x,y,z} R(a,x) \otimes S(b,y) \otimes T(c,z) \otimes \\
	& \qquad \qquad \otimes \left[ \bigvee \{ k \mid x = a' \wedge y = b' \wedge z = c' \} \right]\\
	&= \bigvee_{x,y,z} \left[ (R \otimes S) \otimes T \right](((a,b),c), ((x,y),z)) \otimes\\
	& \qquad \qquad \qquad \otimes \alpha_{A',B',C'}(((x,y),z), (a',(b',c')))\\
	&= \alpha_{A',B',C'} \circ (R \otimes S) \otimes T
	\end{align*}
	These are isomorphisms by functoriality of graphs (see proposition~\ref{prop:relgraph} for the proof). 
	Their inverses are given by their converses, as established in proposition~\ref{prop:relconverse}.
	
	Moreover, taking graphs commutes with our choice of products in~$\mathcal{E}$ in the sense that
	\begin{equation*}
	(f \times g)_\circ = f_\circ \otimes g_\circ
	\end{equation*}
	To check this we reason as follows:
	\begin{align*}
	(f \times g)_\circ &((a,a'),(b,b')) = 
	\bigvee \{ k \mid (b,b') = (f \times g)(a,a') \}\\
	&= \bigvee \{ k \mid (b = f(a)) \wedge (b' = g(a')) \}\\
	&= \left[ \bigvee \{ k \mid b = f(a) \} \right] \otimes \left[ \bigvee \{ k \mid b' = g(a') \} \right]\\
	&= f_\circ(a,b) \otimes g_\circ(a',b')
	\end{align*}
	This guarantees that the triangle and pentagon equations hold 
	as the same equations hold for the cartesian monoidal structure in~$\mathcal{E}$. The coherence conditions
	for the symmetry follow similarly.
\end{proof}

\relgraph*
\begin{proof}
	First of all we have to check that the resulting relation respects the algebraic structure. For
	$\sigma \in \Sigma$
	\begin{align*}
	f_\circ(a_1,b_1) \otimes &\dots \otimes f_\circ(a_n, b_n)=\\
	&= \left[ \bigvee \{ k \mid f(a_1) = b_1 \} \right] \otimes \dots \\
	&\qquad \dots \otimes \left[ \bigvee \{ k \mid f(a_n) = b_n \} \right]\\
	&= \bigvee \{ k \mid f(a_1) = b_1 \wedge \dots \wedge f(a_n) = b_n \}\\
	&\leq \bigvee \{ k \mid \sigma(f(a_1),\dots,f(a_n)) = \sigma(b_1,\dots,b_n) \}\\
	&= \bigvee \{ k \mid f(\sigma(a_1,\dots,a_n)) = \sigma(b_1,\dots,b_n)\\
	&= f_\circ(\sigma(a_1,\dots,a_n), \sigma(b_1,\dots,b_n))
	\end{align*}
	The we confirm this is functorial with respect to identities
	\begin{align*}
	\graph{1_A}(a_1,a_2) &= \bigvee \{ k \mid 1_A(a_1) = a_2 \}\\
	&= \bigvee \{ k \mid a_1 = a_2 \}\\
	&= 1_A(a_1,a_2)
	\end{align*}
	For functoriality with respect to composition,
	\begin{align*}
	(\graph{g} \circ \graph{f})&(a,c) =
	\bigvee_b f_\circ(a,b) \otimes g_\circ(b,c)\\
	&= \bigvee_b \left[ \bigvee \{ k \mid f(a) = b \} \right] \otimes \left[ \bigvee \{ k \mid g(b) = c \} \right]\\
	&= \bigvee \{ k \mid g(f(a)) = c \}\\
	&= \graph{(g \circ f)}(a,c)
	\end{align*}
	Finally we prove the preservation of the monoidal structure:
	\begin{align*}
	(f \times g)_\circ& ((a,a'),(b,b'))= \bigvee \{ k \mid (b,b') = (f \times g)(a,a') \}\\
	&= \bigvee \{ k \mid b = f(a) \wedge b' = g(a') \}\\
	&= \left[ \bigvee \{ k \mid b = f(a) \} \right] \otimes \left[ \bigvee \{ k \mid b' = g(a') \} \right]\\
	&= f_\circ(a,b) \otimes g_\circ(a',b')\\
	&= (f_\circ \otimes g_\circ)((a,a'),(b,b'))\qedhere
	\end{align*}
\end{proof}

\relconverse*
\begin{proof}
	We first check that taking the converse gives a well defined relation,.
	For~$\sigma \in \Sigma$ of arity~$n$, we reason
	\begin{align*}
	R^\circ(a_1,b_1) \otimes ... \otimes &R^\circ(a_n,b_n) =
	R(b_1,a_1) \otimes ... \otimes R(b_n,a_n)\\
	&\leq R(\sigma(b_1,...,b_n),\sigma(a_1,...,a_n))\\
	&= R^\circ(\sigma(a_1,...,a_n),\sigma(b_1,...,b_n))
	\end{align*}
	Next, we must confirm identities are preserved.
	\begin{align*}
	1_A^\circ (a_1,a_2) &=
	1_A(a_2,a_1)\\
	&=  \bigvee \{ k \mid a_2 = a_1 \}\\
	&=  1_A(a_1, a_2)
	\end{align*}
	We confirm also functoriality with respect to composition
	\begin{align*}
	(R \circ S)^\circ(a,c) &=
	(R \circ S)(c,a)\\
	&= \bigvee_b S(c,b) \otimes R(b,a)\\
	&= \bigvee_b S^\circ(b,c) \otimes R^\circ(a,b)\\
	&= \bigvee_b R^\circ(a,b) \otimes S^\circ(b,c)\\
	&= (S^\circ \circ R^\circ)(a,c)
	\end{align*}
	Finally, we must check that the converse distributes over tensors
	\begin{align*}
	(R^\circ \otimes S^\circ)(b,b',a,a') &=
	R^\circ(b,a) \otimes S^\circ(b',a')\\ 
	&= R(a,b) \otimes S(a',b')\\
	&= (R \otimes S)(a,a',b,b')\\
	&= (R \otimes S)^\circ(b,b',a,a')
	\end{align*}
	We moreover prove a fact used in the proof of proposition~\ref{prop:relisacategory}, that is,
	if~$f$ is an isomorphism in~$\mathcal{E}$ then
	\begin{equation*}
	(f^{-1})_\circ = (f_\circ)^\circ
	\end{equation*}
	This is a simple check:
	\begin{align*}
	(f^{-1})_\circ (b,a) &=\\
	&= \bigvee \{ k \mid f^{-1} b = a \}\\
	&= \bigvee \{ k \mid f(a) = b \}\\
	&= f_\circ(a,b)\\
	&= (f_\circ)^\circ (b,a)\qedhere
	\end{align*}
\end{proof}

%

\relvqhypergraph*
\begin{proof}
	For every object $A$ of $\mathcal{E}$, call $\epsilon_A$, $\delta_A$ the comultiplication and counit
	of proposition~\ref{prop:canonicalcomonoid}, and $\eta_A$, $\mu_A$
	their respective converses. The morphisms  $\epsilon_A$, $\delta_A$ have in the internal logic the explicit form
	\begin{gather*}
	\epsilon_A(a,x) = k \\
	\delta_A(a_1,(a_2,a_3)) = \bigvee \{ k \mid a_1 = a_2 = a_3 \}
	\end{gather*}
	Checking that $\eta_A$, $\mu_A$ form a monoid is straightforward, from the definition of converse.
	With respect to this monoid/comonoid pair, we first confirm the special axiom
	\begin{align*}
	&(\mu_A \circ \delta_A) (a_1,a_2) =\\
	&= \bigvee_{(a,a')} \delta_A(a_1,(a,a')) \otimes \mu_A((a,a'),a_2)\\
	&= \bigvee_{(a,a')} \left[ \bigvee \{ k \mid a_1 = a = a' \} \right] \otimes \left[ \bigvee \{ k \mid a = a' = a_2 \} \right]\\
	&= \bigvee_{(a,a')} \bigvee \{ k \mid a_1 = a =  a' = a_2 \}\\
	&= \bigvee \{ k \mid a_1 = a_2 \}\\
	&= 1_A(a_1, a_2)
	\end{align*}
	Finally, we check the Frobenius axiom, omitting some stages where the expressions get very long
	\begin{align*}
	&((1_A \otimes \delta_A) \circ (\mu_A \otimes 1_A)) (a_1, a_2,a_3,a_4) =\\
	&= \bigvee_{x,y,z} 1_A(a_1,x) \otimes \delta_A(a_2,(y,z)) \otimes \mu_A((x,y),a_3) \otimes 1_A(z,a_4)\\
	&= \bigvee_{x,y,z} \{ k \mid x = y = z = a_1 = a_2 = a_3 = a_4 \}\\
	&= \bigvee \{ k \mid a_1 = a_2 = a_3 = a_4 \}\\
	&= ((\delta_A \otimes 1_A) \circ (1_A \otimes \mu_A))(a_1, a_2,a_3,a_4)\qedhere
	\end{align*}
\end{proof}

\begin{lemma}
	\label{lem:qspaniso}
	Let~$\mathcal{E}$ be a finitely complete category, and~$(Q,\otimes,k)$ an internal monoid. If
	\begin{equation*}
	h : (X_1, f_1, g_1, \chi_1) \rightarrow (X_2, f_2, g_2, \chi_2)
	\end{equation*}
	is a~$Q$-span morphism with an inverse in~$\mathcal{E}$, then it is an isomorphism.
\end{lemma}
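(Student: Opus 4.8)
The plan is to show that the underlying $\mathcal{E}$-inverse of $h$ is automatically a $Q$-span morphism, so that $h$ becomes invertible in $\spanq{Q}$. The key observation is that the conditions defining a $Q$-span morphism in definition~\ref{def:qspan} are purely equational: commutation with the two legs of the span and with the characteristic morphism. Hence the entire argument amounts to transporting these three equations across the given inverse, with no further coherence to verify.

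First I would unpack the hypotheses. Since $h$ is a $Q$-span morphism we have $f_1 = f_2 \circ h$, $g_1 = g_2 \circ h$ and $\chi_1 = \chi_2 \circ h$. Write $h^{-1} : X_2 \rightarrow X_1$ for the assumed two-sided $\mathcal{E}$-inverse, so that $h \circ h^{-1} = 1_{X_2}$ and $h^{-1} \circ h = 1_{X_1}$. I would then check that $h^{-1}$ satisfies the three defining conditions of a $Q$-span morphism of type $(X_2,f_2,g_2,\chi_2) \rightarrow (X_1,f_1,g_1,\chi_1)$. Post-composing each of the three equations above with $h^{-1}$ and using $h \circ h^{-1} = 1_{X_2}$ gives
\begin{align*}
  f_1 \circ h^{-1} &= f_2 \circ h \circ h^{-1} = f_2, \\
  g_1 \circ h^{-1} &= g_2 \circ h \circ h^{-1} = g_2, \\
  \chi_1 \circ h^{-1} &= \chi_2 \circ h \circ h^{-1} = \chi_2,
\end{align*}
which are exactly the conditions making $h^{-1}$ a $Q$-span morphism in the reverse direction.

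Finally, since $h$ and $h^{-1}$ are mutually inverse in $\mathcal{E}$ and both are now seen to be $Q$-span morphisms, they are mutually inverse in $\spanq{Q}$, so $h$ is an isomorphism of $Q$-spans. There is essentially no obstacle here; the only point worth emphasising is that the $Q$-span structure imposes no constraint beyond the three equations, so the equational transport along $h^{-1}$ suffices and nothing involving the monoid operation $\otimes$ or the pullback composition needs to be invoked.
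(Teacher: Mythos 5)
Your proposal is correct and matches the paper's proof essentially verbatim: both arguments show that the underlying $\mathcal{E}$-inverse $h^{-1}$ is itself a $Q$-span morphism by composing the three defining equations $f_1 = f_2 \circ h$, $g_1 = g_2 \circ h$, $\chi_1 = \chi_2 \circ h$ with $h^{-1}$ and cancelling. Nothing further is needed, exactly as you observe.
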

\begin{proof}
	We aim to show that~$h^{-1}$ is the required inverse as a~$Q$-span morphism. We calculate
	\begin{gather*}
	f_1 \circ h^{-1} =
	f_2 \circ h \circ h^{-1} =
	f_2\\
	g_1 \circ h^{-1} =
	g_2 \circ h \circ h^{-1} =
	g_2\\
	\chi_1 \circ h^{-1} =
	\chi_2 \circ h \circ h^{-1}=
	\chi_2\qedhere
	\end{gather*}
\end{proof}

\begin{lemma}
	\label{lem:monoidspancat}
	Let~$\mathcal{E}$ be a finitely complete category and~$(Q,\otimes,k)$ an internal monoid. $\spanq{Q}$ is a category.
\end{lemma}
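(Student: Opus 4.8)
The plan is to build on the well-understood categorical structure of ordinary spans composed by pullback, treating the characteristic morphisms as extra data that must be tracked through each construction. Since we work with isomorphism classes, the first task is to confirm that composition is well defined on representatives. If $(X_1,f_1,g_1,\chi_1) \cong (X_1',f_1',g_1',\chi_1')$ and similarly for the second factor, then the induced cospans agree up to the respective isomorphisms, so their pullbacks are canonically isomorphic in~$\mathcal{E}$; one checks that this mediating morphism commutes with the legs (immediate from the universal property) and with the composite characteristic morphism $\otimes \circ (\chi \times \xi) \circ \langle p_1, p_2 \rangle$ (since the individual $\chi_i$ commute with the isomorphisms). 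By Lemma~\ref{lem:qspaniso} this underlying isomorphism is a $Q$-span isomorphism, so composition descends to isomorphism classes.

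For associativity, I would take three composable $Q$-spans and compare $(T \circ S) \circ R$ with $T \circ (S \circ R)$. The underlying apexes are the iterated pullbacks $(X \times_B Y) \times_C Z$ and $X \times_B (Y \times_C Z)$, which by the pullback pasting lemma are both limits of the same diagram and hence canonically isomorphic, compatibly with all the span legs. It then remains to verify that this canonical isomorphism respects the characteristic morphisms. Unwinding the definition, the two characteristic morphisms are $(\chi \otimes \xi) \otimes \zeta$ and $\chi \otimes (\xi \otimes \zeta)$, precomposed with the evident tupling of projections; they agree precisely because $\otimes$ is associative as the multiplication of the internal monoid~$Q$. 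Again Lemma~\ref{lem:qspaniso} promotes this to a genuine $Q$-span isomorphism.

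For the unit laws, recall that the identity at~$A$ is the $Q$-span $(A,1,1,\chi_k)$ with $\chi_k = k \circ {!}$. Composing a span $(X,f,g,\chi) : A \rightarrow B$ on either side with an identity span pulls back one leg along an identity, so the pullback projection onto~$X$ is an isomorphism in~$\mathcal{E}$ commuting with the legs. For the characteristic morphisms, the composite evaluates to either $k \otimes \chi$ or $\chi \otimes k$ transported along this isomorphism, and both equal~$\chi$ by the left and right unit laws of the monoid $(Q,\otimes,k)$. As before, Lemma~\ref{lem:qspaniso} yields the required $Q$-span isomorphism, so the identity laws hold on isomorphism classes.

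The main obstacle I anticipate is the bookkeeping in the associativity step: one must be careful that the canonical pullback isomorphism identifies the tuples of projections on the two sides correctly, so that the expressions $(\chi \otimes \xi) \otimes \zeta$ and $\chi \otimes (\xi \otimes \zeta)$ are genuinely being compared at corresponding points. Once the projections are matched via the pasting lemma, however, the equality is exactly the associativity axiom of the monoid, so no genuinely new difficulty arises beyond the diagram chase.
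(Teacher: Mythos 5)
Your proposal is correct and follows essentially the same strategy as the paper's proof: verify well-definedness on isomorphism classes, the unit laws, and associativity by producing the canonical isomorphisms of underlying spans, checking compatibility with the characteristic morphisms (using the monoid unit and associativity laws of~$Q$), and invoking Lemma~\ref{lem:qspaniso} to promote each $\mathcal{E}$-isomorphism to a $Q$-span isomorphism. In fact you make explicit the point the paper elides in the associativity step, namely that the characteristic-morphism check reduces exactly to associativity of~$\otimes$ after matching projections.
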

\begin{proof}
	We first confirm that composition is independent of representatives. Consider span isomorphisms
	\begin{align*}
	\varphi : (X_1,f_1,g_1,\chi_1) &\rightarrow (X_2,f_2,g_2,\chi_2)\\
	\psi : (Y_1,h_1,k_1,\xi_1) &\rightarrow (Y_2,h_2,k_2,\xi_2)
	\end{align*}
	We must show an isomorphism between the $Q$-spans
	\begin{gather*}
	(X_1 \times_B Y_1,f_1 \circ p_1, k_1 \circ p_2, \otimes \circ (\chi_1 \times \xi_1) \circ \langle p_1, p_2 \rangle) \\
	(X_2 \times_B y_2,f_2 \circ p'_1, k_2 \circ p'_2, \otimes \circ (\chi_2 \times \xi_2) \circ \langle p'_1, p'_2 \rangle) 
	\end{gather*}
	We calculate, using properties of pullbacks
	\begin{equation*}
	f_2 \circ p'_1 \circ \varphi \times_B \psi =
	f_2 \circ \varphi \circ p_1 =
	f_1 \circ p_1
	\end{equation*}
	\begin{equation*}
	g_2 \circ p'_2 \circ \varphi \times_B \psi=
	g_2 \circ \psi \circ p_2=
	g_1 \circ p_2
	\end{equation*}
	Also
	\begin{align*}
	\otimes \circ (\chi_2 \times \xi_2) \circ &\langle p'_1, p'_2 \rangle \circ \varphi \times_B \psi =\\
	&= \otimes \circ (\chi_2 \times \xi_2) \circ (\varphi \times \psi) \circ \langle p_1, p_2 \rangle \\
	&= \otimes \circ (\chi_2 \circ \varphi, \xi_2 \circ \psi) \circ \langle p_1, p_2 \rangle \\
	&= \otimes \circ (\chi_1 \times \psi_1) \circ \langle p_1, p_2 \rangle
	\end{align*}
	and as~$\varphi \times_B \psi$ is also an isomorphism we can use lemma~\ref{lem:qspaniso} to complete this part of the proof.
	
	Next we confirm the left identity axiom. Firstly we note
	\begin{align*}
	(B,1_B,1_B&,\chi_k) \circ (X,f,g,\chi) =\\ 
	&= (X \times_B B, f \circ p_1, p_2, \otimes \circ (\chi \times \chi_k) \circ \langle p_1, p_2 \rangle)
	\end{align*}
	We claim~$p_1$ is a~$Q$-span morphism to the span~$(X,f,g,\chi)$. The conditions for this being a span morphism are
	\begin{equation*}
	f \circ p_1 = f \circ p_1 \quad\text{ and }\quad g \circ p_1 = p_2
	\end{equation*}
	and the second condition is obvious from the pullback square. Finally, we must confirm this also commutes with the characteristic functions.
	\begin{align*}
	\otimes \circ (\chi \times \chi_k) \circ \langle p_1, p_2 \rangle &= 
	p_1 \circ (\chi \times !) \circ \langle p_1, p_2 \rangle\\
	&= p_1 \circ \langle \chi \circ p_1, ! \circ p_2 \rangle\\
	&= \chi \circ p_1
	\end{align*}
	We must prove that~$p_1$ is an isomorphism, the required inverse being given by the universal property of pullbacks as
	\begin{equation*}
	\langle 1, g \rangle
	\end{equation*}
	Checking that this is an isomorphism follows from the universal property of pullbacks. We can then use lemma~\ref{lem:qspaniso} to complete
	this part of the proof. The right identity axiom follows similarly.
	
	We must then confirm associativity. We consider the composites
	\begin{align*}
	L &= ((Z,l,m,\zeta) \circ (Y,h,k,\xi)) \circ (X,f,g,\chi)\\
	R &= (Z,l,m,\zeta) \circ ((Y,h,k,\xi) \circ (X,f,g,\chi))
	\end{align*}
	Via the usual proof for categories of ordinary spans
	\begin{equation*}
	\iota := \langle p_1 \circ p_1, \langle p_2 \circ p_1, p_2 \rangle \rangle : (X \times_B Y) \times_C Z \rightarrow X \times_B (Y \times_C Z)
	\end{equation*}
	is an isomorphism of spans. It remains to show that this commutes with the characteristic morphisms. This is a horrible exercise in tracking
	various canonical morphisms, and is most easily handled using the graphical calculus for a cartesian monoidal category. Details are omitted to
	avoid a long typesetting exercise for the diagrams.
\end{proof}

\begin{lemma}
	\label{lem:spanqconverse}
	Let~$\mathcal{E}$ be a finitely complete category and~$(Q,\otimes,k)$ an internal monoid.
	There is an identity on objects contravariant involution (dagger functor) given by
	\begin{align*}
	\converse{(-)} : \spanq{Q}^{op} &\rightarrow \spanq{Q}\\
	(X,f,g,\chi) &\mapsto (X,g,f,\chi)
	\end{align*}
\end{lemma}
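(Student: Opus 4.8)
The plan is to check, in turn, that $\converse{(-)}$ is well defined on isomorphism classes of $Q$-spans, is the identity on objects, preserves identities, reverses composition, and squares to the identity. All of these except the reversal of composition are essentially immediate from the fact that only the two legs of the span are interchanged.

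First I would observe that the assignment descends to isomorphism classes. If $\alpha : (X_1,f_1,g_1,\chi_1) \rightarrow (X_2,f_2,g_2,\chi_2)$ is a $Q$-span morphism, so that $f_1 = f_2 \circ \alpha$, $g_1 = g_2 \circ \alpha$ and $\chi_1 = \chi_2 \circ \alpha$, then the very same equations exhibit $\alpha$ as a morphism $(X_1,g_1,f_1,\chi_1) \rightarrow (X_2,g_2,f_2,\chi_2)$ between the converses; so $\converse{(-)}$ is well defined, and by lemma~\ref{lem:qspaniso} it sends isomorphisms to isomorphisms. That it is the identity on objects is clear, since only the legs are swapped. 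Preservation of identities is equally direct: the identity at $A$ is $(A,1,1,\chi_k)$, and interchanging its equal legs returns $(A,1,1,\chi_k)$. Finally, interchanging the legs twice restores the original span, so $\converse{(-)}$ is strictly involutive on isomorphism classes.

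The substantive step is contravariant functoriality, $\converse{(S \circ R)} = \converse{R} \circ \converse{S}$, for $R = (X,f,g,\chi) : A \rightarrow B$ and $S = (Y,h,k,\xi) : B \rightarrow C$. Unfolding the definitions, $\converse{(S \circ R)}$ has apex the pullback $X \times_B Y$, legs $k \circ p_2$ and $f \circ p_1$, and characteristic morphism $(x,y) \mapsto \chi(x) \otimes \xi(y)$; whereas $\converse{R} \circ \converse{S}$ has apex the pullback $Y \times_B X$ of $h$ against $g$, legs $k \circ q_1$ and $f \circ q_2$, and characteristic morphism $(y,x) \mapsto \xi(y) \otimes \chi(x)$. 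I would invoke the canonical isomorphism $X \times_B Y \cong Y \times_B X$ supplied by the universal property of pullbacks (the two defining cones differ only by the order of their projections) and check that under this swap the two pairs of legs agree on the nose. The single point at which the hypotheses are genuinely used is the comparison of characteristic morphisms, where $\chi(x) \otimes \xi(y)$ must be matched with $\xi(y) \otimes \chi(x)$; this is precisely commutativity of the internal monoid multiplication, which is available in the setting of theorem~\ref{thm:simplespans}. Having produced an underlying $\mathcal{E}$-isomorphism of spans that commutes with all three structure morphisms, lemma~\ref{lem:qspaniso} again upgrades it to an isomorphism of $Q$-spans, so the two composites coincide as isomorphism classes.

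The main obstacle is therefore bookkeeping rather than conceptual: one must track the pullback projections $p_1, p_2, q_1, q_2$ through the swap isomorphism so that the legs and the characteristic morphism all match simultaneously. As in lemma~\ref{lem:monoidspancat}, this is cleanest in the graphical calculus for the cartesian structure of $\mathcal{E}$, with the only non-formal ingredient being the use of commutativity of $\otimes$ noted above.
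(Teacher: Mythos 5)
Your proposal is correct and follows essentially the same route as the paper's proof: compute both composites, exhibit the swap map $\langle p_2, p_1 \rangle$ between the two pullback apexes as an isomorphism of the underlying spans, and match the characteristic morphisms using commutativity of $\otimes$. Your explicit observation that commutativity of the internal monoid is genuinely needed (despite the lemma's hypotheses naming only a monoid) matches the paper's own admission that the argument ``makes essential use of commutativity of $\otimes$''.
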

\begin{proof}
	This involution clearly preserves identities. We aim to show
	\begin{equation*}
	\converse{(X,f,g,\chi)} \circ \converse{(Y,h,k,\xi)} = \converse{((Y,h,k,\xi) \circ (X,f,g,\chi))}
	\end{equation*}
	These two spans are given by
	\begin{align*}
	\converse{(X,f,g,\chi)}& \circ \converse{(Y,h,k,\xi)} =\\
	&= (Y \times_B X, k \circ p_1, f \circ p_2, \otimes \circ \langle \xi \circ p_1, \chi \circ p_2 \rangle)\\
	((Y,h,k,\xi) \circ &(X,f,g,\chi))^\circ =\\
	&= (X \times_B Y, k \circ p_2, f \circ p_1, \otimes \circ \langle \chi \circ p_1, \xi \circ p_2 \rangle)
	\end{align*}
	The morphisms
	\begin{gather*}
	\langle p_2, p_1 \rangle : X \times_B Y \rightarrow Y \times_B X\\
	\langle p_2, p_1 \rangle : Y \times_B X \rightarrow X \times_B Y
	\end{gather*}
	witness an isomorphism in~$\mathcal{E}$. We first confirm that this gives a span isomorphism. This follows trivially from elementary properties of pullbacks.
	Finally, we must prove that this commutes with characteristic morphisms. This makes essential use of commutativity of $\otimes$
	\begin{align*}
	\otimes \circ \langle \xi \circ p_1, \chi \circ p_2 \rangle \circ \langle p_2, p_1 \rangle &=
	\otimes \circ \langle \xi \circ p_2, \chi \circ p_1 \rangle\\
	&= \otimes \circ \langle \chi \circ p_1, \xi \circ p_2 \rangle\qedhere
	\end{align*}
\end{proof}

\begin{lemma}
	\label{lem:monoidspangraph}
	Let~$\mathcal{E}$ be a finitely complete category, and~$(Q,\otimes,k)$ an internal monoid. There is an identity on objects covariant graph functor
	\begin{align*}
	(-)_\circ : \mathcal{E} &\rightarrow \spanq{Q}\\
	f : A \rightarrow B &\mapsto (A,1_A,f,\chi_k)
	\end{align*}
	There is also an identity on objects contravariant cograph functor
	\begin{align*}
	\cograph{(-)} : \mathcal{E}^{op} &\rightarrow \spanq{Q}\\
	f : A \rightarrow B &\mapsto (B,f,1_B,\chi_k)
	\end{align*}
	If~$Q$ is a commutative monoid then
	\begin{equation*}
	\cograph{(-)} = \converse{(-)} \circ \graph{(-)}
	\end{equation*}
\end{lemma}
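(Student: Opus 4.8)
The plan is to verify functoriality of each construction directly from the definition of span composition in Definition~\ref{def:qspan}, exploiting that pulling back along an identity morphism is trivial up to canonical isomorphism and that $k \otimes k = k$ since $k$ is the monoid unit. Throughout, equalities of $Q$-spans are understood as equalities of isomorphism classes, as fixed in the remark following Definition~\ref{def:qspan}.

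First I would treat the graph functor $(-)_\circ$. Preservation of identities is immediate, since $(1_A)_\circ = (A, 1_A, 1_A, \chi_k)$ is precisely the identity $Q$-span at $A$. For composition, given $f : A \to B$ and $g : B \to C$, the composite $g_\circ \circ f_\circ$ pulls back the right leg $f$ of $f_\circ = (A, 1_A, f, \chi_k)$ against the left leg $1_B$ of $g_\circ = (B, 1_B, g, \chi_k)$. As we pull back along the identity $1_B$, the projection $p_1 : A \times_B B \to A$ is invertible with $p_2 = f \circ p_1$. The composite span then has left leg $1_A$, right leg $g \circ f$, and characteristic morphism $\otimes \circ (\chi_k \times \chi_k) \circ \langle p_1, p_2 \rangle$, which is constant at $k \otimes k = k$ and hence equal to $\chi_k$. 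Since $p_1$ is a $Q$-span morphism onto $(g \circ f)_\circ$ that is invertible in $\mathcal{E}$, Lemma~\ref{lem:qspaniso} upgrades it to a $Q$-span isomorphism, giving $g_\circ \circ f_\circ = (g \circ f)_\circ$.

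The cograph functor $\cograph{(-)}$ is handled symmetrically, now reading its argument as a morphism of $\mathcal{E}^{op}$, so that an underlying $\mathcal{E}$-morphism $f : B \to A$ is sent to the span $(B, f, 1_B, \chi_k)$ of type $A \to B$. For underlying morphisms $f : B \to A$ and $g : C \to B$ I would compute $\cograph{g} \circ \cograph{f}$ by pulling back the right leg $1_B$ of $\cograph{f}$ against the left leg $g$ of $\cograph{g} = (C, g, 1_C, \chi_k)$; pulling back along $1_B$ makes $p_2 : B \times_B C \to C$ invertible, and the composite reduces to $(C, f \circ g, 1_C, \chi_k) = \cograph{f \circ g}$ via Lemma~\ref{lem:qspaniso}. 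This argument uses the monoid unit law but nowhere needs commutativity of $\otimes$, matching the stated generality.

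Finally, for the identity $\cograph{(-)} = \converse{(-)} \circ \graph{(-)}$ I would compute on an underlying $\mathcal{E}$-morphism $f : B \to A$. The graph is $\graph{f} = (B, 1_B, f, \chi_k)$, and the converse functor of Lemma~\ref{lem:spanqconverse}, which simply swaps the two legs, sends this to $(B, f, 1_B, \chi_k)$, which is exactly $\cograph{f}$; this is a strict equality of span data, not merely an isomorphism. As both sides are identity-on-objects functors $\mathcal{E}^{op} \to \spanq{Q}$ agreeing on every morphism, they coincide, with commutativity of $Q$ entering only to guarantee via Lemma~\ref{lem:spanqconverse} that the right-hand composite is functorial at all. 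The only real care the argument demands is the bookkeeping of span directions and the $\mathcal{E}^{op}$ convention for the cograph; there is no substantial obstacle beyond this.
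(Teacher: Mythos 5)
Your proposal is correct and follows essentially the same route as the paper's proof: identities are immediate, composition for the graph functor reduces to the canonical isomorphism $A \cong A \times_B B$ together with the unit law $k \otimes k = k$ for the characteristic morphisms (you work with $p_1$ and Lemma~\ref{lem:qspaniso} where the paper exhibits its inverse $\langle 1_A, f \rangle$ and checks compatibility directly), the cograph case is symmetric, and the final identity $\cograph{(-)} = \converse{(-)} \circ \graph{(-)}$ is a strict equality of span data. The paper merely compresses the cograph case to ``similar'' and the converse identity to ``immediate from the definitions,'' both of which you spell out correctly, including the accurate observation that commutativity of $Q$ is needed only for the converse functor itself.
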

\begin{proof}
	This construction is well known for ordinary spans. For~$Q$-spans, we must confirm the interaction with characteristic morphisms
	behaves appropriately. Firstly we note that
	\begin{equation*}
	(1_A)_\circ = (A,1_A,1_A,\chi_k)
	\end{equation*}
	and so identities are preserved.
	For composition, we have an ordinary span isomorphism
	\begin{equation*}
	\langle 1_A, f \rangle : A \rightarrow A \times_B B
	\end{equation*}
	We must confirm this commutes with characteristic morphisms
	\begin{align*}
	\otimes \circ (\chi_k \times \chi_k) \circ \langle p_1, p_2 \rangle \circ \langle 1_A, f \rangle
	&= \otimes \circ \langle \chi_k, \chi_k \circ f \rangle\\
	&= \otimes \circ \langle \chi_k, \chi_k \rangle\\
	&= \chi_k
	\end{align*}
	The proof for the cograph construction is similar.
	In the case of commutative~$Q$, the relationship to the converse functor is immediate from the definitions.
\end{proof}

\begin{lemma}
	\label{lem:spanqtensor}
	Let~$\mathcal{E}$ be a finitely complete category and~$(Q,\otimes,k)$ an internal commutative monoid. There is a bifunctor
	\begin{align*}
	\otimes : \spanq{Q} \times \spanq{Q} &\rightarrow \spanq{Q}\\
	A \otimes B &= A \times B\\
	(X_1,f_1,g_1,\chi_1) \otimes (X_2,&f_2,g_2,\chi_2) =\\
	=(X_1 \times X_2, f_1 &\times f_2, g_1 \times g_2, \otimes \circ (\chi_1 \times \chi_2))
	\end{align*}
	Furthermore, this bifunctor commutes with graphs in the sense that the following diagram commutes
	\begin{equation*}
	\begin{tikzpicture}[scale=0.5, node distance=2cm,->]
	\node (tl) {$\spanq{Q} \times \spanq{Q}$};
	\node[right of=tl, node distance=3cm] (tr) {$\spanq{Q}$};
	\node[below of=tl] (bl) {$\mathcal{E} \times \mathcal{E}$};
	\node[below of=tr] (br) {$\mathcal{E}$};
	\draw (tl) to node[above]{$\otimes$} (tr);
	\draw (bl) to node[below]{$\times$} (br);
	\draw (bl) to node[left]{$(-)_\circ \times (-)_\circ$} (tl);
	\draw (br) to node[right]{$(-)_\circ$} (tr);
	\end{tikzpicture}	
	\end{equation*}
\end{lemma}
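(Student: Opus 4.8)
The plan is to verify in turn that (i) the proposed tensor is well defined on isomorphism classes of $Q$-spans and produces a legitimate $Q$-span, (ii) it is jointly functorial in its two arguments, and (iii) it commutes with graphs. Throughout I would lean on the finite completeness of $\mathcal{E}$ and the commutativity of $\otimes$, and I would repeatedly invoke Lemma~\ref{lem:qspaniso} to promote an underlying $\mathcal{E}$-isomorphism to a genuine $Q$-span isomorphism, exactly as in Lemma~\ref{lem:monoidspancat}.

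First I would settle well-definedness. Given span isomorphisms $\varphi : (X_1,f_1,g_1,\chi_1) \to (X_1',f_1',g_1',\chi_1')$ and $\psi : (X_2,f_2,g_2,\chi_2) \to (X_2',f_2',g_2',\chi_2')$, the product $\varphi \times \psi$ is an $\mathcal{E}$-isomorphism that commutes with both tensored legs by functoriality of $\times$, and it commutes with the tensored characteristic morphism because $\otimes \circ (\chi_1' \times \chi_2') \circ (\varphi \times \psi) = \otimes \circ ((\chi_1' \circ \varphi) \times (\chi_2' \circ \psi)) = \otimes \circ (\chi_1 \times \chi_2)$; Lemma~\ref{lem:qspaniso} then yields the required $Q$-span isomorphism. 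That the result is a $Q$-span is automatic, since $\otimes \circ (\chi_1 \times \chi_2)$ is simply an $\mathcal{E}$-morphism $X_1 \times X_2 \to Q$.

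Next comes bifunctoriality. Preservation of identities is a one-line check: the tensor of $(A,1,1,\chi_k)$ and $(B,1,1,\chi_k)$ has underlying span $(A \times B, 1, 1)$ and characteristic morphism $\otimes \circ (\chi_k \times \chi_k)$, which equals $\chi_k$ since $k \otimes k = k$ by the monoid unit law. The substantial step is preservation of composition, i.e. $(S_1 \otimes S_2)\circ(R_1 \otimes R_2) = (S_1 \circ R_1)\otimes(S_2 \circ R_2)$. On underlying spans this follows from the standard fact that a product of pullback squares is a pullback square, giving a canonical $\mathcal{E}$-isomorphism $(X_1 \times_{B_1} Y_1)\times(X_2 \times_{B_2} Y_2) \cong (X_1 \times X_2)\times_{B_1 \times B_2}(Y_1 \times Y_2)$ compatible with the outer legs. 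The remaining task is to show this isomorphism also intertwines the two characteristic morphisms: evaluating both sides at a point reduces the claim to the middle-four interchange $(a \otimes b) \otimes (c \otimes d) = (a \otimes c) \otimes (b \otimes d)$, which holds in any commutative monoid. A final application of Lemma~\ref{lem:qspaniso} closes this part.

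Finally, commutation with graphs amounts to $(f \times g)_\circ = f_\circ \otimes g_\circ$. Unwinding the definitions, $f_\circ \otimes g_\circ = (A \times A', 1 \times 1, f \times g, \otimes \circ (\chi_k \times \chi_k))$ whereas $(f \times g)_\circ = (A \times A', 1, f \times g, \chi_k)$; these agree on the nose since $1_A \times 1_{A'} = 1_{A \times A'}$ and, as above, $\otimes \circ (\chi_k \times \chi_k) = \chi_k$. I expect the composition case, and specifically the verification that the canonical pullback isomorphism respects the characteristic morphisms, to be the main obstacle. As with the associativity argument in Lemma~\ref{lem:monoidspancat}, this is a bookkeeping exercise in tracking projections and copairings that is cleanest in the graphical calculus of the cartesian monoidal category $\mathcal{E}$, with commutativity of $\otimes$ licensing the crucial swap of the two middle factors $\xi_1$ and $\chi_2$; the explicit diagrammatic manipulation can be suppressed to avoid a lengthy typesetting exercise.
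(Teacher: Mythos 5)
Your proposal is correct and takes essentially the same route as the paper's own proof: well-definedness via the product $\varphi \times \psi$ of span isomorphisms promoted by Lemma~\ref{lem:qspaniso}, bifunctoriality by checking identities through $\otimes \circ (\chi_k \times \chi_k) = \chi_k$ and composition through the canonical isomorphism $(X_1 \times_{B_1} Y_1)\times(X_2 \times_{B_2} Y_2) \cong (X_1 \times X_2)\times_{B_1 \times B_2}(Y_1 \times Y_2)$, and the graph condition by the same one-line computation $(f \times g)_\circ = f_\circ \otimes g_\circ$. The only cosmetic difference is that you reduce the characteristic-morphism check for composition to the middle-four interchange argued on (generalized) elements, where the paper carries out the same verification point-free with explicit projection bookkeeping.
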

\begin{proof}
	We first show that this respects equivalence classes of spans. Assume we have span isomorphisms
	\begin{gather*}
	\varphi : (X_1,f_1,g_1,\chi_1) \rightarrow (X'_1,f'_1,g'_1,\chi'_1) \\
	\psi : (X_2,f_2,g_2,\chi_2) \rightarrow (X'_2,f'_2,g'_2,\chi'_2)
	\end{gather*}
	The product~$\varphi \times \psi$ gives an isomorphism of ordinary spans
	\begin{align*}
	\varphi \times \psi : (X_1,f_1,g_1,\chi_1) &\otimes (X_2,f_2,g_2,\chi_2) \rightarrow \\
	&\rightarrow (X'_1,f'_1,g'_1,\chi'_1) \otimes (X'_2,f'_2,g'_2,\chi'_2)
	\end{align*}
	It then remains to check this commutes with characteristic morphisms. We calculate
	\begin{align*}
	\otimes \circ (\chi'_1 \times \chi'_2) \circ (\varphi \times \psi) &=
	\otimes \circ \left[ (\chi'_1 \circ \varphi) \times (\chi'_2 \circ \psi) \right]\\
	&= \otimes \circ (\chi_1 \times \chi_2)
	\end{align*}
	That this is bifunctorial as an operation on the underlying spans is well known. It remains to check the behaviour with respect to characteristic morphisms.
	For identity~$Q$-spans, the resulting characteristic function is
	\begin{equation*}
	\otimes \circ (\chi_k \times \chi_k) = \chi_k
	\end{equation*}
	For composition, we note there is an isomorphism of spans
	\begin{align*}
	&\langle \langle p_1 \circ p_1, p_1 \circ p_2 \rangle, \langle p_2 \circ p_1, p_2 \circ p_2 \rangle \rangle:\\
	&: (X \times_B Y) \times (X' \times_{B'} Y')
	\rightarrow (X \times X') \times_{B \times B'} (Y \times Y')
	\end{align*}
	We must show this commutes with the corresponding characteristic morphisms. The following unpleasant calculation establishes the required equality
	\begin{align*}
	\otimes \circ (\otimes \times \otimes) &\circ \left[ ((\chi \times \xi) \circ \langle p_1, p_2 \rangle) \times ((\chi' \times \xi') \langle p_1, p_2 \rangle) \right] =\\
	&= \otimes \circ (\otimes \times \otimes) \circ \left[ (\chi \times \xi) \times (\chi' \times \xi') \right] \circ \\
	&\qquad \circ \langle \langle p_1 p_1, p_2 p_1 \rangle, \langle p_1 p_2, p_2 p_2 \rangle \rangle\\
	&=\otimes \circ (\otimes \times \otimes) \circ \left[ (\chi \times \chi') \times (\xi \times \xi') \right] \circ\\
	&\qquad \circ \langle \langle p_1 p_1, p_1 p_2 \rangle, \langle p_2 p_1, p_2 p_2 \rangle \rangle \circ\\
	&\qquad \qquad \circ \langle \langle p_1 p_1, p_2 p_1 \rangle, \langle p_1 p_2, p_2 p_2 \rangle \rangle\\
	&=\otimes \circ (\otimes \times \otimes) \circ \left[ (\chi \times \chi') \times (\xi \times \xi') \right] \circ\\
	&\qquad \circ \langle \langle p_1 p_1, p_1 p_2 \rangle, \langle p_2 p_1, p_2 p_2 \rangle \rangle
	\end{align*}
	Finally, we must confirm that this bifunctor commutes with graphs. On objects this is obvious, as all the functors involved act as the identity on objects. On morphisms we reason
	\begin{align*}
	(f_1 \times f_2)_\circ &=
	(A_1 \times A_2, 1_{A_1 \times A_2}, f_1 \times f_2, k)\\
	&= (A_1, 1_{A_1}, f, k) \otimes (A_2, 1_{A_2}, f, k)\\
	&= (f_1)_\circ \otimes (f_2)_\circ\qedhere
	\end{align*}
\end{proof}

\begin{lemma}
	\label{lem:spancalc}
	Let~$\mathcal{E}$ be a finitely complete category and~$(Q,\otimes,k)$ an internal monoid. The following equation holds in~\spanq{Q}
	\begin{equation*}
	\graph{k} \circ (X,f,g,\chi) \circ \cograph{h} = (X, h \circ f, k \circ g, \chi)
	\end{equation*}
\end{lemma}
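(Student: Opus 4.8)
The plan is to compute the two outer compositions directly from the pullback definition of $Q$-span composition, exploiting the fact that each pullback is taken along an \emph{identity} morphism and therefore trivializes. Recall from lemma~\ref{lem:monoidspangraph} that the graph $\graph{k}$ and cograph $\cograph{h}$ each have one leg equal to an identity and constant characteristic morphism $\chi_k$. Since by our standing convention we work with isomorphism classes of $Q$-spans, it suffices to exhibit the required $Q$-span isomorphisms; lemma~\ref{lem:qspaniso} then lets us upgrade any underlying-span isomorphism in $\mathcal{E}$ to a genuine isomorphism in $\spanq{Q}$ once the characteristic morphisms are shown to agree.

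First I would treat the precomposition $(X,f,g,\chi) \circ \cograph{h}$. The composite is formed by pulling back the identity leg of $\cograph{h}$ against the left leg $f$ of $(X,f,g,\chi)$. Pulling back along an identity is trivial, so the apex of the composite is canonically isomorphic to $X$ itself, with the two projections given by $f$ and $1_X$. Transporting the legs through this identification produces left leg $h \circ f$ and right leg $g$. For the characteristic morphism, the composition formula yields $\otimes \circ (\chi_k \times \chi) \circ \langle p_1, p_2 \rangle$, which on the identified apex sends $x \mapsto k \otimes \chi(x)$; the left unit law of the monoid collapses this to $\chi$. Hence $(X,f,g,\chi) \circ \cograph{h} = (X, h \circ f, g, \chi)$ up to isomorphism.

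Next I would postcompose with $\graph{k}$ in exactly the same fashion. The relevant pullback is now of the right leg $g$ of $(X, h \circ f, g, \chi)$ against the identity leg of $\graph{k}$, again a pullback along an identity, so the apex is once more identified with $X$, with projections $1_X$ and $g$. The legs become $h \circ f$ and $k \circ g$, while the characteristic morphism becomes $x \mapsto \chi(x) \otimes k$, which the right unit law collapses back to $\chi$. Composing the two steps delivers the claimed equality $\graph{k} \circ (X,f,g,\chi) \circ \cograph{h} = (X, h \circ f, k \circ g, \chi)$.

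There is no genuine obstacle here; the calculation is routine bookkeeping. The only points requiring care are tracking which projection of each trivial pullback is the identity, so that the legs are post-composed in the correct order, and the appeal to lemma~\ref{lem:qspaniso} to guarantee that the evident underlying-span isomorphisms respect the characteristic morphisms and hence are isomorphisms in $\spanq{Q}$. I note in passing that only the unit laws of $(Q,\otimes,k)$ are used and commutativity is never invoked, consistent with the hypothesis assuming merely an internal monoid.
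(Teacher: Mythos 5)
Your proof is correct, and it diverges from the paper's in a way worth noting. You handle both composites by the same direct device: each pullback is taken along an identity leg, so it trivializes to $X$, and the unit laws of the monoid collapse the characteristic morphism back to $\chi$. The paper only does this for the postcomposition $\graph{k} \circ (X,f,g,\chi)$; for the precomposition it instead conjugates by the converse functor, writing $(X,f,g,\chi) \circ \cograph{h} = \converse{(\graph{h} \circ \converse{(X,f,g,\chi)})}$ and reusing the postcomposition case, which leans on lemma~\ref{lem:spanqconverse} (contravariant functoriality of converse) and the identity $\cograph{(-)} = \converse{(-)} \circ \graph{(-)}$ of lemma~\ref{lem:monoidspangraph}. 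The paper's route is shorter on the page because it recycles established machinery, but both of those ingredients require commutativity of $\otimes$ (the paper says so explicitly in the proof of lemma~\ref{lem:spanqconverse}, and lemma~\ref{lem:monoidspangraph} states the cograph--converse identity only for commutative $Q$), whereas the lemma itself is stated for a bare internal monoid. Your direct computation uses only the two unit laws, as you observe, so it actually proves the statement under its stated hypotheses without any hidden appeal to commutativity; in that respect your argument is the more faithful one, at the cost of writing out a second (routine) pullback identification. Your appeals to lemma~\ref{lem:qspaniso} to upgrade the underlying-span isomorphisms, and implicitly to associativity of composition (lemma~\ref{lem:monoidspancat}) to chain the two steps, are both legitimate and match the paper's conventions on isomorphism classes of $Q$-spans.
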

\begin{proof}
	We firstly consider the case of post composition with the graph of a morphism in the underlying category
	\begin{equation*}
	\graph{k} \circ (X,f,g,\chi)
	\end{equation*}
	This composite is given by the pullback span
	\begin{equation*}
	(X \times_B B, p_1 \circ f, p_2 \circ k, \otimes \circ (\chi \times \chi_k) \circ \langle p_1, p_2 \rangle)
	\end{equation*}
	We note that $p_1 \circ \langle 1_X, g \rangle = 1_X$
	and
	\begin{equation*}
	\langle 1_X, g \rangle \circ p_1 =
	\langle p_1, g \circ p_1 \rangle =
	\langle p_1, p_2 \rangle =
	1_{X \times_B B}
	\end{equation*}
	and so~$p_1$ and $\langle 1_X, g \rangle$ witness an isomorphism in~$\mathcal{E}$. We next confirm they give a span isomorphism. One
	of the conditions for~$p_1$ to be a span morphism is trivial, for the other
	\begin{equation*}
	k \circ g \circ p_1 =
	k \circ 1 \circ p_2 =
	k \circ p_2
	\end{equation*}
	Finally, we must confirm that this commutes with the characteristic morphisms
	\begin{equation*}
	\otimes \circ (\chi \times \chi_k) \circ \langle p_1, p_2 \rangle=
	\chi \circ p_1 \circ \langle p_1, p_2 \rangle=
	\chi \circ p_1
	\end{equation*}
	Now we note that
	\begin{align*}
	(X,f,g,\chi) \circ \cograph{h} &= (X,f,g,\chi) \circ h_\circ^\circ\\
	&= (h_\circ \circ (X,f,g,\chi)^\circ)^\circ\\
	&=(h_\circ \circ (X,g,f,\chi))^\circ\\
	&=(X,g,h \circ f, \chi)^\circ\\
	&=(X,h \circ f, g,\chi)
	\end{align*}
	Combining these two observations then completes the proof.
\end{proof}

\simplespans*
\begin{proof}
	We first confirm that~\spanq{Q} carries a monoidal structure. We take as our monoidal unit the terminal object in~$\mathcal{E}$ and the tensor
	to be the bifunctor of proposition~\ref{lem:spanqtensor}. Next, we note that the underlying category is a symmetric monoidal category with
	respect to binary products. The graph construction is surjective on objects, and commutes with the tensor, therefore the graphs of the coherence
	morphisms in~$\mathcal{E}$ lift to~\spanq{Q}. We must confirm that each of these remains natural in~\spanq{Q}.
	
	Applying proposition~\ref{lem:spancalc}, it is sufficient to show the following are span isomorphisms
	\begin{align*}
	&\lambda_X : (1 \times X, \lambda_A \circ (1_1 \times f), \lambda_B \circ (1_1 \times g), \\
	&\qquad \qquad \qquad \qquad ,\otimes \circ (\chi_k \times \chi)) \rightarrow (X,f,g,\chi)\\
	&\rho_X : (X \times 1, \rho_A \circ (f \times 1_1), \rho_B \circ (g \times 1_1), \\
	&\qquad \qquad \qquad \qquad ,\otimes \circ (\chi \times \chi_k)) \rightarrow (X,f,g,\chi)\\
	&\alpha_{X,Y,Z} : ((X_1 \times X_2) \times X_3, \alpha_{A_1,A_2,A_3} \circ\\
	&\qquad\circ ((f_1 \times f_2) \times f_3), \alpha_{B_1,B_2,B_3} \circ ((g_1 \times g_2) \times g_3), ...) \rightarrow\\
	&\qquad \qquad \qquad \rightarrow (X_1 \times (X_2 \times X_3))\\
	&\sigma_{X,Y} : (X_1 \times X_2, \sigma_{A_1, A_2} \circ f_1 \times f_2, \sigma_{B_1,B_2} \circ g_1 \times g_2,\\
	&\qquad,\otimes  \circ (\chi_1 \times \chi_2)) \rightarrow\\
	&\qquad\qquad\rightarrow (X_2 \times X_1, f_2 \times f_2, g_2 \times g_1, \otimes \circ (\chi_2 \times \chi_1)) 
	\end{align*}
	And this is now just a straightforward (but very unpleasant) check.
\end{proof}

\begin{lemma}
	Let~$\mathcal{E}$ be a topos, $(Q, \otimes, k, \leq)$ an internal partially ordered commutative monoid, and~$(\Sigma,E)$ an algebraic variety.
	If~$(X_1,f_1,g_1,\chi_1)$ is an algebraic~$Q$-span, and~$(X_2,f_2,g_2,\chi_2)$ is an isomorphic~$Q$-span, then
	it is also an algebraic span.
\end{lemma}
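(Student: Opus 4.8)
The plan is to verify the algebraic span axiom for $(X_2,f_2,g_2,\chi_2)$ by transporting witnesses across the given isomorphism, working throughout in the internal language of $\mathcal{E}$. First I would fix notation: write $\alpha : X_1 \to X_2$ for the $Q$-span isomorphism, so that by definition $f_1 = f_2 \circ \alpha$, $g_1 = g_2 \circ \alpha$ and $\chi_1 = \chi_2 \circ \alpha$. By lemma~\ref{lem:qspaniso} the $\mathcal{E}$-inverse $\alpha^{-1}$ is again a $Q$-span morphism, which supplies the companion equations $f_2 = f_1 \circ \alpha^{-1}$, $g_2 = g_1 \circ \alpha^{-1}$ and $\chi_2 = \chi_1 \circ \alpha^{-1}$. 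These six equations are all that the argument will need.

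Reasoning internally, I would fix $\sigma \in \Sigma$ of arity $n$ and suppose given elements $x_1,\dots,x_n$ of $X_2$ with $f_2(x_i) = a_i$ and $g_2(x_i) = b_i$ for each $i$. Setting $x'_i = \alpha^{-1}(x_i)$, the companion equations give $f_1(x'_i) = a_i$ and $g_1(x'_i) = b_i$, so that the hypothesis of the algebraic span axiom is satisfied on the $X_1$ side. Since $(X_1,f_1,g_1,\chi_1)$ is an algebraic $Q$-span, this produces a witness $x' \in X_1$ with $f_1(x') = \sigma(a_1,\dots,a_n)$, $g_1(x') = \sigma(b_1,\dots,b_n)$ and $\chi_1(x'_1) \otimes \dots \otimes \chi_1(x'_n) \leq \chi_1(x')$.

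To close the argument I would push this witness forward by setting $x = \alpha(x')$. Compatibility of $\alpha$ with the two legs yields $f_2(x) = f_1(x') = \sigma(a_1,\dots,a_n)$ and $g_2(x) = g_1(x') = \sigma(b_1,\dots,b_n)$, while $\chi_2 \circ \alpha = \chi_1$ yields $\chi_2(x) = \chi_1(x')$. The same equation applied to each $x_i = \alpha(x'_i)$ gives $\chi_2(x_i) = \chi_1(x'_i)$, so the inequality obtained on the $X_1$ side transfers directly to $\chi_2(x_1) \otimes \dots \otimes \chi_2(x_n) \leq \chi_2(x)$, completing the verification of the axiom for $(X_2,f_2,g_2,\chi_2)$.

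The argument is essentially a routine change of variables, so I expect no genuine obstacle. The only point demanding care is that the whole derivation---in particular the transfer of the existential witness and of the monoid inequality---be phrased in the internal language, where it goes through unchanged precisely because $\alpha$ is an internal isomorphism compatible with all of the span data.
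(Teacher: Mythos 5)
Your proposal is correct and follows essentially the same argument as the paper's proof: pull the hypothesis back along the inverse of the span isomorphism, apply the algebraic axiom on the $X_1$ side, and push the resulting witness forward, using compatibility of the isomorphism with both legs and the characteristic morphism. The only cosmetic difference is that you explicitly cite lemma~\ref{lem:qspaniso} for the inverse being a $Q$-span morphism, which the paper leaves implicit.
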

\begin{proof}
	For the assumptions in the question, with~$\iota$ denoting the assumed isomorphism and $\sigma \in \Sigma$, if
	\begin{equation*}
	f_2(x_1) = a_1 \wedge g_2(x_1) = b_1 \wedge ... \wedge f_2(x_n) = a_n \wedge g_2(x_n) = b_n
	\end{equation*}
	then using our span isomorphism.
	\begin{align*}
	f_1(\iota^{-1}(x_1)) = a_1 &\wedge g_1(\iota^{-1}(x_1)) = b_1 \wedge \dots\\
	&\dots \wedge f_1(\iota^{-1}(x_n)) = a_n \wedge g_1(\iota^{-1}(x_n)) = b_n
	\end{align*}
	By assumption that the first span is algebraic, there exists~$x$ such that
	\begin{align*}
	f_1(x) = &\sigma(a_1,...,a_n) \wedge g_1(x) = \sigma(b_1,...,b_n) \wedge \\
	&\wedge \chi_1(\iota^{-1}(x_1)) \otimes ... \otimes \chi_1(\iota^{-1}(x_n)) \leq \chi_1(x)
	\end{align*}
	Therefore, using our span isomorphism again
	\begin{align*}
	f_2(\iota(x)) &= \sigma(a_1,...,a_n) \wedge g_2(\iota(x)) = \sigma(b_1,...,b_n) \wedge \\
	&\wedge \chi_2(x_1) \otimes ... \otimes \chi_2(x_n) \leq \chi_2(\iota(x))\qedhere
	\end{align*}
\end{proof}

\begin{lemma}
	\label{lem:algebraicspancat}
	Let~$\mathcal{E}$ be a topos, $(Q, \otimes, k, \leq)$ an internal partially ordered commutative monoid, 
	and~$(\Sigma,E)$ an algebraic variety. \spanvq{(\Sigma,E)}{Q} is a category.
\end{lemma}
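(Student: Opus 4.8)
The plan is to build directly on the work already done for unrestricted $Q$-spans. Lemma~\ref{lem:monoidspancat} establishes that $\spanq{Q}$ is a category, and the preceding lemma shows that the property of being an algebraic $Q$-span is invariant under $Q$-span isomorphism. Since identities and composition in $\spanvq{(\Sigma,E)}{Q}$ are, by definition, those of the underlying $Q$-spans, associativity, the unit laws, and independence of the choice of representatives are all inherited from $\spanq{Q}$. Hence the only genuinely new obligations are to verify (i) that each identity $Q$-span is algebraic, and (ii) that the composite of two algebraic $Q$-spans is again algebraic. Both verifications I would carry out in the internal language of $\mathcal{E}$, reasoning intuitionistically exactly as in the earlier proofs.

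For (i), I would take the identity span $(A,1_A,1_A,\chi_k)$ and fix $\sigma \in \Sigma$ of arity $n$. The hypothesis $1_A(x_i) = a_i \wedge 1_A(x_i) = b_i$ forces $x_i = a_i = b_i$, so the choice $x = \sigma(a_1,\dots,a_n)$ gives $f(x) = g(x) = \sigma(a_1,\dots,a_n) = \sigma(b_1,\dots,b_n)$ immediately, and the characteristic condition reduces to the equality $\chi_k(x_1) \otimes \dots \otimes \chi_k(x_n) = k \otimes \dots \otimes k = k = \chi_k(x)$. This step is routine.

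The substantive step is (ii). Given algebraic $Q$-spans $(X,f,g,\chi) : A \rightarrow B$ and $(Y,h,k,\xi) : B \rightarrow C$, their composite has carrier the pullback $X \times_B Y$, legs $f \circ p_1$ and $k \circ p_2$, and characteristic morphism sending $(x,y)$ to $\chi(x) \otimes \xi(y)$. Fixing $\sigma$ and elements $z_i = (x_i,y_i)$ with $f(x_i) = a_i$ and $k(y_i) = c_i$, the pullback condition supplies a common value $b_i := g(x_i) = h(y_i)$. Applying algebraicity of the first span to $f(x_i) = a_i \wedge g(x_i) = b_i$ produces some $x$ with $g(x) = \sigma(b_1,\dots,b_n)$ and $\chi(x_1) \otimes \dots \otimes \chi(x_n) \leq \chi(x)$; applying algebraicity of the second span to $h(y_i) = b_i \wedge k(y_i) = c_i$ produces some $y$ with $h(y) = \sigma(b_1,\dots,b_n)$ and $\xi(y_1) \otimes \dots \otimes \xi(y_n) \leq \xi(y)$. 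Since $g(x) = \sigma(b_1,\dots,b_n) = h(y)$, the pair $(x,y)$ lies in $X \times_B Y$ and has legs $\sigma(a_1,\dots,a_n)$ and $\sigma(c_1,\dots,c_n)$, as required.

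The main obstacle is the remaining characteristic inequality, which is where the monoid axioms do the real work. I would rewrite the product $(\chi(x_1)\otimes\xi(y_1))\otimes\dots\otimes(\chi(x_n)\otimes\xi(y_n))$ as $(\chi(x_1)\otimes\dots\otimes\chi(x_n))\otimes(\xi(y_1)\otimes\dots\otimes\xi(y_n))$ by repeatedly interleaving factors using commutativity and associativity of $\otimes$, and then apply monotonicity of $\otimes$ with respect to $\leq$ (which is part of the definition of a partially ordered monoid) to the two inequalities above, yielding $\leq \chi(x)\otimes\xi(y)$. The only delicate point is that the whole argument must be read in intuitionistic first-order logic: the witnesses $x$ and $y$ are introduced by existential elimination, the pair $(x,y)$ is obtained from the universal property of the pullback, and the final existential is reintroduced, so no appeal to choice or excluded middle is made and the reasoning is valid internally to an arbitrary topos. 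Combining (i) and (ii) with the inherited category axioms and the iso-invariance of the algebraic condition completes the proof.
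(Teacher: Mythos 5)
Your proposal is correct and follows essentially the same route as the paper's proof: inherit associativity, unit laws and representative-independence from the unrestricted $Q$-span category, verify the identity span is algebraic using constancy of $\chi_k$, and close composites under the algebraic condition by extracting witnesses from each component span, pairing them via the pullback, and combining the two characteristic inequalities using commutativity and monotonicity of $\otimes$. Your write-up is in fact slightly more explicit than the paper's at the one delicate point, namely that $g(x)=\sigma(b_1,\dots,b_n)=h(y)$ is what places the pair $(x,y)$ in the apex of the composite.
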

\begin{proof}
	Throughout, we will perform checks for an arbitrary~$\sigma \in \Sigma$.
	Firstly, we must confirm that the identity morphisms are algebraic. The required condition
	is immediate, as~$A$ is closed under the algebraic operations, and the characteristic morphism
	is constant in the internal language.
	
	Secondly, we must confirm that algebraic~$Q$-spans are closed under composition. Assume
	\begin{align*}
	f(p_1(z_1)) = a_1 &\wedge k(p_2(z_1)) = c_1 \wedge ...\\
	&\dots \wedge f(p_1(z_n)) = a_n \wedge k(p_2(z_n)) = c_n
	\end{align*}
	then there exist~$x_1,...,x_n$ and~$y_1,...,y_n$ such that
	\begin{align*}
	f(x_1) = a_1 &\wedge g(x_1) = h(y_1) \wedge k(y_1) = c_1 \wedge ...\\
	&\dots \wedge f(x_n) = a_n \wedge g(x_n) = h(y_n) \wedge k(y_n) = c_n
	\end{align*}
	therefore as the component spans are algebraic, there exist~$x$ and~$y$ such that
	\begin{equation*}
	f(x) = \sigma(a_1,...,a_n) \wedge g(x) = h(y) \wedge k(y) = \sigma(c_1,...,c_n)
	\end{equation*}
	and both
	\begin{equation*}
	\chi(x_1) \otimes ... \otimes \chi(x_n) \leq \chi(x), \qquad 	\xi(y_1) \otimes ... \otimes \xi(y_n) \leq \xi(y)
	\end{equation*}
	Therefore we have~$(x,y)$ in the apex of the composite span, with truth value $	\chi(x) \otimes \xi(y)$.
	By monotonicity of the tensor
	\begin{equation*}
	\chi(x_1) \otimes ... \otimes \chi(x_n) \otimes \xi(y_1) \otimes ... \otimes \xi(y_n) \leq \chi(x) \otimes \xi(y)
	\end{equation*}
	Finally, as the tensor is commutative
	\begin{equation*}
	\chi(x_1) \otimes \xi(y_1) \otimes ... \otimes \chi(x_n) \otimes \xi(y_n) \leq \chi(x) \otimes \xi(y)
	\end{equation*}
	That the composition is associative and the identities satisfy the required axioms follows immediately from the same properties
	for the underlying~$Q$-spans as established in lemma~\ref{lem:monoidspancat}.
\end{proof}

\spanvqsmc*
\begin{proof}
	Throughout, we will perform checks for an arbitrary~$\sigma \in \Sigma$. The proof is exactly as in lemma~\ref{lem:spanqtensor}, we just need to prove that the functors defined in lemmas~\ref{lem:spanqconverse}, \ref{lem:monoidspangraph}, \ref{lem:spanqtensor} respect the algebraic condition.
	
	With regard to lemma~\ref{lem:spanqconverse}, we just observe that the condition for a~$Q$-span to be algebraic is symmetrical in its
	domain and codomain, and therefore preserved.
	
	With regard to lemma~\ref{lem:monoidspangraph}, as the characteristic morphism is constant, we must simply confirm the existence of witnesses relating composite terms. If
	\begin{equation*}
	f(a_1) = b_1 \wedge ... \wedge f(a_n) = b_n
	\end{equation*}
	then as~$f$ is a homomorphism
	\begin{equation*}
	f(\sigma(a_1,...,a_n)) = \sigma(f(a_1),...,f(a_n)) = \sigma(b_1,...,b_n)
	\end{equation*}
	With regard to lemma~\ref{lem:spanqtensor}, for algebraic~$Q$-spans~$(X,f,g,\chi)$ and~$(X',f',g',\chi')$, if
	\begin{equation*}
	(f \times f')(x_1,x'_1) = (a_1,a'_1) \wedge ... \wedge (f \times f')(x_n,x'_n) = (a_n,a'_n)
	\end{equation*}
	and
	\begin{equation*}
	(g \times g')(x_1,x'_1) = (b_1,b'_1) \wedge ... \wedge (g \times g')(x_n,x'_n) = (b_n,b'_n)
	\end{equation*}
	Then
	\begin{equation*}
	f(x_1) = a_1 \wedge f'(x'_1) = a'_1 \wedge ... \wedge f(x_n) = a_n \wedge f'(x'_n) = a'_n
	\end{equation*}
	and
	\begin{equation*}
	g(x_1) = b_1 \wedge g'(x'_n) = b'_1 \wedge ... \wedge g(x_n) = b_n \wedge g'(x'_n) = b'_n
	\end{equation*}
	As the spans are algebraic, there exist~$x$ and~$x'$ such that
	\begin{equation*}
	\chi(x_1) \otimes ... \otimes \chi(x_n) \leq \chi(x), \qquad \chi'(x'_1) \otimes ... \otimes \chi'(x'_n) \leq \chi'(x')
	\end{equation*}
	By monotonicity and commutativity of the tensor, we then have
	\begin{equation*}
	\chi(x_1) \otimes \chi'(x'_1) \otimes ... \otimes \chi(x_n) \otimes \chi'(x'_n) \leq \chi(x) \otimes \chi'(x')
	\end{equation*}
	As we said, functoriality, and that the required diagram commutes then follows from the proof of theorem~\ref{thm:simplespans} as tensors coincide
	on the underlying~$Q$-spans.
\end{proof}

\conversealgebraic*
\begin{proof}
	Converse is defined as in lemma~\ref{lem:spanqconverse}, and the proof that this is algebraic is in proposition~\ref{prop:spanvqsmc}. That converse commutes with the tensor is trivial.
\end{proof}

\graphalgebraic*
\begin{proof}
	Graph is defined as in lemma~\ref{lem:monoidspangraph}, and the proof that this is algebraic is in proposition~\ref{prop:spanvqsmc}. The proof that graphs commute with the tensor is as in lemma~\ref{lem:spanqtensor}.
\end{proof}

\spanvqhypergraph*
\begin{proof}
	As in theorem~\ref{thm:simplespans}, since all the tools used there preserve the algebraic structure.
\end{proof}

\relcanbeordered*
\begin{proof}
	First of all we have to prove that our partial order is well defined.
	It is clearly reflexive. For transitivity, if
	\begin{equation*}
	R \subseteq R' \quad\text{ and }\quad R' \subseteq R''
	\end{equation*}
	then both
	\begin{equation*}
	\vdash R(a,b) \leq R'(a,b) \quad\text{ and }\quad \vdash R'(a,b) \leq R''(a,b)
	\end{equation*}
	Therefore internally
	\begin{equation*}
	\vdash R(a,b) \leq R'(a,b) \wedge R'(a,b) \leq R''(a,b)
	\end{equation*}
	and so by transitivity of the order on the quantale
	\begin{equation*}
	\vdash R(a,b) \leq R''(a,b)
	\end{equation*}
	Finally, if	$R \subseteq R'$ and $R' \subseteq R$,
	and so internally
	\begin{equation*}
	\vdash R(a,b) \leq R'(a,b) \wedge R'(a,b) \leq R(a,b)
	\end{equation*}
	by antisymmetry of the order on the internal quantale
	\begin{equation*}
	\vdash R(a,b) = R'(a,b)
	\end{equation*}
	and so
	\begin{equation*}
	\vdash \forall a,b. R(a,b) = R'(a,b)
	\end{equation*}
	Meaning externally $R = R'$.
	
	Next, we must confirm that composition is monotone in both components. As the proofs are symmetrical, we only consider precomposition explicitly.

	Assume $R \subseteq R'$.	
	We consider post-composing each of these relations with the relation $S$.
	Remembering that the quantale product preserves order, we calculate
	\begin{align*}
	(S \circ R)(a,c) &= \bigvee_b R(a,b) \otimes S(b,c)\\
	&\leq R'(a,b) \otimes S(b,c)\\
	&= \bigvee_b R'(a,b) \otimes S(b,c)\\
	&= (S \circ R')(a,c)
	\end{align*}
	Finally, we must confirm that the tensor on~\relvq{(\Sigma,E)}{Q} is monotone in both arguments.
	Assume again $R \subseteq R'$. We calculate
	\begin{align*}
	(R \otimes S)(a,b,c,d) &= R(a,b) \otimes S(c,d)\\
	&\leq R'(a,b) \otimes S(c,d)\\
	&= (R' \otimes S)(a,b,c,d)\qedhere
	\end{align*}
\end{proof}

\spancanbeordered*
\begin{proof}
	Firstly, we must confirm that this ordering is independent of choices of representatives for equivalence classes of spans.

	Assume $(X_1,f_1,g_1,\chi_1) \subseteq (Y,h_1,k_1,\xi_1)$,
	and span isomorphisms
	\begin{align*}
	i : (X_1, f_1, g_1, \chi_1) &\rightarrow (X_2, f_2, g_2, \chi_2)\\
	j : (Y_1, h_1, k_1, \xi_1) &\rightarrow (Y_2, h_2, k_2, \xi_2)
	\end{align*}
	Let
	\begin{equation*}
	m : (X_1, f_1, g_1, \chi_1) \rightarrow (Y,h_1,k_1,\xi_1)
	\end{equation*}
	be the span morphism that is monic in~$\mathcal{E}$ required by the assumed order structure.
	There is then a span morphism
	\begin{equation*}
	j \circ m \circ i^{-1} : (X_2, f_2, g_2, \chi_2) \rightarrow (Y_2, h_2, k_2, \xi_2)
	\end{equation*}
	which is monic in~$\mathcal{E}$ as monomorphisms are closed under composition. We then have
	\begin{equation*}
	\chi_2(x) = 
	\chi_1(i^{-1}(x)) =
	\xi_1(m \circ i^{-1}(x)) =
	\xi_2(j \circ m \circ i^{-1}(x))
	\end{equation*}
	The relation~$\subseteq$ is clearly reflexive via the identity~$Q$-span morphism. 
	
	For transitivity, assume
	\begin{equation*}
	(X,f,g,\chi) \subseteq (Y,h,k,\xi) \subseteq (Z,m,n,\zeta)
	\end{equation*}
	Denote the required monomorphisms
	\begin{align*}
	r : (X,f,g,\chi) &\rightarrow (Y,h,k,\xi)\\
	s : (Y,h,k,\xi) &\rightarrow (Z,m,n,\zeta)
	\end{align*}
	There is then an obvious span morphism $s \circ r$ that is a monomorphism in~$\mathcal{E}$.
	We then have
	\begin{equation*}
	\chi(x) \leq \xi(r(x))
	\end{equation*}
	and so
	\begin{equation*}
	\xi(r(x)) \leq \zeta(s \circ r(x))
	\end{equation*}
	By transitivity of the quantale ordering
	\begin{equation*}
	\chi(x) \leq \zeta(s \circ r (x))
	\end{equation*}
	Then, we must confirm that composition is monotone in both components. As the proofs are symmetrical, we only consider precomposition explicitly.
	
	Assume
	$(X_1,f_1,g_1,\chi_1) \subseteq (X_2,f_2,g_2,\chi_2)$
	as witnessed by~$\mathcal{E}$ monomorphism~$m : X_1 \rightarrow X_2$.
	We consider post-composing each of these spans with the span
	$(Y,h,k,\xi)$.
	There is then a~$Q$-span morphism
	\begin{align*}
	m \times_B 1 : \{ (x_1,y) \mid f_1(x_1) &= h(y) \} \rightarrow\\
	&\rightarrow \{ (x_2,y) \mid f_2(x_2) = h(y) \}
	\end{align*}
	and the underlying morphism is a monomorphism in~$\mathcal{E}$ by standard properties of pullbacks and monomorphisms.
	By monotonicity of the tensor
	\begin{align*}
	(\otimes \circ (\chi_1 \times \xi) &\circ \langle p_1, p_2 \rangle) (x,y)
	= \xi_1(x) \otimes \xi(y)\\
	&\leq \xi_2(m(x)) \otimes \xi(y)\\
	&= (\otimes \circ (\chi_2 \times \xi) \circ \langle p_1, p_2 \rangle) (m(x), y)\\
	&= (\otimes \circ (\chi_2 \times \xi) \circ \langle p_1, p_2 \rangle \circ (m \times_B 1)) (x,y)
	\end{align*}
	Finally, we must confirm that the tensor bifunctor on~\spanvq{(\Sigma,E)}{Q} is monotone in both arguments.
	Assume
	\begin{equation*}
	(X_1,f_1,g_1,\chi_1) \subseteq (X_2,f_2,g_2,\chi_2)
	\end{equation*}
	witnessed by~$\mathcal{E}$ monomorphism~$m : X_1 \rightarrow X_2$. There is then a span morphism
	\begin{align*}
	m \times 1 : (X_1,f_1,g_1,\chi_1) &\otimes (Y,h,k,\xi) \rightarrow\\
	&\rightarrow (X_2,f_2,g_2,\chi_2) \otimes (Y,h,k,\xi)
	\end{align*}
	and this a~$\mathcal{E}$ monomorphism by standard theory of products and monomorphisms. We then calculate
	\begin{align*}
	\otimes \circ (\chi_1 \times \xi) (x,y) &=
	\chi_1(x) \otimes \xi(y)\\
	&\leq \chi_2(m(x)) \otimes \xi(y)\\
	& = (\otimes \circ (\chi_2 \times \xi) \circ (m \times 1)) (x,y)\qedhere
	\end{align*}
\end{proof}

\vfunctor*
\begin{proof}
	We define the action on morphisms on a chosen representative span as follows
	\begin{equation*}
	V(X,p_1,p_2,\chi)(a,b) = \bigvee_x \{ \chi(x) \mid p_1(x) = a \wedge p_2(x) = b \}
	\end{equation*}
	It is easy to check that this definition is independent of our choice of representatives. 
	We moreover check preservation of identities:
	\begin{align*}
		V(A,1_A,1_A,\chi_k)(a_1, a_2) &= \bigvee \{ \chi_k(a) \mid 1_A(a) = a_1 \wedge 1_A(a) = a_2 \}\\
		&= \bigvee \{ k \mid a = a_1 \wedge a = a_2 \}\\
		&= \bigvee \{ k \mid a_1 = a_2 \}\\
		&= 1_A(a_1, a_2)
	\end{align*}
	That the functor commutes with the tensor is clear from the definition.
	That the coherence morphisms for the monoidal structures are preserved on the
	nose is clear as they were constructed using the graph constructions, and $V$
	preserves graphs. To see that $V$ preserves preorders, we assume
	\begin{equation*}
	(X_1,f_1,g_1,\chi_1) \subseteq (X_2,f_2,g_2,\chi_2)
	\end{equation*}
	witnessed by a monomorphism~$m : X_1 \rightarrow X_2$.
	We then calculate
	\begin{align*}
		\bigvee \{ \chi_1(x) &\mid f_1(x) = a \wedge g_1(x) = b \} \leq \\
		&\leq \bigvee \{ \chi_2(m(x)) \mid f_2(m(x)) = a \wedge g_2(m(x)) = b \}\\
		&\leq \bigvee \{ \chi_2(x') \mid f_2(x') = a \wedge g_2(x') = b \}\qedhere
	\end{align*}
\end{proof}

\quantalemorphisminduced*
\begin{proof}
	$h^*$ acts by postcomposing a relation $R: A \times B \to Q_1$ with $h$:
	\begin{equation*}
		h^*(R): A \times B \xrightarrow{R} Q_1 \xrightarrow{h} Q_2
	\end{equation*}
	Preservation of identities, compositions and tensors follows from the fact that $h$ preserves quantale identities,
	 products and joins. In particular, the preservation of quantale joins makes $h$ also order-preserving, 
	 proving that $h^*$ is a \pos-functor. 
	 The functoriality of assigments is trivial: Postcomposing with the identity function on a quantale 
	 gives back the same category we started from, and the composition of quantale homomorphisms is a quantale homomorphism.
\end{proof}

\preorderedmonoidmorphisminduced*
\begin{proof}
	The $h^*$ functor is again defined by postcomposition, as in the relational case. 
	Proof of functoriality and preservation of tensor and preorder is almost identical to the relational case.
\end{proof}

\relclosedunderterms*
\begin{proof}
	Suppose $R$ is a linear relation. we proceed by induction. 
	By definition, if $\tau$ is any $n$-ary operation, then it is
	\[
	\bigotimes_{k=1}^n R(a_k, b_k)  \leq R(\tau (a_1, \dots, a_n), \tau (b_1, \dots, b_n))
	\]
	(in this proof the big tensor symbol is just a shorthand for the quantale product over a finite number of components).
	Now let $\tau_{1}, \dots, \tau_{n}$ be operations of arities 
	$n, k_1, \dots, k_n$, respectively. Being $\tau$ an operation it is
	\begin{align*}
	\bigotimes_{i=1}^n R(\tau_i(a^i_1, &\dots a^i_{k_i}) , \tau_i(b^i_1, \dots b^i_{k_i})) \leq  \\
	&\leq R(\tau(\tau_{1}(a^1_1, \dots a^1_{k_1}), \dots, \tau_{n}(a^n_1, \dots a^i_{k_n})),\\
	&\qquad \qquad \tau (\tau_{1}(b^1_1, \dots b^1_{k_1}), \dots, \tau_{n}(b^n_1, \dots b^n_{k_n})))
	\end{align*}		
	and combining with the same condition on the $\tau_i$ one obtains
	\begin{align*}
	\bigotimes_{i=1}^n \bigotimes_{z=1}^{k_i} &R(a^i_{z}, b^i_{z}) \leq \\
	&\leq R(\tau(\tau_{1}(a^1_1, \dots a^1_{k_1}), \dots, \tau_{n}(a^n_1, \dots a^i_{k_n})),\\
	&\qquad \qquad \tau (\tau_{1}(b^1_1, \dots b^1_{k_1}), \dots, \tau_{n}(b^n_1, \dots b^n_{k_n}))
	\end{align*}
	This concludes the proof since every linear term can be written as a concatenation of operations. 
	
	Affine terms are obtained as compositions of operations and projections. 
	It is thus sufficient to prove that the condition holds for affine relations if 
	$\tau$ is a projection. Then, we can treat any n-ary projection as a generic operation 
	and proceed as in the previous case. But the condition
	\begin{equation*}
		\bigotimes_{k=1}^n R(a_k, b_k)  \leq R(\pi (a_1, \dots, a_n), \pi (b_1, \dots, b_n))
	\end{equation*} 
	being the right hand side just $R(a_i, b_i)$, trivially holds when $R$ is affine. 
	
	Relevant terms are obtained as compositions of operations and diagonals. 
	Note that a diagonal $\delta$ is not a term when taken alone because it is not a morphism of the form 
	$X^n \to X$ for some $n$. This means that if a term is built using diagonals, 
	there is always at least one operation that is composed with the diagonal on the left. 
	The proof is then very similar to the previous ones, 
	with the additional step that if $R$ is relevant, then the condition has to be proven to hold for every 
	$\tau(x_1, \dots, x_i, \delta(x), x_{i+1}, \dots, x_n)$, 
	where $\delta$ is the n-th diagonal and $\tau$ is any $(n+m)$-ary operation.
	
	For cartesian terms it is sufficient to put all these observation together and proceed in the same way.
\end{proof}

\relniceness*
\begin{proof}
	We only prove the affine case explicitly, all the rest being similar. 
	For arbitrary $a_1,a_2,b_1,b_2$ consider the product $R(a_1,b_1) \otimes R(a_2,b_2)$. 
	Both $R(a_1,b_1)$ and $R(a_2,b_2)$ are elements of $Q$, hence, 
	being $Q$ affine, we can readily infer $R(a_1,b_1) \otimes R(a_2,b_2) \leq R(a_1,b_1)$. 
	Being the variables arbitrarily chosen, we universally quantify on them obtaining 
	the axiom of being affine for $R$ as a valid formula in the ambient topos $\mathcal{E}$. 
\end{proof}

\interpretationinducedrel*
\begin{proof}
	An object of $\linrelvq{(\Sigma_2,E_2)}{Q}$ is written as
	$\langle\langle A,\sigma_j\rangle\rangle$, where $A$ is an object of $\mathcal{E}$ and the $\sigma_j$ are morphisms $A^n \to A$ in bijective correspondence
	with the operations in $\Sigma_2$, agreeing with them on arities, and such that they satisfy the equations in $E_2$ 
	(these equations are just commutative diagams between the above mentioned morphisms). 
	The linear (affine, relevant, cartesian) interpretation $i$ maps every operation in $\sigma'_k \in \Sigma_1$ 
	to a linear (affine, relevant cartesian) term $i(\sigma'_k)$ on $\Sigma_2$, 
	such that these terms satisfy the equations in $E_1$.
	This means that $\langle\langle A, i(\sigma'_k)\rangle\rangle$ is an algebra of type $(\Sigma_1, E_1)$. 
	
	The functor $i^*$ then acts as follows: It sends every  $\langle\langle A,\sigma_j\rangle\rangle$ to $\langle\langle A, i(\sigma'_k)\rangle\rangle$, and it is identity on morphisms 
	(the fact that morphisms of $\linrelvq{(\Sigma_2,E_2)}{Q}$ 
	are also morphisms of $\linrelvq{(\Sigma_1,E_1)}{Q}$ is a direct consequence 
	of proposition~\ref{prop:relclosedunderterms}).
	Functoriality then holds trivially being $i^*$ identity on morphisms.
\end{proof}

\spanclosedterms*
\begin{proof}
	As in the relational case, we proceed by induction. Let $(X,f,g,\chi)$ be a span.
	By definition, if $\tau$ is any $n$-ary operation, then the axiom
	(here the big wedge and the big tensor product are just a shorthand for a logical conjunction
	 and a quantale product over a finite number of components, respectively)
	\begin{gather*}
	\bigwedge_{i=1}^n (f(x_i) = a_i \wedge g(x_i) = b_i) \implies \\
	\exists x: f(x) = \tau(a_1, \dots, a_n), k(x) \wedge g(x) = \tau(b_1, \dots, b_n) \wedge\\
	\wedge \bigotimes_{i=1}^n \chi(x_i) \leq \chi(x)
	\end{gather*}
	is already satisfied.
	Now let $\tau, \tau_{1}, \dots, \tau_{n}$ be operations of arities 
	$n, k_1, \dots, k_n$, respectively. Being $\tau$ an operation it is
	\begin{gather*}
	\bigwedge_{i=1}^n (f(x^i) = \tau_i(a^i_1, \dots a^i_{k_i}) \wedge g(x^i) = \tau_i(b^i_1, \dots b^i_{k_i})) \implies \\
	\exists x: f(x) = \tau(\tau_{1}(a^1_1, \dots a^1_{k_1}), \dots, \tau_{n}(a^n_1, \dots a^i_{k_n})) \wedge \\ 
	\wedge g(x) = \tau(\tau_{1}(b^1_1, \dots b^1_{k_1}), \dots, \tau_{n}(b^n_1, \dots b^i_{k_n})) \wedge\\
	\wedge \bigotimes_{i=1}^n \chi(x_i) \leq \chi(x)
	\end{gather*}
	and combining with the same condition on the $\tau_i$ one obtains
	\begin{gather*}
	\bigwedge_{i=1}^n\bigwedge_{j=1}^{k_i} (f(x^i_j) = a^i_j \wedge g(x^i_j) = b^i_j) \implies \\
	\exists x: f(x) = \tau(\tau_{1}(a^1_1, \dots a^1_{k_1}), \dots, \tau_{n}(a^n_1, \dots a^i_{k_n})) \wedge \\ 
	\wedge g(x) = \tau(\tau_{1}(b^1_1, \dots b^1_{k_1}), \dots, \tau_{n}(b^n_1, \dots b^i_{k_n})) \wedge\\
	\wedge \bigotimes_{i=1}^n\bigotimes_{j=1}^{k_i} \chi(x^i_j) \leq \chi(x)
	\end{gather*}
	This concludes the proof since every linear term can be written as a concatenation of operations. 

	For affine, relevant and cartesian terms the considerations done in the proof of 
	proposition~\ref{prop:relclosedunderterms} can easily be adapted to the span case.
\end{proof}

\interpretationinducedspan*
\begin{proof}
	$i^*$ is defined as in the relational case, sending algebras of type $(\Sigma_2, E_2)$ to their 
	interpretations of type $(\Sigma_1, E_1)$. Being it identity on morphisms by definition
	(the fact that morphisms of $\linspanvq{(\Sigma_2,E_2)}{Q}$ 
	are also morphisms of $\linspanvq{(\Sigma_1,E_1)}{Q}$ is a direct consequence 
	of proposition~\ref{prop:spanclosedunderterms}) functoriality follows trivially. 
\end{proof}

\logicalfunctorinducedrel*
\begin{proof}
	The proof heavily relies on the fact that logical morphisms preserve models of logical theories: 
	We know that, if $T$ is a logical theory, a logical functor $L: \mathcal{E} \to \mathcal{F}$ 
	preserves every interpretation (that is, every model), of $T$ in $\mathcal{E}$. 
	This is because an interpretation of $T$ in $\mathcal{E}$ assigns to every term 
	and formula its correspondent in the Mitchell-B\'ernabou internal language: 
	Every type is interpreted in a product of objects, and every constant into a morphism of $\mathcal{E}$. 
	The axioms correspond, finally, to commutative diagrams in $\mathcal{E}$. 
	Since these diagrams involve only limits, exponentials and subobject classifiers, 
	they are preserved by $L$ up to isomorphism. This means that the image through $L$ 
	of objects and morphisms that constitute a model of $T$ in $\mathcal{E}$ is a model of $T$ in $\mathcal{F}$. 
	The idea is then to state our definition of composition and identity of 
	$\relvtq{\mathcal{E}}{(\Sigma, E)}{Q}$ in terms of logical theories: In this case the 
	composition and the identity of two algebra-preserving relations will be just a model of this theory in $\mathcal{E}$, 
	and will hence be preserved by $L$. 
	The images through $L$ of our relations will then still satisfy our definition of composition 
	in the internal language of $\mathcal{F}$, guaranteeing that $L(R \circ S) (a,c) = \bigvee_b\{LR(a,b) \otimes LS(b,c)\} = (LR \circ LS)(a,c)$.
	From this, we can define $L^*: \relvtq{\mathcal{E}}{(\Sigma, E)}{Q} \to \relvtq{\mathcal{F}}{(\Sigma, E)}{LQ}$ as follows:
	\begin{itemize}
		\item On objects, $L^*(A) = L(A)$
		\item On morphisms, denoting with $\kappa$ the canonical 
		isomorphism from $LA \times LB$ to $L(A\times B)$, 
		\begin{equation*}
			L^*(R) = LA \times LB \xrightarrow{\kappa} L(A\times B) \xrightarrow{LR} LQ
		\end{equation*}
	\end{itemize}
	Now we have to state what our composition is in terms of logical theories. 
	Given a signature $(\Sigma, E)$, we can define a logical theory 
	\begin{align*}
	T = (&A,B, C, Q,
	\{\sigma^A_i\}_{\sigma_i \in \Sigma}, \{\sigma^B_i\}_{\sigma_i \in \Sigma}, \{\sigma^C_i\}_{\sigma_i \in \Sigma}, \\
	&\otimes, \vee, k, 1_B, R_{AB}, R_{BC}, R_{AC}),
	\end{align*}
	where
	\begin{itemize}
		\item For a given $\sigma_i \in \Sigma$ of ariety $n_i$, 
		\begin{itemize}
			\item $\sigma^A_i$ is a constant of type $A^{A^{n_i}}$
			\item $\sigma^B_i$ is a constant of type $B^{B^{n_i}}$
			\item $\sigma^C_i$ is a constant of type $C^{C^{n_i}}$
		\end{itemize}
		\item $\otimes$ is a constant of type $Q^{Q \times Q}$
		\item $\vee$ is a constant of type $Q^{PQ}$
		\item $k$ is a constant of type $Q$
		\item $1_B$ is a constant of type $Q^{B \times B}$
		\item $R_{AB}, R_{BC}, R_{CD}$ are constants of type 
		$Q^{A \times B}, Q^{B \times C}, Q^{A \times C}$, respectively	
	\end{itemize}
	We require this theory to satisfy the set of axioms 
	$\{ \alpha_{1_B}, \{\alpha_A\}, \{\alpha_B\}, 
	\{\alpha_C\}, \{\alpha_Q\}, \{\alpha_{R_{AB}}\}, \{\alpha_{R_{BC}}\}, 
	\alpha_\textit{comp}\}$, where:
	\begin{itemize}
		\item $\{\alpha_A\}$ is the set of axioms that makes 
		$(A, \{\sigma^A_i\}_{\sigma_i \in \Sigma})$ into an algebra of type $(\Sigma, E)$
		\item $\{\alpha_B\}$ is the set of axioms that makes 
		$(B, \{\sigma^B_i\}_{\sigma_i \in \Sigma})$ into an algebra of type $(\Sigma, E)$
		\item $\{\alpha_C\}$ is the set of axioms that makes 
		$(C, \{\sigma^C_i\}_{\sigma_i \in \Sigma})$ into an algebra of type $(\Sigma, E)$
		\item $\{\alpha_Q\}$ is the set of axioms that makes 
		$(Q, \otimes, k, \bigvee)$ into a quantale
		\item $\alpha_{1_B}$ is the axiom $\forall b,b'. 1_B(b,b') = \bigvee\{k| b = b'\}$
		
		\item $\{\alpha_{R_{AB}}\}$ is the set of all the axioms, one for every $\sigma_i \in\Sigma$ of ariety $n$, of the form
		\begin{gather*}
		\forall_{a_1, \dots, a_n, b_1 \dots, b_n} \bigvee\Bigg\{ \\ 
		R_{AB}(\sigma^A_i(a_1, \dots, a_n),\sigma^B_i(b_1,\dots,b_n)), 
		\bigotimes_{j=1}^n R(a_j,b_j) \Bigg\} =\\ 
		=R_{AB}(\sigma^A_i(a_1, \dots, a_n),\sigma^B_i(b_1,\dots,b_n))
		\end{gather*}
		(Note that in this setting to use the quantale order relation we have to write explicitly what it is. 
		The axiom above is nothing but the algebra preservation axiom for $R_{AB}$ written explicitly using the algebraic lattice structure)
		\item $\{\alpha_{R_{BC}}\}$ is the set of all the axioms, one for every $\sigma_i \in \Sigma$ of ariety $n$,of the form
		\begin{gather*}
		\forall_{b_1 \dots, b_n, c_1, \dots, c_n} \bigvee\Bigg\{ \\
		R_{BC}(\sigma^B_i(b_1, \dots, b_n),\sigma^C_i(c_1,\dots,c_n)), 
		\bigotimes_{j=1}^n R(b_j,c_j) \Bigg\} =\\
		=R_{BC}(\sigma^B_i(b_1, \dots, b_n),\sigma^C_i(c_1,\dots,c_n))
		\end{gather*}
		\item Finally, $\alpha_\textit{comp}$ is the axiom
		\begin{equation*}
			\forall a,c. R_{AC}(a,c) = \bigvee\left\{ R_{AB}(a,b) \otimes R_{BC}(b,c) | b \in B \right\}
		\end{equation*}
	\end{itemize}
	An interpretation of this theory in the topos $\mathcal{E}$ then consists of three morphisms in $\mathcal{E}$
	\begin{equation*}
		R_{AB}:A \times B \to Q \quad R_{BC}:B \times C \to Q \quad R_{AC}:A \times C \to Q
	\end{equation*}
	Where the sets of axioms $\{\alpha_A\}, \{\alpha_B\}, \{\alpha_C\}$ get interpreted 
	into commutative diagrams ensuring that $A,B,C$ are internal algebras of signature $(\Sigma, E)$, respectively, 
	while $\{\alpha_{R_{AB}}\}, \{\alpha_{R_{BC}}\}$ guarantee that $R_{AB}$ and $R_{BC}$ 
	respect the usual algebraic condition. $\{\alpha_Q\}$ gets interpreted into diagrams 
	ensuring that $Q$ is an internal quantale and 
	$\alpha_\textit{comp}$ guarantees that $R_{AC}$ is exactly the composition of 
	relations $R_{AB}, R_{BC}$ in $\relvtq{\mathcal{E}}{(\Sigma, E)}{Q}$.
\end{proof}

\logicalcommutesrel*
\begin{proof}
	Start noting that the graph functor is identity on objects, so trivially $L^*(A)_\circ = L^*A = (L^*A)_\circ$ for every object $A$. For a morphism $f:A \to B$, in $\mathcal{E}$, consider the diagram:
	\begin{equation*}
	\begin{tikzpicture}[scale=0.4, node distance=1.5cm]
	\node (t0) {$LA \times LB$};
	\node[right of=t0, node distance=2.2cm] (t1) {$L(A \times B)$};
	\node[right of=t1, node distance=3cm] (t2) {$L(B \times B)$};
	\node[right of=t2, node distance=2.4cm] (t3) {$L(Q)$};
	\node[below of=t1] (b1) {$LA \times LB$};
	\node[below of=t2] (b2) {$LB \times LB$};	
	\draw (t0) to node[above]{iso} (t1);
	\draw[double distance=1pt] (t0) to (b1);
	\draw[->] (t1) to node[above]{$L(f \times 1_B)$} (t2);
	\draw[->] (b1) to node[below]{$Lf \times L 1_B$} (b2);
	\draw[->] (t2) to node[above]{$L(\text{id}^\textbf{Rel}_B)$} (t3);
	\draw[->] (b2) to node[below right]{$\text{id}^\textbf{Rel}_{LB}$} (t3);
	\draw (t1) to node[right]{iso} (b1);
	\draw (t2) to node[left]{iso} (b2);
	\end{tikzpicture}
	\end{equation*}
	Where $\text{id}^\textbf{Rel}_B$ is the morphism of $\mathcal{E}$ that defines $1_B$ in $\relvtq{\mathcal{E}}{\Sigma, E}{Q}$.
	The top row of the diagram is just $L^*(f)_\circ$, while the bottom one is $(L^*f)_\circ$. The left triangle commutes trivially, the center square commutes because $L$ preserves products, the right triangle commutes because $L$ preserves relational identities (previous proposition). Preservation of the converse follows trivially from the fact that any logical functor preserves products.
\end{proof}

\logicalfunctorinducedspan*
\begin{proof}
	Here the same considerations used to prove theorem~\ref{thm:logicalfunctorinducedrel} hold. 
	Given a signature $(\Sigma, E)$, the logical theory we use is
	\[
	T = (X, A, B, Q,\{\sigma^A_i\}_{\sigma_i \in \Sigma}, \{\sigma^B_i\}_{\sigma_i \in \Sigma}, f,g,\chi, \otimes, \leq, k),
	\]
	where
	\begin{itemize}
	\item For a given $\sigma_i \in \Sigma$ of ariety $n_i$, 
	\begin{itemize}
		\item $\sigma^A_i$ is a constant of type $A^{A^{n_i}}$
		\item $\sigma^B_i$ is a constant of type $B^{B^{n_i}}$
		\item $\sigma^C_i$ is a constant of type $C^{C^{n_i}}$
	\end{itemize}
	\item $f,g,\chi$ are constants of type $X^A, X^B, X^Q"$, respectively
	\item $\otimes$ is a constant of type $Q^{Q \times Q}$
	\item $\leq$ is a constant of type $\Omega^{Q \times Q}$
	\item $k$ is a constant of type $Q$
	\end{itemize}
	We require this theory to satisfy the set of axioms $\{ \{\alpha_A\}, \{\alpha_B\}, \{\alpha_Q\}, \{\alpha_{X}\}\}$, where:
	\begin{itemize}
		\item $\{\alpha_A\}$ is the set of axioms that makes 
		$(A, \{\sigma^A_i\}_{\sigma_i \in \Sigma})$ into an algebra of type $(\Sigma, E)$
	\item $\{\alpha_B\}$ is the set of axioms that makes 
	$(B, \{\sigma^B_i\}_{\sigma_i \in \Sigma})$ into an algebra of type $(\Sigma, E)$
	\item $\{\alpha_C\}$ is the set of axioms that makes 
	$(C, \{\sigma^C_i\}_{\sigma_i \in \Sigma})$ into an algebra of type $(\Sigma, E)$
	\item $\{\alpha_Q\}$ is the set of axioms that makes 
	$(Q, \otimes, k, \leq)$ into a partially ordered monoid
	\item $\{\alpha_{X}\}$ is the set of axioms, one for every 
	$\sigma \in\Sigma$ of ariety $n$, of the form
		\begin{gather*}
		\forall_{x_1, \dots, x_n}\exists x. f(x) = \sigma^A_i (f(x_1), \dots, f(x_n) \wedge \\
		\wedge \left( g(x) = \sigma^B_i (g(x_1), \dots, g(x_n) \right) \wedge \\
		\wedge \bigotimes_{j= 1}^n \chi(x_j) \leq \chi(x)
		\end{gather*}
	\end{itemize}
	A model of $T$ in $\mathcal{E}$ is just a span that respects the algebraic structure and we know that $L$ preserves this condition. 
	$L^*$ then agrees with $L$ on objects and is defined as $(LX, Lf,Lg, L\chi)$ on the morphism $(X, f,g, \chi)$.
	For composition and identity we do not need to invoke any logical theory: The identity span is of the form $(X,1_X,1_X,\chi_k)$, 
	and the span part is clearly preserved because functors preserve identities in general. 
	The quantale part $\chi_k$ is the morphism $A \to 1 \to Q$ where the final arrow sends the terminal object to the quantale unit. 
	Again, being $L$ logical this is trivially preserved. 
	For composition, note that the span part is composed via pullbacks and $L$ preserves limits. 
	For the quantale part we have, supposing $(Z, h,k, \zeta)$ to be the composite of $(X, f,g, \chi)$ and $(Y, f',g',\upsilon)$,
	\begin{equation*}
	\begin{tikzpicture}[scale=0.4, node distance=1.5cm, ->]
	\node (t0) {$LZ$};
	\node[right of=t0, node distance=2.8cm] (t1) {$L(X \times Y)$};
	\node[right of=t1, node distance=3cm] (t2) {$L(Q \times Q)$};
	\node[right of=t2, node distance=2cm] (t3) {$L(Q)$};
	\node[below of=t1] (b1) {$LX \times LY$};
	\node[below of=t2] (b2) {$LQ \times LQ$};	
	\draw (t0) to node[above]{$L\!\!<\!\!p_1,p_2\!\!>$} (t1);
	\draw (t0) to node[below left]{$<\!\!Lp_1,Lp_2\!\!>$} (b1);
	\draw (t1) to node[above]{$L(\chi \times \upsilon)$} (t2);
	\draw (b1) to node[below]{$L\chi \times L\upsilon$} (b2);
	\draw (t2) to node[above]{$L(\otimes)$} (t3);
	\draw (b2) to node[below]{$\otimes$} (t3);
	\draw (t1) to node[right]{iso} (b1);
	\draw (t2) to node[left]{iso} (b2);
	\end{tikzpicture}
	\end{equation*}
	Where $p_1, p_2$ are the pullback projections.
	The top row is the image of $\zeta$ through $L$. 
	The triangle on the left and the square on the center commute because $L$ preserves limits, 
	while the triangle on the right commutes because every partially ordered monoid is obviously a model of a theory, 
	so the multiplication of $Q$ gets carried in the multiplication of $LQ$.
\end{proof}

\thebox*
\begin{proof}
	Here the notation $A \simeq B$ will denote that $A$ and $B$ are isomorphic.
	$i^*$ trivially commutes with $h^*$, since the first is identity on morphisms 
	and the second is identity on objects; for the very same reason, $i^*$ commutes with $V$.
	$V$ commutes with $h^*$ because the former acts by postcomposition with a homomorphism of quantales, that commutes with joins
	and orders. 
	
	To show that $L^*i^* \simeq i^*L^*$, note that for morphisms this is trivial, being $i^*$ the identity on them. 
	Let then $\langle\langle{A, \sigma^A_j}\rangle\rangle$ be an object of, say, 
	$\relvtq{\mathcal{E}}{(\Sigma_2, E_2)}{Q}$, and consider 
	$L^*i^*\langle\langle{A, \sigma^A_j}\rangle\rangle$. By definition this is equal to 
	$L^*\langle\langle{A, i(\sigma^A)_{j'}}\rangle\rangle$, 
	where every $i(\sigma^A)_{j'}$ is a term derived from the $\sigma^A_j$, 
	so a composition of $\sigma^A_j$ (and eventually diagonals and projections, 
	depending on the interpretation). Being $L$ logical, operations of $A$ 
	get carried into operations of $LA$, hence 
	$L^*\langle\langle{A, i(\sigma^A)_{j'}}\rangle\rangle = \langle\langle{LA, Li(\sigma^A)_{j'}}\rangle\rangle$ 
	is an algebra of type $(\Sigma_1, E_1)$ in $\relvtq{\mathcal{F}}{(\Sigma_1, E_1)}{LQ}$.
	But, being $i(\sigma^A)_{j'}$ a composition of operations, projections and diagonals, 
	and being $L$ product preserving, it is $Li(\sigma^A)_{j'} \simeq i(\sigma^{LA})_{j'}$. Hence
	\begin{align*}
		L^*i^*\langle\langle{A, \sigma^A_j}\rangle\rangle &= \langle\langle{LA, Li(\sigma^A)_{j'} }\rangle\rangle\\
		&\simeq \langle\langle{LA, i(\sigma^{LA})_{j'}}\rangle\rangle \\
		&= i^*L^*\langle\langle{A, \sigma^A_j}\rangle\rangle
	\end{align*}
	The proof is the same when $L^*$ and $i^*$ act on spans.

	To prove that $L^*V \simeq V^*L^*$, consider the following logical theory:
	\begin{equation*}
		T = (X, A, B, Q,\{\sigma^A_i\}_{\sigma_i \in \Sigma}, \{\sigma^B_i\}_{\sigma_i \in \Sigma}, f,g, \chi, \otimes, \vee, k, R)
	\end{equation*}	
	Where:
	\begin{itemize}
	\item $(X, A, B, Q,\{\sigma^A_i\}_{\sigma_i \in \Sigma}, \{\sigma^B_i\}_{\sigma_i \in \Sigma}, f,g, \chi, \otimes, \vee, k)$ 
	is the fragment that states that
	$(Q, \otimes, k, \bigvee)$ is a quantale, 
	that $\langle\langle A,\{\sigma^A_i\}_{\sigma_i \in \Sigma} \rangle\rangle$ and 
	$\langle\langle B, \{\sigma^B_i\}_{\sigma_i \in \Sigma}\rangle\rangle$ are algebras 
	of the required signature and that $(X,f,g,\chi)$ is an algebraic preserving span over $Q$, 
	with all the obvious axioms required to hold 
	(see proof of theorems~\ref{thm:logicalfunctorinducedrel} and~\ref{thm:logicalfunctorinducedspan} for details)
	\item $R$ is a constant of type $Q^{A \times B}$ together with the axioms 
	that say it is an algebraic preserving relation over $Q$ 
	(again refer to the relational case in theorem~\ref{thm:logicalfunctorinducedrel})
	\item The additional axiom
	\begin{equation*}
		R(a,b) = \bigvee \left\{\chi(m)| \exists m. (s(m)=a \wedge t(m) = b)\right\}
	\end{equation*}
	is satisfied.
	\end{itemize}
	This logical theory expresses exactly the fact that $R$ is a relation coming from a span in the sense of the order functor, 
	so if $R = V(X,s,t,\chi)$ then $R$ and $(X,s,t,\chi)$ are a model for $T$. 
	From this we get an isomorphism between $LR$ and $V(LX,Lf,Lg, L\chi)$, and hence
	\begin{equation*}
		L^*V(X,f,g,\chi) = L^*R \simeq LR \simeq V(LX,Lf,Lg, L\chi)
	\end{equation*}
	
	Finally, to verify that $L^*h^* \simeq h^*L^*$, just note that it is possible to state what a quantale homomorphism is in terms of logical theories. This guarantees that if $h:Q_1 \to Q_2$ is a homomorphism of quantales, so is $Lh$. Everything then follows from the fact that $h^*$ acts by postcomposition and $L$ respects it.
\end{proof}

\end{document}